\documentclass[a4paper,12pt]{article}
\usepackage[bbgreekl]{mathbbol}
\usepackage{mathrsfs}
\usepackage{graphicx}
\usepackage{amsmath}
\usepackage{amsfonts}
\usepackage{mathrsfs}
\usepackage{indentfirst}
\usepackage{amsthm}
\usepackage{amssymb}
\usepackage{xcolor}
\usepackage{geometry}
\usepackage{url}
\usepackage{setspace}
\usepackage{verbatim}
\usepackage{ulem}
\usepackage{multirow}
\usepackage{booktabs}
\usepackage[page,toc,titletoc,title]{appendix}
\usepackage{bm}
\usepackage{cases}
\usepackage{subfigure}
\usepackage{subcaption}
\usepackage{amsmath}
\usepackage[noend]{algpseudocode}
\usepackage{algorithmicx,algorithm}
\usepackage{chngcntr}
\usepackage{ragged2e}
\usepackage{slashed}
\usepackage{caption}
\usepackage{upgreek}
\usepackage{yhmath}
\usepackage{bbm}
\usepackage{enumerate}
\usepackage{cancel}
\usepackage{hyperref}

\newtheorem{theorem}{Theorem}[section]
\newtheorem{lemma}{Lemma}[section]
\newtheorem{remark}{Remark}[section]

\renewcommand{\theequation}{\arabic{section}.\arabic{equation}}

\renewcommand{\thelemma}{\arabic{section}.\arabic{lemma}}

\makeatletter
\renewcommand{\@thesubfigure}{\hskip\subfiglabelskip}
\makeatother

\newcommand{\R}{\mathbb{R}}
\newcommand{\C}{\mathbb{C}}
\newcommand{\Cc}{\mathscr{C}}
\newcommand{\Z}{\mathbb{Z}}

\newcommand\dd{~{\rm d}\hskip 0.03cm}

\newcommand\ham{\mathcal{H}}
\newcommand\hH{\hat{\mathcal{H}}}

\newcommand{\im}{{\rm i}}

\newcommand{\dist}{\mathbb{d}}

\newcommand{\resol}{\mathscr{R}}
\newcommand{\Dres}{\delta_\gamma}
\newcommand{\bb}{\pmb{b}}

\newcommand{\Tr}{\mathrm{Tr}}
\newcommand\aTr{\underline{\rm Tr}}

\newcommand{\F}{\mathcal{F}}
\newcommand{\FT}{\mathscr{F}}
\newcommand{\unfold}{\mathcal{T}}
\newcommand{\Sc}{\mathscr{L}}

\newcommand\X{\mathcal{X}}

\newcommand\df{\mathfrak{d}}
\newcommand\setg{\Lambda_{\zeta,\delta}}
\newcommand\bG{{\bf G}}
\newcommand\RL{\mathcal{R}}
\newcommand\bzero{{\bf 0}}
\newcommand\DD{\mathcal{D}}

\newcommand{\lodoss}[3]{\big[g\big(#1\big)\big]_{#2,#3}}

\newcommand{\ucut}{{U_\mathrm{c}}}

\newcommand{\Fk}{K}
\newcommand{\DFkq}{J_{Q,Q'}}
\newcommand{\Fj}{\bar{\jmath}}

\title{Relaxation of Incommensurate Structures via Quantum Models\thanks{
This work was funded by the National Key R \& D Program of China under grant 2025YFA1016600. 
}
}

\author{Mengfan Tu\thanks{{\it tumengfan@mail.bnu.edu.cn}.
School of Mathematical Sciences, Beijing Normal University, Beijing 100875, China
},~~
Huajie Chen\thanks{{\it chen.huajie@bnu.edu.cn}. 
School of Mathematical Sciences, Beijing Normal University, Beijing 100875, China
},~~
Daniel Massatt\thanks{{\it daniel.massatt@njit.edu}.
Department of Mathematical Sciences, New Jersey Institute of Technology, Newark, NJ 07102, USA
}
}\date{}

\begin{document}
\maketitle

\begin{abstract}
Accurately modeling structural relaxation in incommensurate systems is intrinsically challenging due to the absence of global translational symmetry. 
In this work, we develop a variational quantum framework for structural relaxation in incommensurate Schr\"{o}dinger models, where displacement fields are formulated on the configuration space and the electronic Hamiltonian is represented in reciprocal space. 
This yields well-defined relaxed energy, local density of states, and forces through thermodynamic limits.
We propose an anisotropic scattering-channel approximation, and prove exponential convergence of the approximate equilibria. Numerical experiments are performed to support the analysis and show that the model captures domain-wall formation and its impact on the electronic spectrum.
\end{abstract}

\section{Introduction}
\label{sec:intro}

Low-dimensional materials often exhibit emergent physical phenomena due to their strong sensitivity to geometry and external perturbations. 
A prominent example is the discovery of superconductivity in twisted bilayer graphene (TBG) near the magic angle, which has stimulated extensive interest in twisted incommensurate systems \cite{becker2022mathematics, becker2022fine, cao2018correlated, tarnopolsky2019origin, watson2021existence, xu2013graphene}. 
In such systems, lattice mismatch or relative rotation between layers generates long-wavelength moir\'{e} patterns, leading to pronounced spatial modulation of interlayer stacking configurations. 
These moir\'{e} structures play a central role in determining both electronic and mechanical properties \cite{dean2013hofstadter, geim2013vdw}.
To minimize interlayer interaction energy, atomic positions undergo structural relaxation, which further reshapes the moir\'{e} pattern. 
This relaxation effect has been shown to significantly influence the resulting low-energy electronic band structure \cite{carr2018relaxation,dai2016twisted,dean2013hofstadter,fang2019abinitio,geim2013vdw,hott2024allencahn,nam2017lattice,yoo2019reconstruction}. 
It is therefore crucial to develop reliable and accurate models from a theoretical and computational perspective, that can provide a careful treatment of the interplay between geometric relaxation and electronic structure.
However, accurate modeling remains challenging, primarily due to the intrinsic incommensurability, which prevents the direct application of conventional periodic approaches.

The standard approach in electronic structure claculation is the supercell method, which approximates an incommensurate structure by a commensurate periodic system. 
This method becomes computationally prohibitive at small twist angles, where the moir\'{e} period grows rapidly \cite{koda2016coincidence, komsa2013electronic, loh2015graphene,kong2025interact, massatt2023electronic}. 
To address this limitation, continuum models such as the Bistritzer–MacDonald (BM) model provide an effective description at the moir\'{e} scale, capturing key features including flat electronic bands at significantly reduced computational cost \cite{bistritzer2010transport, bistritzer2011moire, watson2023bm}. 
However, these models typically assume rigid lattices and ideal periodicity, and therefore do not account for accurate structural relaxation.
More recently, theoretical frameworks based on configuration space have been developed to treat electronic properties directly on incommensurate geometries, yielding well-defined observables and convergent numerical schemes \cite{massatt2017electronic, wang2025convergence}. 
At the density functional level,  \cite{cances2026kohnsham} established a rigorous Kohn--Sham/reduced Hartree--Fock framework for encapsulated periodic and quasiperiodic two-dimensional materials, showing that electrode-induced screening yields well-posed ground state models for systems including incommensurate moir\'{e} bilayers.
Together, these results provide a rigorous foundation for electronic-structure modeling on genuinely incommensurate geometries.

Nevertheless, when structural relaxation is included, existing approaches commonly decouple mechanical and electronic degrees of freedom. 
These approaches describe the lattice deformation by continuum elasticity with stacking-dependent interlayer energies, leading to reconstruction and domain formation, while electronic properties are subsequently computed on the relaxed structures using tight-binding or related models \cite{fang2019abinitio, kong2025interact, massatt2023electronic}.
In such treatments, the lattice relaxation and electronic modeling are still treated as separate steps, and a consistent framework that determines both relaxed structures and electronic properties within a single description for incommensurate systems remains lacking. 
In particular, there is a notable absence of rigorous mathematical analysis for Schr\"{o}dinger-type operators or density functional methods in this context, leaving the fundamental interplay between variational energy minimization and non-periodic spectral theory largely unexplored.

The goal of this work is to develop a tractable Schrödinger-type relaxation model for incommensurate systems, together with a reciprocal-space algorithm and convergence theory for its numerical solution. 
Motivated by the variational structure of ab initio electronic-structure models, we consider a continuum Hamiltonian whose layer potentials are deformed by the lattice relaxation field. 
The total energy is defined as a spectral functional of this Hamiltonian, in the spirit of Kohn--Sham density functional theory, and the relaxation force is obtained as its variational derivative with respect to the displacement variables. 
This gives a unified framework in which relaxed moir\'e geometries, domain-wall formation, and the associated spectral or band-structure modulation can be treated consistently. 
Moreover, the same variational formulation provides a natural route to phonon-type response calculations by linearizing the force around relaxed configurations.

For the numerical solution, we introduce a scattering-channel discretization adapted to the incommensurate setting and prove its convergence under suitable stability assumptions. 
The numerical experiments validate the convergence theory and show that the model captures the expected moir\'e relaxation behavior and its spectral consequences.

\vskip 0.2cm

\noindent
{\bf Outline.}
The rest of this paper is organized as follows. 
In Section \ref{sec:preliminary}, we formulate the displacement field on the configuration space and derive the reciprocal-space representation of the Hamiltonian on deformed geometries. 
Based on this formulation, we define the relaxed energy, the forces on the deformed geometry, and the variational problem for the equilibrium configuration. 
In Section \ref{sec:planewave}, we propose a  scattering channel method for the variational problem and establish convergence rates with respect to the discretization parameters.
In Section \ref{sec:numerics}, we present numerical experiments to support the analysis and illustrate the physical capabilities of the model. 
Concluding remarks are given in Section \ref{sec:conclude} and detailed proofs are provided in the appendices.

\section{Relaxation of the incommensurate systems}
\label{sec:preliminary}
\setcounter{equation}{0}

This section develops the mathematical framework for the structural relaxation of incommensurate systems. 
We first introduce a compact configuration space to parameterize the infinitely extended displacement fields, establishing a rigorous geometric foundation for lattice deformations. 
Based on this, we define the corresponding Hamiltonian associated with the relaxed displacements and formulate the variational problem to determine the ground state.

\subsection{Reference and displacement}
\label{sec:reference}

Consider two parallel periodic atomic layers in $\R^d$ ($d=1,2$). 
While realistic bilayer systems are separated by a constant out-of-plane distance, this vertical separation acts as a simple tensor product component that does not alter the fundamental analytical properties of the in-plane relaxation. 
For clarity and notational brevity, we restrict our geometric formulation to the in-plane dimension $\R^d$, and restore the out-of-plane degree of freedom in the numerical experiments (see Section \ref{sec:numerics}).

We begin by specifying the rigid reference configuration. Let $A_j \in \R^{d \times d}$ ($j = 1,2$) be non-singular matrices. The geometries of the two periodic layers are characterized by the Bravais lattices $\RL_j = A_j \mathbb{Z}^d$. Let
\begin{equation*}
\RL^\ast_j := 2\pi A_j^{-T}\mathbb{Z}^d, 
\qquad 
\Gamma_j := A_j [0,1)^d, 
\qquad 
\Gamma^\ast_j := 2\pi A_j^{-T}[0,1)^d,
\end{equation*}
where $\RL^\ast_j$ denotes the reciprocal lattice, and $\Gamma_j$, $\Gamma^\ast_j$ represent the corresponding unit cells in real and reciprocal spaces, respectively. 
We denote by $C_{\rm per}(\Gamma_j)$ the space of continuous functions on $\R^d$ that are $\RL_j$-periodic. 

While the individual lattices $\RL_j$ possess discrete translational symmetry, the stacked bilayer system $\RL_1 \cup \RL_2$ generally lacks global periodicity.
We call two lattices $\RL_1$ and $\RL_2$ incommensurate if they share no common translational symmetry, satisfying
\begin{equation*}
\RL_1 \cup \RL_2 + \tau = \RL_1 \cup \RL_2 \quad\Leftrightarrow\quad \tau = \bzero \in \R^d .
\end{equation*}
Conversely, the heterostructure is commensurate if the common translation periods contain $d$ linearly independent vectors. Equivalently, the combined structure admits a periodic supercell.

To account for structural relaxation, we introduce the discrete displacement vectors $U_j : \RL_j \to \R^d$. 
The deformed configuration of the system is given by $y_j : \RL_j \to \R^d~(j=1,2)$ with
\begin{eqnarray}
\label{configuration}
y_j(\ell) = \ell + U_j(\ell) \qquad{\rm for}~\ell\in\RL_j .
\end{eqnarray}
For incommensurate systems, the absence of global translational invariance dictates that the displacement fields $U_j$ cannot be restricted to a finite periodic supercell. 
This intrinsically involves an infinite number of independent degrees of freedom, rendering direct atomistic simulations computationally intractable in real space.

To bypass this infinite dimensionality, we exploit the physical nature of the interlayer interactions, which depend strictly on the local atomic disregistry.
We utilize the ergodicity of the incommensurate interface \cite{cances2017generalized} to map the real-space discrete displacements onto a compact continuous space (on the tori corresponding to the layer unit cells).
For any layer $j \in \{1,2\}$, let $\Fj = 3-j$ denote the index of the opposing layer. 
Each lattice site $\ell \in \RL_j$ is uniquely associated with a relative shift inside the unit cell of the opposing layer $\Gamma_{\Fj}$, through the registry mapping $\bb_j : \RL_j \to \Gamma_{\Fj}$ defined by
\begin{equation*}
\bb_j(\ell) := \ell \bmod \RL_{\Fj} \in \Gamma_{\Fj} .
\end{equation*}
The strict incommensurability ensures that this mapping is injective, effectively uniquely identifying each lattice site $\ell$ with its local structural environment $\bb_j(\ell)$ (see Figure \ref{fig:real_config}). 
This allows us to lift the discrete displacements $U_j(\ell)$ to continuous, periodic functions $u_j : \Gamma_{\Fj} \to \R^d$, such that
\begin{equation}
\label{dispalcement:u}
U_j(\ell) = u_j\big(\bb_j(\ell)\big) \qquad \forall~ \ell \in \RL_j.
\end{equation}
\begin{figure}[htbp!]
\centering
\includegraphics[width=0.95\textwidth]{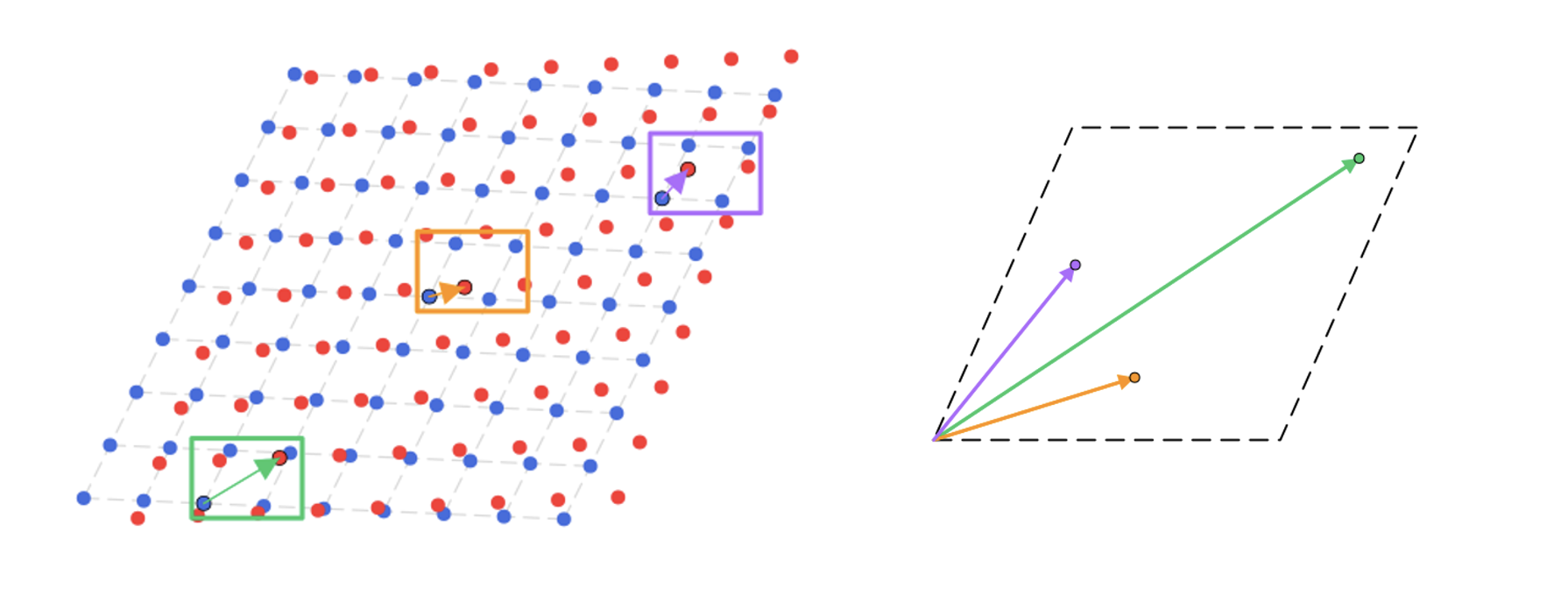}
\caption{Left: A real-space twisted bilayer lattice, where isolated dots and dashed grids represent layers 1 and 2, respectively. 
Right: Illustration of the registry mapping. The vectors $\bb$ in the compact configuration space parameterize the local relative shifts of the first lattice with respect to the second.}
\label{fig:real_config}
\end{figure}

By periodically extending the domain of $u_j$ from $\Gamma_{\Fj}$ to $\R^d$, the displacement fields of the incommensurate system are strictly parameterized within the function space
\begin{align}
\label{space:X}
\X = (\X_1, \X_2) \qquad{\rm with}\quad
\X_j =  \Big\{ u_j \in C_{\rm per}(\Gamma_{\Fj},\R^d) :~ \int_{\Gamma_{\Fj}}  u_j(x) \dd x= 0  \Big\}.
\end{align}
The zero-mean condition in $\X_j$ is introduced to eliminate uniform global translations. 
Since a constant shift of all lattice points preserves the internal geometry and total energy, this restriction ensures a unique internal deformation.
Furthermore, to reflect a physical system dominated by local interlayer coupling rather than long-range interactions, we define a subspace $\mathcal{X}^\gamma \subset \mathcal{X}$ consisting of smooth displacement fields, whose Fourier coefficients exhibit typical exponential decay
\begin{equation}
\label{set:X_gamma}
\mathcal{X}^\gamma := \Big\{ (u_1, u_2) \in \mathcal{X} : \; u_j = \sum_{Q \in \RL^\ast_{\Fj}\setminus\{0\} } \hat{u}_Q e^{\im Q \cdot x}, \text{ and } |\hat{u}_Q| \le C e^{-\gamma |Q|} \text{ for } j=1,2 \Big\}.
\end{equation}
The exclusion of the zero mode ($Q \neq 0$) in the Fourier expansion is the direct equivalent of the zero-mean condition, explicitly ensuring that uniform global translations are removed from the representation.
This configuration space formulation provides an exact and computable descriptor for the incommensurate geometry.

\subsection{Hamiltonian of the deformed systems}
\label{sec:hamiltonian}

We will establish a Schr\"{o}dinger-type model for the incommensurate system with the displacement field $u = (u_1, u_2) \in \mathcal{X}^\gamma$. In this work we focus on the simplified case where the electronic Hamiltonian is linear for a given mechanical relaxation pattern as opposed to the non-linearity in electron density in the DFT framework. 
Specifically, for each layer $j \in \{1, 2\}$, we define the effective potential $V_j[u_j] : \R^d \to \R$ as a superposition of localized, spherically symmetric site potentials
\begin{equation}
\label{V:potential}
V_j[u_j](x) = \sum_{\ell \in \RL_j} v_j\big(x-\ell-u_j(\ell)\big),
\end{equation}
where $v_j:\R^d\rightarrow\R~(j=1,2)$ are sufficiently smooth potentials with exponential decay both in real space and the Fourier domain
\begin{equation}
\label{decay:v}
|v_j(x)| \le C e^{-a |x|},~|\hat v_j(\xi)| \le C e^{-\alpha |\xi|}
\quad{\rm with}\quad
\hat v_j(\xi) := \int_{\R^{d}} v_j(x) e^{-\im \xi \cdot x} \dd x 
\end{equation}
and positive constants $C, \alpha$. 
The associated Schr\"{o}dinger-type operator for the incommensurate bilayer system is then given by
\begin{equation}
\label{H}
\ham[u] := -\frac{1}{2}\Delta + V_1[u_1] + V_2[u_2].
\end{equation}

In the absence of relaxation (i.e. $u = 0$), the $\RL_j$-periodicity of the potentials $V_j$ enables the treatment of the incommensurate structure via a higher-dimensional projection framework \cite{wang2025convergence}. 
Conversely, for the relaxed configurations where $u \in \mathcal{X}^\gamma$, the potential $V_j$ lacks lattice periodicity, as the displacement field is modulated over the configuration space $\Gamma_{\Fj}$. This discrepancy between the lattice symmetry and the displacement field renders the system intractable within traditional real-space or plane-wave computational frameworks.

To rigorously restore a periodic analysis framework, we lift the relaxed problem into an extended superspace $\R^d \times \R^d$. 
By exploiting the $\Gamma_{\Fj}$-periodicity of $u_j$, we construct an auxiliary potential $W_j[u_j] : \R^d \times \R^d \to \R$ defined as
\begin{equation}
\label{W:auxiliary}
W_j[u_j](x_j,x_{\Fj}) := \sum_{\ell \in \RL_j} v_j\big(x_j-\ell-u_j(x_{\Fj}-x_j+\ell)\big) .
\end{equation}
This construction captures the quasi-periodicity of the relaxed system. 
The following lemma ensures the existence of the auxiliary potential $W_j$ as a continuous periodic lift. Its proof is deferred to \ref{sec:proof:lemma:potential}.

\begin{lemma} 
[Properties of the auxiliary potential]
\label{lemma:potential}
For $j=1,2$ and $u \in \X$, the auxiliary potentials $W_j$ satisfy the following properties:
\begin{enumerate}
    \item[(1)] 
    $W_j[u_j](x_j,x_j) = V_j[u_j](x_j)$;
    \item[(2)] 
    $W_j[u_j](x_j,x_{\Fj})$ is periodic with respect to $\Gamma_j \times \Gamma_{\Fj}$;
    \item[(3)] 
    $W_j[u_j](x_j,x_{\Fj})$ is continuous on $\R^d\times\R^d$.
\end{enumerate}
\end{lemma}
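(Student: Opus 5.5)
The plan is to verify the three properties directly from the definition \eqref{W:auxiliary}, using two facts: $u_j$ is $\RL_{\Fj}$-periodic, and the site potential $v_j$ decays exponentially by \eqref{decay:v}. A convention point comes first. The displacement $u_j$ is defined on $\Gamma_{\Fj}$ and then extended periodically. By \eqref{dispalcement:u}, the quantity $u_j(\ell)$ appearing in \eqref{V:potential} means $u_j(\bb_j(\ell)) = u_j(\ell \bmod \RL_{\Fj})$, and by $\RL_{\Fj}$-periodicity this equals $u_j$ evaluated at $\ell$ itself. With this reading, property (1) is immediate: setting $x_{\Fj}=x_j$ in \eqref{W:auxiliary} gives the argument $x_{\Fj}-x_j+\ell=\ell$, so each summand is $v_j(x_j-\ell-u_j(\ell))$, which is exactly \eqref{V:potential}.

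For property (2), I treat the two directions separately.

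\emph{Periodicity in $x_{\Fj}$.} Shift $x_{\Fj}\mapsto x_{\Fj}+R'$ with $R'\in\RL_{\Fj}$. Then $u_j(x_{\Fj}+R'-x_j+\ell)=u_j(x_{\Fj}-x_j+\ell)$ by periodicity of $u_j$, so each summand is unchanged.

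\emph{Periodicity in $x_j$.} Shift $x_j\mapsto x_j+R$ with $R\in\RL_j$. The summand becomes $v_j\big(x_j-(\ell-R)-u_j(x_{\Fj}-x_j+(\ell-R))\big)$. Reindexing $\ell'=\ell-R$, which is a bijection of $\RL_j$, returns the original sum. This rearrangement is legitimate because the series converges absolutely, which is established in the next step.

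For property (3), continuity, I would first note that $u_j$ is continuous and periodic, hence bounded with $\|u_j\|_\infty=:M<\infty$. The decay bound then gives
\[
|v_j(x_j-\ell-u_j(\cdot))|\le C e^{-a|x_j-\ell-u_j(\cdot)|}\le C e^{aM}e^{-a|x_j-\ell|}.
\]
On any compact set of $x_j$ values, $\sum_{\ell\in\RL_j}e^{-a|x_j-\ell|}$ is bounded by a convergent lattice sum, uniformly in $x_j$. By the Weierstrass M-test the series converges absolutely and uniformly on compact subsets of $\R^d\times\R^d$. Each summand is continuous, being a composition of the continuous $v_j$ with the continuous map $(x_j,x_{\Fj})\mapsto x_j-\ell-u_j(x_{\Fj}-x_j+\ell)$. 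A locally uniform limit of continuous functions is continuous, which proves (3). The same absolute convergence justifies the reindexing used in (2).

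The main obstacle is the bookkeeping in (1), namely interpreting $u_j(\ell)$ in \eqref{V:potential} consistently with the periodic lift \eqref{dispalcement:u}. Once that convention is fixed, the uniform domination in (3) is the only analytic step, and it is routine because $u_j$ is bounded.
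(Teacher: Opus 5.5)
Your proof is correct and follows essentially the same route as the paper: direct substitution for (1), periodicity of $u_j$ together with the reindexing $\ell \mapsto \ell - R$ for (2), and a Weierstrass M-test for (3) based on the boundedness of $u_j$ and the exponential decay of $v_j$. The only differences are cosmetic. You prove locally uniform convergence on compact subsets, using the termwise bound $e^{-a|x_j-\ell|}\le e^{a\rho}e^{-a|\ell|}$ for $|x_j|\le\rho$, so no periodic-extension step is needed, and your absolute-convergence remark justifies the reindexing in (2). The paper instead proves uniform convergence on the closed fundamental domain and then extends continuity by periodicity.
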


By virtue of the joint periodicity and continuity established above, the auxiliary potentials $W_j$ admit a well-defined Fourier series expansion. 
To simplify the notation in the extended superspace, we uniformly assign the joint coordinates $\mathbf{x} = (x_1, x_2) \in \R^d \times \R^d$. 
The joint interaction space is given by $\Gamma_1 \times \Gamma_2$, with its associated reciprocal lattice being $\RL_1^\ast \times \RL_2^\ast$. 
For $j = 1, 2$, the expansion takes the uniform form
\begin{align}
\label{hat:W}
W_j[u_j](x_1, x_2) & = \sum_{G_1 \in \RL^*_1} \sum_{G_2 \in \RL^*_2} \hat{W}_j[u_j](G_1, G_2) e^{\im (G_1 \cdot x_1 + G_2 \cdot x_2)} \qquad \text{with} \nonumber 
\\[1ex]
\hat{W}_j[u_j](G_1, G_2) & = \frac{1}{|\Gamma_1||\Gamma_2|} \int_{\Gamma_1} \int_{\Gamma_2} W_j[u_j](x_1, x_2) e^{-\im (G_1 \cdot x_1 + G_2 \cdot x_2)} \dd x_2 \dd x_1. 
\end{align}

With the Fourier components of the auxiliary potentials established, we now transform the real-space Hamiltonian $\ham[u]$ into reciprocal space. 
Since the incommensurate scattering essentially couples plane-wave states with momenta differing by vectors $G_1 + G_2$ for $(G_1, G_2) \in \RL_1^* \times \RL_2^*$, it is natural to introduce a generalized Bloch wavevector $\xi \in \R^d$. 
We then define the shifted reciprocal-space Hamiltonian $\hH[u](\xi)$, acting on the sequence space $\C(\RL_1^* \times \RL_2^*)$, with its matrix elements given by
\begin{equation} 
\label{Hhat:xi}
\hH[u](\xi)_{\bG,\bG'} = \frac{1}{2}|G_1+G_2+\xi|^2\delta_{G_1G_1'}\delta_{G_2G_2'}+\sum_{j=1,2}\hat{W}_j[u_j](G_1-G_1',G_2-G_2') ,
\end{equation}
where $\bG=(G_1,G_2)$ and $\bG'=(G_1',G_2') \in \RL^*_1\times\RL^*_2$. 
Here, the continuous parameter $\xi$ serves as a generalized quasimomentum. In the incommensurate limit, the set of all possible combinations $G_1 + G_2$ densely fills $\R^d$, and $\xi$ effectively parameterizes the continuous spectrum of the underlying quasi-periodic system.

\begin{remark}
[Intertwining relation between real and reciprocal spaces] 
To rigorously elucidate the connection between $\ham[u]$ in \eqref{H} and $\hH[u](\xi)$ in \eqref{Hhat:xi}, we define the unfolding operator (or, generalized Bloch transform) $\mathcal{T}_\xi$ for $\xi \in \R^d$ on the Schwartz space. 
For $\psi \in \Sc(\R^d)$, the operator $\mathcal{T}_\xi : \Sc(\R^d) \to \C(\RL_1^* \times \RL_2^*)$ is defined by evaluating the Fourier transform $\hat{\psi} = \int_{\R^d} e^{-i\xi\cdot x}\psi(x) \dd x$ on the shifted reciprocal lattice
\begin{equation*}
(\mathcal{T}_\xi \psi)(\bG) := \hat{\psi}(\xi + G_1 + G_2) \qquad \forall~ \bG = (G_1, G_2) \in \RL_1^* \times \RL_2^* .
\end{equation*}
A direct calculation (see \ref{sec:proof:connection}) demonstrates that $\mathcal{T}_\xi$ elegantly intertwines the actions of the real-space and reciprocal-space Hamiltonians via the relation
\begin{equation} 
\label{connection}
\mathcal{T}_\xi \big( \ham[u]\psi \big) = \hH[u](\xi) \big( \mathcal{T}_\xi \psi \big) \qquad \forall ~\psi \in \Sc(\R^d). 
\end{equation}
\end{remark}

\subsection{The variational problem}
\label{sec:energy:force}

The physical observables needed to obtain mechanical relaxation of the incommensurate system are determined by the density of states (DoS) of the Hamiltonian operator $\ham[u]$. 
More precisely, macroscopic observables are given by the ``trace per unit volume" of $g(\ham[u])$, where $g$ belonging to the Schwartz space is related to the observables under consideration.
In this work, to ensure the rigorous convergence of the operator trace in the thermodynamic limit, we consider test functions possessing stronger regularity. 
For positive constants $\zeta > 0$ and $\delta > 0$, we consider the space $\setg$ consisting of functions that admit an analytic continuation into a strip around the real axis with exponential decay
\begin{multline}
\label{deftestfunctiong}
\qquad
\setg := \Big\{ g \in \Sc(\R) : \; g \text{ admits an analytic continuation to } S_\delta,
\\
\text{ and } |g(z)| \le C e^{-\zeta |\mathrm{Re} z|} \;\; \forall~z \in S_\delta \Big\}
\qquad\qquad
\end{multline}
with $S_\delta := \{z \in \mathbb{C}, |\mathrm{Im} z| \le \delta\}$ and  $\Sc(\R)$ the Schwartz space, the set of all rapidly decreasing smooth functions.

A prototypical observable is the system energy, inherently related to the Fermi-Dirac distribution $f_{FD}(\lambda) = \lambda \big(1 + e^{\beta(\lambda-e_{\rm F})}\big)^{-1}$ for a given chemical potential $e_{\rm F} \in \R$ and inverse temperature $\beta > 0$. 
Although this function does not naturally decay as $\lambda \to -\infty$, the Hamiltonian $\ham[u]$ is bounded from below. 
This lower-boundedness allows us to smoothly truncate $f_{FD}$ below the spectral lower bound, such that the modified function satisfies the requisite exponential decay without altering the evaluation of the operator trace.

To define the energy of the incommensurate system, we introduce a functional based on the ``averaged trace'' of $g(H)$, in accordance with the standard framework for extended systems over $\R^d$. 
Formally, this functional is constructed by restricting the operator to a bounded domain $B_R$ and taking the asymptotic limit of the trace per unit volume as $R\rightarrow\infty$. 
To facilitate both the analytical and computational treatments of this limit, we employ a smooth partition of unity $\{\chi_j\}_{j\in \Z^d}$ on $\R^d$. Specifically, let $\chi \in C^{\infty}(\R^d)$ be a reference cut-off function satisfying
\begin{equation*}
\label{partitionofunity}
\chi \geq 0, \qquad
\chi(x) = \left\{  
\begin{array}{lr}  
1,  \quad \|x\|_\infty \leq\frac{1}{3}
\\[1ex]  
0,  \quad \|x\|_\infty >1
\end{array}
\right. 
\qquad {\rm and} \qquad
\sum_{j\in \Z^d}\chi(x-j) \equiv 1 .
\end{equation*} 
With the shifted profiles defined by $\chi_j(x):=\chi(x-j)$, the regularized averaged trace on $B_R$ for any $g\in\setg$ and a given $R>0$ is expressed via the following localized sum:
\begin{equation}
\label{def:dosR}
\aTr_R\Big(g\big(\ham[u]\big)\Big) := \frac{1}{|B_R|}\sum_{j,k\in \Z^d\cap B_R}{\rm Tr}\Big(\chi_j g\big(\ham[u]\big) \chi_k\Big) . 
\end{equation}

The following theorem establishes the convergence of the thermodynamic limit for the averaged trace \eqref{def:dosR} and provides an explicit representation in terms of the local density of states in reciprocal space. 
The proof is deferred to \ref{sec:proof:theorem:Dos}.

\begin{theorem}
[Thermodynamic limit of the DoS]
\label{theo:TDL}
Let $u\in\X^{\gamma}$ and $g \in \setg$, then the thermodynamic limit 
\begin{align}
\label{energy:R}
E[u]:=\lim_{R\rightarrow\infty} \aTr_R\big(g(\ham[u])\big)
\end{align}
exists, and the ground state energy satisfies
\begin{align}
\label{energy}
E[u] = \int_{\R^d} \bigg[g\big(\hH[u](\xi)\big)\bigg]_{\bzero,\bzero} \,\mathrm{d}\xi ,
\end{align}
where $[\cdot]_{\bzero, \bzero}$ denotes the diagonal matrix element at the reciprocal lattice vector $\bG = \bzero$.
\end{theorem}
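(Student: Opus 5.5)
The plan is to rewrite each local trace $\mathrm{Tr}(\chi_j g(\ham[u])\chi_k)$ in reciprocal space using the intertwining relation \eqref{connection}. Then the volume average becomes an ergodic average over the shifts $G_1+G_2$, and that average converges to the integral over $\xi$ in \eqref{energy}.

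\textbf{Step 1: Helffer--Sj\"ostrand representation and intertwining.} Since $g\in\setg$ is analytic on the strip $S_\delta$ with exponential decay, write
\[
g(\ham[u]) = \frac{1}{2\pi\im}\oint_{\mathscr{C}} g(z)\,\big(z-\ham[u]\big)^{-1}\dd z ,
\]
where $\mathscr{C}$ is the boundary of $S_{\delta/2}$ (two horizontal lines). The Hamiltonian is self-adjoint and bounded below, so the resolvents are bounded by $2/\delta$ along $\mathscr{C}$, and the decay of $g$ makes the integral converge in operator norm. For $\psi\in\Sc(\R^d)$, \eqref{connection} gives $\mathcal{T}_\xi (z-\ham[u])^{-1}\psi = (z-\hH[u](\xi))^{-1}\mathcal{T}_\xi\psi$; I plan to justify this via a density argument on $H^2$. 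Hence $\mathcal{T}_\xi g(\ham[u])\psi = g(\hH[u](\xi))\mathcal{T}_\xi\psi$.

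\textbf{Step 2: Kernel formula.} Parametrize $\R^d$ by $\xi$ ranging over a fundamental domain shifted by $G_1+G_2$. This requires care, because $\{G_1+G_2\}$ is dense, so there is no fundamental domain; that is precisely why the trace is normalized at $\bG=\bzero$ and integrated over all of $\R^d$. Inverting the Fourier transform, the kernel of $g(\ham[u])$ becomes
\[
g(\ham[u])(x,y) = \frac{1}{(2\pi)^d}\int_{\R^d}\sum_{\bG} \big[g(\hH[u](\xi))\big]_{\bG,\bzero}\, e^{\im(\xi+G_1+G_2)\cdot x}e^{-\im\xi\cdot y}\dd\xi .
\]
I would verify this by testing against $\psi$ and applying Step 1 at $\bG'=\bzero$ after the substitution $\xi\to\xi-G_1'-G_2'$. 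The diagonal of the kernel is then
\[
\rho(x)=(2\pi)^{-d}\sum_{\bG} c_{\bG}\,e^{\im(G_1+G_2)\cdot x}, \qquad c_{\bG}=\int_{\R^d}\big[g(\hH[u](\xi))\big]_{\bG,\bzero}\dd\xi .
\]
Since $\sum_k\chi_k\equiv1$ and $\chi_j\chi_k$ vanishes unless $|j-k|\le 2$, the averaged trace \eqref{def:dosR} equals $|B_R|^{-1}\int_{B_R}\rho$ up to boundary terms of order $R^{-1}$.

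\textbf{Step 3: Combes--Thomas decay (the main obstacle).} I need $|[(z-\hH[u](\xi))^{-1}]_{\bG,\bzero}|\le C e^{-c(|G_1|+|G_2|)}$, uniformly in $\xi$ and $z\in\mathscr{C}$. I also need decay in $|\xi|$ so that $c_{\bG}$ converges absolutely. The decay in $\bG$ follows if $\hat W_j[u_j](\bG)$ decays exponentially in $|G_1|+|G_2|$. I would prove this from \eqref{W:auxiliary}. The $x_j$-Fourier variable is controlled by $\hat v_j$, and the $x_{\Fj}$ variable enters only through $u_j$. Expanding $e^{-\im\xi\cdot u_j}$ and using the analyticity of $u_j$ from $\X^\gamma$ together with $|\hat v_j(\xi)|\le Ce^{-\alpha|\xi|}$ gives bounds of the form $e^{-c|G_{\Fj}|}$. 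The delicate point is that mixing the two exponents forces $c$ to be small, depending on $\gamma$, $\alpha$ and $\|u\|$.

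With this bound, a conjugation by $e^{\eta(|G_1|+|G_2|)}$ gives a perturbation that is small in the $\ell^\infty\to\ell^\infty$ (Schur) norm, yielding the Combes--Thomas estimate. Decay in $\xi$ comes from the kinetic term, since $|\xi+G_1+G_2|^2\to\infty$ when $|\bG|$ is bounded, combined with the decay of $g$ in the real part. I expect to deform the contour so that it lies far from the spectrum near the diagonal entry for large $|\xi|$.

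\textbf{Step 4: Ergodic limit.} Incommensurability gives $G_1+G_2=0$ only when $\bG=\bzero$. Hence $|B_R|^{-1}\int_{B_R}e^{\im(G_1+G_2)\cdot x}\dd x\to\delta_{\bG,\bzero}$. The absolutely summable weights $c_{\bG}$ from Step 3 allow dominated convergence, so the limit equals $(2\pi)^{-d}c_{\bzero}$, which is \eqref{energy} up to the Fourier normalization convention.
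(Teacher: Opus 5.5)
Your proposal is correct in outline, but it closes the argument by a different mechanism than the paper. You use \eqref{connection} to write the integral kernel of $g(\ham[u])$ explicitly. The density $\rho(x)=(2\pi)^{-d}\sum_{\bG}c_{\bG}e^{\im(G_1+G_2)\cdot x}$ is then an absolutely convergent almost-periodic series. The thermodynamic limit becomes its Bohr mean, which keeps only $c_{\bzero}$ because $\bG\mapsto G_1+G_2$ is injective.

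The paper never writes down the real-space kernel. After Lemma~\ref{lemma:decay of W} and the Combes--Thomas bound \eqref{decay:resol}, it works with the finite truncations $\hH[u]^{\Omega_R}(\xi)$. It proves convergence of their matrix functions \eqref{locality:reciprocal_R} and the joint decay \eqref{decay-xi-G1-G2}. It then delegates the identification of $\lim_{R\to\infty}\aTr_R$ with the $\xi$-integral to Appendix~B.2 of \cite{wang2025convergence}.

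The analytic inputs are therefore the same: your Step~3 is Lemma~\ref{lemma:decay of W} together with \eqref{decay:resol} and \eqref{decay-xi-G1-G2}. Your route buys a self-contained, transparent final step, and also the right constant. With $\hat\psi(\xi)=\int e^{-\im\xi\cdot x}\psi(x)\dd x$, the case $v_1=v_2=0$ gives $(2\pi)^{-d}\int g(|\xi|^2/2)\dd\xi$. So your factor $(2\pi)^{-d}$ is genuine, and \eqref{energy} suppresses it. The paper's route buys truncation estimates that are reused later, e.g.\ in Lemmas~\ref{lemma:decay:Fk} and~\ref{lemma:LDos}.

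Three steps in your sketch need a different justification from the one you announce.

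\emph{Resolvent intertwining.} $\mathcal{T}_\xi\psi\notin\ell^2(\RL_1^*\times\RL_2^*)$, because every ball contains $G_1+G_2$ for infinitely many $\bG$. Also $\mathcal{T}_\xi$ is not defined on $H^2$. So a density argument in $H^2$ cannot produce the intertwining. Instead:
\begin{enumerate}
\item note that $\phi=(z-\ham[u])^{-1}\psi$ is again a Schwartz function, since the potentials are smooth with bounded derivatives;
\item apply \eqref{connection} to $\phi$;
\item use uniqueness of bounded solutions of $(z-\hH[u](\xi))a=\mathcal{T}_\xi\psi$, which follows from invertibility of $z-\hH[u](\xi)$ on an exponentially weighted $\ell^2$ space (Combes--Thomas again).
\end{enumerate}

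\emph{Decay in $\xi$.} This cannot come from moving the contour for the full operator. By the covariance $\hH[u](\xi+G_1+G_2)\cong\hH[u](\xi)$ and the density of $\{G_1+G_2\}$, the spectrum of $\hH[u](\xi)$ does not depend on $\xi$. You need the localization the paper uses: restrict to $|\bG|\lesssim|\xi|$, where Gershgorin places the spectrum above $c|\xi|^2-C$, and control the truncation error by Combes--Thomas. This gives \eqref{decay-xi-G1-G2}.

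\emph{Injectivity in Step~4.} Step~4 really uses $\RL_1^*\cap\RL_2^*=\{0\}$, the reciprocal-space form of incommensurability. For $d=2$ this is not implied by the real-space definition in Section~\ref{sec:reference}, but the paper relies on it tacitly as well.
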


Note that the integrand $\big[g(\hH[u](\xi))\big]_{\bzero,\bzero}$ in \eqref{energy} gives the local density of states (LDoS) in reciprocal space.
Based on the energy functional $E[u]$ characterized in Theorem~\ref{theo:TDL}, the determination of structural relaxation relies on computing its variation with respect to $u$. 
However, a direct differentiation of $E[u]$ yields highly complex expressions that are computationally intractable. While this gradient admits a formal representation via resolvents and contour integration, the formulation remains heavily implicit. 
The next theorem resolves this complexity by establishing a ``Hellmann-Feynman" like formula that avoids the need for explicit eigenvector derivatives, yielding a highly tractable form for the forces. The proof is provided in \ref{sec:proof:force}.

\begin{theorem}
[Tractable force representation]
\label{theorem:force}
Let $u\in\X^{\gamma}$, $g \in \setg$ and the energy $E[u]$ be given by \eqref{energy}. 
Let $\nabla_u := \{ \partial_{\hat u_{Q}} \}_{k\in (\RL^*_{1} \cup \RL^*_{2})\setminus \{0\} }$ denote the derivatives with respect to the displacement field in the configuration space,
then
\begin{align}
\label{force}
\nabla_{u} E[u] = \int_{\R^d} \bigg[ g'\Big(\hH[u](\xi)\Big) ~ \Big(\nabla_{u} \hH[u](\xi)\Big) \bigg]_{\bzero,\bzero} \dd \xi.
\end{align}
\end{theorem}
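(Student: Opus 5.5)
The plan is to differentiate the representation \eqref{energy} of Theorem~\ref{theo:TDL} under the $\xi$-integral. The key is a Helffer--Sj\"ostrand (or contour) representation of $g(\hH[u](\xi))$, which the analyticity of $g$ on $S_\delta$ makes available. Since $\hH[u](\xi)$ is self-adjoint and bounded below, I will write
\begin{equation*}
g\big(\hH[u](\xi)\big) = \frac{1}{2\pi\im}\oint_{\Cc} g(z)\,\big(z-\hH[u](\xi)\big)^{-1}\dd z,
\end{equation*}
with $\Cc$ the boundary of a half-strip: the lines $\mathrm{Im}\, z=\pm\delta/2$, closed to the left of the spectrum. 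The exponential decay $|g(z)|\le Ce^{-\zeta|\mathrm{Re} z|}$ makes the integral absolutely convergent. The resolvent norm on $\Cc$ is bounded by $2/\delta$ uniformly in $\xi$ and $u$.

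Next I differentiate with respect to a single coefficient $\hat u_Q$. The only $u$-dependence of $\hH[u](\xi)$ is through the off-diagonal entries $\hat W_j[u_j](G_1-G_1',G_2-G_2')$; the kinetic part does not depend on $u$.

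\textbf{Step 1 (perturbation is bounded and localized).} I need $\partial_{\hat u_Q}\hH[u](\xi)$ to be a bounded operator on $\ell^2(\RL_1^*\times\RL_2^*)$ whose entries decay exponentially in $|\bG-\bG'|$. I will compute $\hat W_j$ explicitly from \eqref{W:auxiliary}. Unfolding the lattice sum against the $\Gamma_j$-cell gives
\begin{equation*}
\hat W_j(G_1,G_2)=\frac{1}{|\Gamma_1||\Gamma_2|}\int_{\R^d}\int_{\Gamma_{\Fj}} v_j\big(y-u_j(x_{\Fj}-y)\big)\, e^{-\im(G_j\cdot y+G_{\Fj}\cdot x_{\Fj})}\dd x_{\Fj}\dd y.
\end{equation*}
I will then differentiate in $\hat u_Q$, which produces $-\nabla v_j\cdot e^{\im Q\cdot(\cdot)}$. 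The exponential decay \eqref{decay:v} of $\hat v_j$, together with the analyticity of $u\in\X^\gamma$, should give $|\partial_{\hat u_Q}\hat W_j(\bG)|\le C e^{-c|\bG|}$. This bound implies boundedness on $\ell^2$ by Schur's test.

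\textbf{Step 2 (differentiate and use cyclicity).} The resolvent identity gives
\begin{equation*}
\partial_{\hat u_Q} g(\hH) = \frac{1}{2\pi\im}\oint_{\Cc} g(z)(z-\hH)^{-1}(\partial_{\hat u_Q}\hH)(z-\hH)^{-1}\dd z.
\end{equation*}
Its $[\cdot]_{\bzero,\bzero}$ entry is not by itself $[g'(\hH)\partial\hH]_{\bzero,\bzero}$; the two agree only after integrating in $\xi$. This is the Hellmann--Feynman step, and I expect it to be the main obstacle. The point is that $\int_{\R^d}[\cdot]_{\bzero,\bzero}\dd\xi$ acts as a trace per unit volume, i.e.\ it is cyclic.

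To prove cyclicity, I will use covariance: translating the index by $\bG_0$ shifts $\xi$ by $G_{0,1}+G_{0,2}$, that is, $\hH(\xi)_{\bG+\bG_0,\bG'+\bG_0}=\hH(\xi+G_{0,1}+G_{0,2})_{\bG,\bG'}$. For operators $A(\xi),B(\xi)$ with this covariance and exponentially decaying off-diagonals, one then has
\begin{equation*}
\int[A B]_{\bzero\bzero}\dd\xi=\sum_{\bG}\int A(\xi)_{\bzero\bG}B(\xi)_{\bG\bzero}\dd\xi=\sum_{\bG}\int B(\xi')_{\bzero,-\bG}A(\xi')_{-\bG,\bzero}\dd\xi'=\int [BA]_{\bzero\bzero}\dd\xi .
\end{equation*}
Fubini is justified by the decay; the decay of $A_{\bzero\bG}$ in $|\bG|$ and integrability in $\xi$ come from Combes--Thomas estimates for the resolvent on $\Cc$ and from the kinetic term.

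\textbf{Step 3 (conclude).} Applying cyclicity moves one resolvent across, producing $(z-\hH)^{-2}$. Then $\frac{1}{2\pi\im}\oint g(z)(z-\hH)^{-2}\dd z=g'(\hH)$, which gives \eqref{force}.

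\textbf{Step 4 (exchange of limits).} It remains to justify exchanging $\partial_{\hat u_Q}$ with $\int\dd\xi$ and with $\oint\dd z$. Dominated convergence will do this. The domination comes from the uniform resolvent bounds, the decay of $g$, and the $\xi$-decay of $[\,\cdot\,]_{\bzero\bzero}$, which follows since $g$ decays and the diagonal kinetic entry is $\frac12|\xi|^2$. The same bounds are those already used in the proof of Theorem~\ref{theo:TDL}.
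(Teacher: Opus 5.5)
Your proposal is correct, but it reaches the Hellmann--Feynman identity by a different route from the paper.

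The paper never shows directly that $\int_{\R^d}[\,\cdot\,]_{\bzero,\bzero}\dd\xi$ is cyclic. It proceeds in three steps:
\begin{enumerate}
\item It truncates to finite index sets $\bar\Omega_R$ and identifies $\int_{\R^d}\Fk(\xi)_{\bzero,\bzero}\dd\xi$ with $\lim_{R\to\infty}\frac{|\Gamma_1^*||\Gamma_2^*|}{S_{d,R}}\Tr(\Fk^R)$ in Lemma~\ref{lemma:integral of AB}. This uses covariance only in the diagonal form $\Fk_{\bG,\bG}=\Fk(G_1+G_2)_{\bzero,\bzero}$, together with an ergodic averaging lemma from \cite{wang2025convergence} and the truncation estimates of Lemma~\ref{lemma:decay:Fk}.
\item It applies cyclicity of the ordinary trace of finite matrices.
\item It passes back to the limit.
\end{enumerate}
You instead use the full covariance $\hH(\xi)_{\bG+\bG_0,\bG'+\bG_0}=\hH(\xi+G_{0,1}+G_{0,2})_{\bG,\bG'}$. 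Reindexing the intermediate sum by $\bG\mapsto-\bG$ and translating $\xi$ then proves cyclicity of $\int[\,\cdot\,]_{\bzero,\bzero}\dd\xi$ itself.

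Your route is shorter and more elementary. It needs only absolute summability, with no thermodynamic limit, no bookkeeping of truncation errors over $\bar\Omega_R$, and no equidistribution-type averaging. The paper's route, in exchange, exhibits the force literally as a limit of normalized finite-dimensional traces, mirroring how $E[u]$ is defined through $\aTr_R$. It also produces truncation estimates that are reused in the convergence analysis of Section~\ref{sec:planewave}.

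You should make explicit where the $\xi$-integrability needed for Fubini comes from. With a fixed contour $\Cc$, the Combes--Thomas bounds are merely uniform in $\xi$, not integrable. Write $\hH(\xi)=D_\xi+V$ with $D_\xi$ the kinetic diagonal.
\begin{itemize}
\item One step of $\resol_z(\hH)=\resol_z(D_\xi)+\resol_z(D_\xi)V\resol_z(\hH)$ at row $\bzero$ gives $|\resol_z(\hH(\xi))_{\bzero,\bG}|\le Ce^{-c|\bG|}\,|z-\frac12|\xi|^2|^{-1}$.
\item The mirror identity $\resol_z(\hH)=\resol_z(D_\xi)+\resol_z(\hH)V\resol_z(D_\xi)$ at column $\bzero$ gives the same bound for $(\partial_{\hat u_Q}\hH\,\resol_z(\hH))_{\bG,\bzero}$.
\item Summing the product over $\bG$ and integrating against $|g(z)|$ along $\Cc$ yields $O((1+|\xi|)^{-4})$, which is integrable since $d\le 2$.
\end{itemize}
The paper instead obtains exponential decay in $\xi$ by truncating to $\Omega_{|\xi|/2}$ and moving the contour to $\mathrm{Re}\,z\gtrsim|\xi|^2/8$. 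The same bound, taken uniformly for $u$ near the given point, supplies the domination you need in Step 4.

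Finally, the constant in your Step 1 bound depends on $Q$, because the entries of $\partial_{\hat u_Q}\hH$ are localized around a shift by $Q$. This is harmless for fixed $Q$.
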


The representation \eqref{force} offers substantial computational advantages by expressing the force directly through the derivative $g'$, thereby eliminating the need for explicit eigenvector derivatives or complex contour integrals. 
As a result, evaluating the force incurs a computational cost comparable to that of the energy calculation itself. This identity thus constitutes an analogue of the Hellmann–Feynman theorem tailored for incommensurate systems.

The determination of the physical ground state is then formalized as a variational problem, where the equilibrium configuration of the bilayer system is obtained by solving
\begin{equation}
\label{var:0}
\min_{u\in \X} E[u] .
\end{equation}
For convergence analysis of the numerical approximations, we assume the following regularity and stability conditions on the equilibrium states. 
In the present setting, we are not able to rigorously justify this assumption because the electronic contribution has complicate non-local coupling to the relaxation field.

\begin{flushleft}
{\bf (A)} 
The variational problem \eqref{var:0} has a minimizer $\bar{u}$ such that 
\begin{equation}
\label{regularity_stability}
\bar{u} \in \mathcal{X}^\gamma
\qquad{\rm and}\qquad
\langle \nabla^2 E[\bar{u}] v, v \rangle \ge \vartheta |v|^2 \quad \forall~v \in \X
\end{equation}
with some constants $\gamma>0$ and $\vartheta>0$.
\end{flushleft}

There are two parts in the assumption \eqref{regularity_stability}, which play different roles. 
The condition $\bar u\in\mathcal X^\gamma$ is a regularity assumption on the relaxed displacement field. 
It means that the Fourier coefficients of $\bar u$ decay at a rate determined by $\gamma$, and hence that the equilibrium deformation is smooth across the configuration space.
This is expected when the relaxation is governed mainly by local registry information, and hence distant atoms with similar local configuration exhibit nearly identical local relaxations.

The second condition in \eqref{regularity_stability} is a stong stability condition, which prevents small perturbations of the displacement field from changing the energy only at higher order. 
This condition is not merely technical. 
It is the mechanism that converts consistency estimates for the truncated energy into convergence of the corresponding minimizers. 
Without such a lower bound, the approximate energy could approximate $E$ accurately while its minimizers drift along nearly flat directions. 
Coercivity assumptions of this type are standard in the analysis of stable equilibria, continuum limits, and Cauchy--Born type approximations \cite{carr2018relaxation,chen2019geometry,massatt2023electronic,ortner2013cauchyborn}.

\begin{remark}[Regularity and penalization]
\label{rem:penalized-regularity}
The condition \(\bar u\in \mathcal X^\gamma\) in {\bf (A)} excludes relaxation patterns with substantial high-frequency content.
This restriction may fail in physically relevant regimes, for example, strong interlayer coupling, out-of-plane relaxation, twisted dislocation structures, or proximity to commensurate--incommensurate transitions may produce narrow transition layers or sharp domain walls \cite{carr2018relaxation,massatt2023electronic}. 
These mechanisms generate significant high-frequency components in \(\bar u\), and the Fourier coefficients of \(\bar u\) need not exhibit the exponential decay required by \(\mathcal X^\gamma\). 
In limiting regimes with increasingly sharp transition regions, the decay may be only subexponential, or even algebraic.

A natural way to work within the regularity class assumed here is to introduce a small reciprocal-space penalization. In the product reciprocal lattice
\(\mathcal R_1^\ast \times \mathcal R_2^\ast\),
the combination \(G_1+G_2\) is associated with the physical scattering channel, whereas large values of \(G_1-G_2\) correspond to rapid oscillations of the relaxation field across configuration space. Thus, schematically, one may add to the reciprocal Hamiltonian a regularizing term of the form
\[
\varepsilon |G_1-G_2|^2 \qquad{\rm with}\quad \varepsilon>0 .
\]
This suppresses highly oscillatory displacement modes and is expected to produce a minimizer \(\bar u_\varepsilon\) with improved Fourier decay.
Therefore, the assumption {\bf (A)} can be viewed either as a regularity hypothesis on the physical relaxation field or as the natural regularity class of a mildly penalized model \cite{Li2026Spectral,quan2026scf}. 
A careful study of the peneralized model will be provided in our other work. 
\end{remark}

\section{Scattering-channel approximations}
\label{sec:planewave}
\setcounter{equation}{0}

The minimization problem  \eqref{var:0} is infinite dimensional in two different aspects. 
First, the displacement field $u\in\X$ is a function on the configuration space and therefore contains infinitely many Fourier modes.
Second, for each admissible displacement $u$ and each $\xi \in \R^d$, the reciprocal-space Hamiltonian $\hat \ham [u](\xi)$ is an infinite matrix acting on $\ell^2 \big(\RL_1^* \times \RL_2^*\big)$. 
This section introduces two corresponding numerical approximations: a Fourier truncation of the displacement field and a Scattering-channel truncation of the reciprocal Hamiltonian. 
The resulting approximation is a finite-dimensional variational problem while preserving the reciprocal-space structure of the incommensurate system.

We first discretize the displacement field. Since the displacement is periodic with respect to the configuration variables of each layer, it is natural to expand it in Fourier modes on the corresponding reciprocal lattices. 
For a cutoff $K>0$, we define the finite-dimensional subspace
\begin{align}
\label{var:X:ucut}
\X_{\ucut} := \Big\{ u^\ucut = (u_{1},u_{2})\in\X :~
u_j(x) = \sum_{Q \in \RL_j^*,~ \vert Q \vert \leq \ucut} \hat{u}_{Q} e^{i Q \cdot x} ~{\rm for}~j=1,2 \Big\} .
\end{align}
The parameter $K$ fixes the number of variational degrees of freedom and determines the highest spatial frequency retained in the relaxation field. 

We next truncate the reciprocal-space Hamiltonian. For a fixed displacement $u\in\X$ and $\xi\in\R^d$, the reciprocal-space Hamiltonian $\hH[u](\xi)$ is an infinite matrix indexed by wave-vector pairs in
$\RL_1^*\times\RL_2^*$. 
For $\bG=(G_1,G_2)\in\RL_1^*\times\RL_2^*$, the combination $G_1+G_2$ represents the physical momentum entering the kinetic energy, whereas $G_1-G_2$ measures the relative direction in the product reciprocal lattice. 
Hence, at fixed physical momentum, large $|G_1-G_2|$ corresponds to highly oscillatory scattering channels in configuration space.
Because these two directions are governed by distinct convergence mechanisms, we use the anisotropic cutoff domain as in \cite{wang2025convergence}
\begin{align}
\label{D_WL}
\DD_{W,L}:= \Big\{ \big(G_1,G_2\big) \in\RL_1^* \times \RL_2^* :~ 
\big|G_1+G_2\big|\leq W, ~ \big|G_1-G_2\big|\leq L \Big\} .
\end{align}
Here, the cutoff $W$ suppresses high physical frequencies and the cutoff $L$ controls the incommensurate sampling. 
While single isotropic cutoff for both directions would either waste many degrees of freedom or fail to resolve the more restrictive direction, the asymmetric truncation $\mathcal D_{W,L}$ allows the two sources of error to be balanced separately.
A schematic illustration of this domain in the one-dimensional case is shown in Figure \ref{fig:H_cutoff}.

\begin{figure}[htbp!]
\centering
\includegraphics[width=0.5\textwidth]{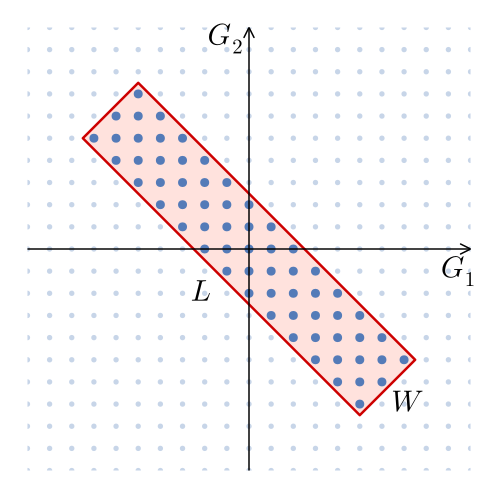}
\caption{The domain $\DD_{W,L}$ in the reciprocal space.}
\label{fig:H_cutoff}
\end{figure}

Restricting $\hH[u](\xi)$ to this finite index set gives the finite matrix
\begin{align}
\label{H:discrete}
\hH[u](\xi)^{\DD_{W,L}} = \big( \hH[u](\xi)_{\bG,\bG'} \big)_{\{\bG,\bG'\} \in \DD_{W,L}}.
\end{align}
Accordingly, the corresponding discrete energy functional is formulated as
\begin{align}
\label{energy:discrete}
E^{W,L}[u] = \int_{\R^{d}} \Big[ g\big(\hat \ham^{\DD_{W,L}} [u](\xi)\big) \Big]_{\bzero,\bzero} \dd \xi.
\end{align}
In this formulation the integral over $\xi$ is kept exact, and a further quadrature discretization can be introduced separately if needed.

Combining the displacement discretization \eqref{var:X:ucut} with the Hamiltonian truncation \eqref{H:discrete}, the continuous variational problem \eqref{var:0} is approximated by the finite-dimensional problem
\begin{equation}
\label{variation:WL_U}
\min_{u\in\X_K} E^{W,L}[u].
\end{equation}

In the numerical implementation, we solve \eqref{variation:WL_U} using the limited-memory Broyden--Fletcher--Goldfarb--Shanno (L-BFGS) quasi-Newton method \cite{nocedal2006numerical} with a backtracking line search.
The required gradient of the energy functional $\nabla_{u}^{W,L}E[u]$ is obtained as the discrete analogue of \eqref{force}. 

The following theorem states that the minimizers of the fully discrete problem \eqref{variation:WL_U} converge to the minimizer of the continuous problem \eqref{var:0} and does not include the optimization error of the iterative solver.
The estimate separates the three approximation errors: the physical scattering-channel cutoff $W$, the incommensurate-direction cutoff $L$, and the displacement cutoff $K$.
The proof of this theorem is presented in \ref{sec:proof:convergence}.

\begin{theorem}
[Convergence of the scattering-channel approximation]
\label{theorem:u:convergence}
Let $g \in \setg$, $\bar{u}\in\X^{\gamma}$ be the solution of \eqref{var:0}. 
If Assumption {\bf (A)} is satisfied, then for parameters $W, L,$ and $\ucut$ sufficiently large, there exists a solution $\bar{u}^{W,L,\ucut}\in\X_{\ucut}$ to \eqref{variation:WL_U}.
Furthermore,  there exist positive constants $C$ and $c$ independent of $W, L, \ucut$, such that 
\begin{align}
\label{u:convergence}
\big\| \bar{u} - \bar{u}^{W,L,\ucut} \big\| \leq \frac{C}{\vartheta} \Big( e^{-c \zeta W} + e^{-c \min{\{\tilde\gamma,\delta\}} L} +e^{-c \min{\{\tilde\gamma,\delta\}} \ucut} \Big) ,
\end{align}
where $\tilde\gamma>0$ is a constant depending on $\gamma$ (as established in Lemma \ref{lemma:decay of W}). 
\end{theorem}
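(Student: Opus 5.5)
We follow the standard consistency-plus-stability argument for approximate minimizers: an inverse function theorem (or quantitative implicit function) argument in a neighbourhood of $\bar u$. The first step is \emph{consistency}. We estimate the gradient of the discrete functional at a good approximation of $\bar u$ in the discrete space. Let $\Pi_\ucut \bar u\in\X_\ucut$ be the Fourier truncation of $\bar u$. Since $\bar u\in\X^\gamma$, $\|\bar u-\Pi_\ucut\bar u\|\le Ce^{-c\gamma \ucut}$. We then bound
\[
\big\|\nabla E^{W,L}[\Pi_\ucut\bar u]\big\|_{\X_\ucut^*}
\le \big\|\nabla E^{W,L}[\Pi_\ucut\bar u]-\nabla E[\Pi_\ucut\bar u]\big\| + \big\|\nabla E[\Pi_\ucut\bar u]-\nabla E[\bar u]\big\|,
\]
using $\nabla E[\bar u]=0$. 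The second term is controlled by a local Lipschitz bound on $\nabla E$, which follows from the force formula of Theorem~\ref{theorem:force} and resolvent representations of $g'$. It contributes $e^{-c\min\{\tilde\gamma,\delta\}\ucut}$.

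The first term is the main obstacle: the truncation error of the reciprocal Hamiltonian. Here we write $g'(\hH)$ via a contour integral $\frac{1}{2\pi\im}\oint_{\mathcal C} g'(z)(z-\hH)^{-1}dz$ along the boundary of the strip $S_\delta$. We compare $[(z-\hH)^{-1}\nabla_u\hH]_{\bzero,\bzero}$ with its restriction to $\DD_{W,L}$ by the resolvent identity. The needed ingredients are the following.
\begin{itemize}
\item[(i)] Off-diagonal decay $|\hat W_j[u_j](\bG)|\le Ce^{-\tilde\gamma|G_1-G_2|}$ together with decay in $G_1+G_2$ from $\hat v_j$ (Lemma~\ref{lemma:decay of W}). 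Together with a Combes--Thomas estimate, these give resolvent matrix entries decaying like $e^{-c\min\{\tilde\gamma,\delta\}|\bG-\bG'|}$ in the $G_1-G_2$ direction, uniformly for $z$ in the strip.
\item[(ii)] In the $G_1+G_2$ direction, the kinetic term $\frac12|G_1+G_2+\xi|^2$ dominates. For large $W$, $g$ is evaluated where the spectrum is pushed above the level $W^2$, and the exponential decay $e^{-\zeta|\mathrm{Re}z|}$ of $g$ gives $e^{-c\zeta W}$. We handle the real part of the contour away from $[\inf\sigma,\Lambda]$ by splitting $g$ into a compactly localized part plus a tail.
\end{itemize}
The $\xi$-integral converges because for $|\xi|$ large the $\bzero$ entry sits at high kinetic energy, where $g'$ decays exponentially, and this happens uniformly in the truncation. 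Combining these yields
\[
\|\nabla E^{W,L}[w]-\nabla E[w]\|\le C\big(e^{-c\zeta W}+e^{-c\min\{\tilde\gamma,\delta\}L}\big)
\]
uniformly for $w$ in a ball around $\bar u$ in $\X^{\tilde\gamma}$.

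The second step is \emph{stability}. Analogous estimates applied to $\nabla^2$, with the Hessian again expressed through resolvents, show that $\nabla^2E^{W,L}[\Pi_\ucut\bar u]$ restricted to $\X_\ucut$ differs from $\nabla^2E[\bar u]$ by $o(1)$. Hence assumption (A) gives coercivity $\ge\vartheta/2$ on $\X_\ucut$ once $W,L,\ucut$ are large. Uniform Lipschitz continuity of $\nabla^2E^{W,L}$ in a ball of radius $r$, independent of the cutoffs, then allows the inverse function theorem (Newton–Kantorovich form). It yields a critical point $\bar u^{W,L,\ucut}\in\X_\ucut$ with $\|\bar u^{W,L,\ucut}-\Pi_\ucut\bar u\|\le \frac{2}{\vartheta}\|\nabla E^{W,L}[\Pi_\ucut\bar u]\|$. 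This critical point is a local minimizer by the positive Hessian, which we take as the solution of \eqref{variation:WL_U}.

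Finally, the triangle inequality with $\|\bar u-\Pi_\ucut\bar u\|$ gives \eqref{u:convergence}.
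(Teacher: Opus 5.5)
Your architecture matches the paper's: the residual of the truncated force at $x_0=\Pi_{\ucut}\bar u$, coercivity of the truncated Hessian at $x_0$ by perturbing {\bf (A)}, an inverse function theorem, and a triangle inequality. Two steps, however, do not hold as you state them.

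First, you take the gradient of $E^{W,L}$ in Hellmann--Feynman form, $[g'(\hH^{\DD_{W,L}})\,\nabla_u\hH^{\DD_{W,L}}]_{\bzero,\bzero}$. For the truncated matrix the actual derivative is
\[
\partial_{\hat u_Q}\Big[g\big(\hH^{\DD_{W,L}}(\xi)\big)\Big]_{\bzero,\bzero}=\frac{1}{2\pi\im}\oint_{\Cc} g(z)\Big[\resol_z\big(\hH^{\DD_{W,L}}(\xi)\big)\,\partial_{\hat u_Q}\hH^{\DD_{W,L}}\,\resol_z\big(\hH^{\DD_{W,L}}(\xi)\big)\Big]_{\bzero,\bzero}\dd z .
\]
Passing to the $g'$ form requires cyclicity, which a single diagonal entry does not have. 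Theorem~\ref{theorem:force} obtains it only after trace averaging (Lemma~\ref{lemma:integral of AB}). That averaging uses the shift covariance $[f(\hH(\xi))]_{\bG,\bG}=[f(\hH(\xi+G_1+G_2))]_{\bzero,\bzero}$ of the untruncated operator, and the $\bzero$-centred cutoff $\DD_{W,L}$ destroys it. So a zero of your surrogate is not a critical point of $E^{W,L}$, and its Jacobian is not $\nabla^2E^{W,L}$. Compare the sandwiched resolvent forms of the truncated and full operators instead, as in \eqref{Fk_R}; your Combes--Thomas and resolvent-identity argument carries over. When you do, note that $\partial_{\hat u_Q}\hH_{\bG,\bG'}$ is concentrated where $(G_1-G_1')-(G_2-G_2')\approx 2Q$. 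The error for mode $Q$ is therefore only $e^{-c\Dres|L-2|Q||}$. An $\ell^2$-summable bound then needs interpolation with the a priori decay $e^{-c\Dres|Q|}$, as in \eqref{decay_WL_Q:F}.

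Second, the cutoff-independent Lipschitz bound on $\nabla^2E^{W,L}$ that you feed into Newton--Kantorovich is not available in the $\ell^2$ norm in which {\bf (A)} supplies $\sigma$. Controlling Hessian differences by Schur's test needs decay of $\hat W_j[u](\bG)-\hat W_j[u'](\bG)$ in $|G_1-G_2|$, i.e.\ control of $u-u'$ on a complex strip. An $L^2$ bound gives only $|\Delta\hat b|\le C|w|\,|u-u'|$, with no decay in $G$. Heuristically, the third variation acts like multiplication by $u-u'$ on configuration space. It is therefore governed by $\|u-u'\|_{L^\infty}$, which on $\X_{\ucut}$ can exceed the $L^2$ norm by a factor of order $\ucut^{d/2}$. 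Neither alternative repairs this. A $\ucut$-dependent constant would tie the smallness condition to $\ucut$, and in an analytic norm {\bf (A)} no longer directly bounds the inverse. This is exactly why the paper interpolates $|\Delta\hat b|\le C|w|\,|u-u'|$ with the uniform decay \eqref{decay:b} for $u,u'\in\X^\gamma$. That yields only the H\"older-$\tfrac12$ estimates of Lemma~\ref{lemma:continuity}. The paper accordingly uses the H\"older inverse function theorem, Lemma~\ref{lem:ift}, whose smallness condition is $C\sigma^{3/2}\eta^{1/2}\le\tfrac12$. With that substitution, your bound $\|x-x_0\|\le 2\eta\sigma$, and hence \eqref{u:convergence}, go through.
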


\begin{remark}
[Convergence dependency]
\label{rem:testg_regularity}
The exponential error estimate in \eqref{u:convergence} is governed by two distinct mathematical mechanisms: the analytic properties of the observable function $g \in \setg$ and the spatial regularity of the structural relaxation $\bar{u}\in\X^{\gamma}$. 
First, the convergence with respect to the cutoff $W$ is controlled by $\zeta$, which characterizes the decay of high physical frequencies. 
Second, the decay rates with respect to the cutoff $L$ and the displacement field cutoff $\ucut$ are controlled by the exponent $\min\{\tilde\gamma, \delta\}$, where $\tilde\gamma$ and $\delta$ depend on the regularities of the equilibrium deformation $\bar{u}$ and the observable function $g$ respectively.
Regardless of these parameter dependencies, the overall estimate \eqref{u:convergence} guarantees an exponential decay of the numerical error with respect to all three truncation cutoffs $W$, $L$, and $\ucut$.
\end{remark}

\section{Numerical experiments}
\label{sec:numerics}
\setcounter{equation}{0}

In this section, we perform numerical experiments for incommensurate relaxation problems.
All simulations are performed on a workstation with an 8-core Intel Core i9 processor and 32 GB RAM, using the open-source {\tt Julia} packages {\tt BilayerDFT.jl} \cite{git:Incommensurate_relax}. 

We consider a bilayer system formed by two periodic atomic chains aligned in the $x$-direction.
The two chains have lattice constants $L_1$ and $L_2$ respectively, and are separated by a vertical distance $\dist$ in the $z$-direction. The two layers are placed at $z_1=-\frac{\dist}{2}$ and $z_2=\frac{\dist}{2}$.
The Hamiltonian is defined on the $(x,z)$-plane and is given by 
\begin{align*}
H = -\frac{1}{2}\Big(\frac{\partial^2}{\partial x^2} + \frac{\partial^2}{\partial z^2}\Big) + \sum_{j\in\{1,2\}} \sum_{\ell \in \RL_j} v_j\big(x-\ell,z-z_j\big)
\end{align*}
where the potential $v_j~(j=1,2)$ are given by the Gaussian functions
\begin{align}
\label{siteP:numerics}
v_j(x,z) = -\frac{1}{2\pi \sigma_{j,x} \sigma_{j,z}} \exp  {\left( -\frac{x^2}{2\sigma_{j,x}^2} -\frac{z^2}{2\sigma_{j,z}^2} \right)} 
\end{align}
with corresponding width parameters $\sigma_{1,x}=\sigma_{1,z}=0.27$ and $\sigma_{2,x}=\sigma_{2,z}=0.3$.
To compute the energy of the system, we take $g$ from a Fermi-Dirac distribution
\begin{align*}
g_{\beta,\mu}(\lambda) = \lambda f_{\rm FD}(\lambda) = \frac{\lambda}{1 + e^{\beta(\lambda - \mu)}},
\end{align*}
where the chemical potential is fixed at $\mu = 5.0$ and $\beta$ denotes the inverse temperature.
The parameter $\beta$ controls the sharpness of the observation function, larger $\beta$ corresponds to lower temperature and a less regular spectral function, while smaller $\beta$ gives a smoother occupation profile.

\vskip 0.2cm
{\bf Regularity in configuration space.}
We first check the regularity of the equilibrium displacement field in the configuration space.
We present the relaxed displacement fields together with the decay of their Fourier coefficients in \ref{fig:u_compare_dz}. 
The relaxation for three different interlayer distances $\dist=2.8,~3.2,~3.6$ are compared in this figure, where we set the lattices constants $L_1=1.5$, $L_2=\sqrt{2}*1.04 \approx 1.47$, the inverse temperature $\beta=1.0$, and the scattering-channel cutoffs $L=40$, $W=10$, and $\ucut=200$. 
The top two rows show the two components of the relaxed displacement field, while the bottom two rows show the magnitudes of their Fourier coefficients.
The plots show that the Fourier coefficients decay exponentially in all three cases, which indicates that the equilibrium displacement is dominated by low-frequency modes. 
This provides numerical evidence that the relaxation field is smooth in the configuration space and is therefore well approximated by a finite Fourier expansion. 

Moreover, we observe that the decay becomes faster as the interlayer distance increases. 
This behavior is consistent with the physical interpretation of the model: smaller layer separation produces stronger interlayer relaxation effects and can introduce sharper variations in the displacement field, whereas larger separation weakens the relaxation interaction and leads to smoother equilibrium configurations. 
These observations support the regularity assumption used in the convergence analysis, at least in the parameter regime tested here.

\begin{figure}[htbp!]
\centering
\includegraphics[width=1.0\textwidth]{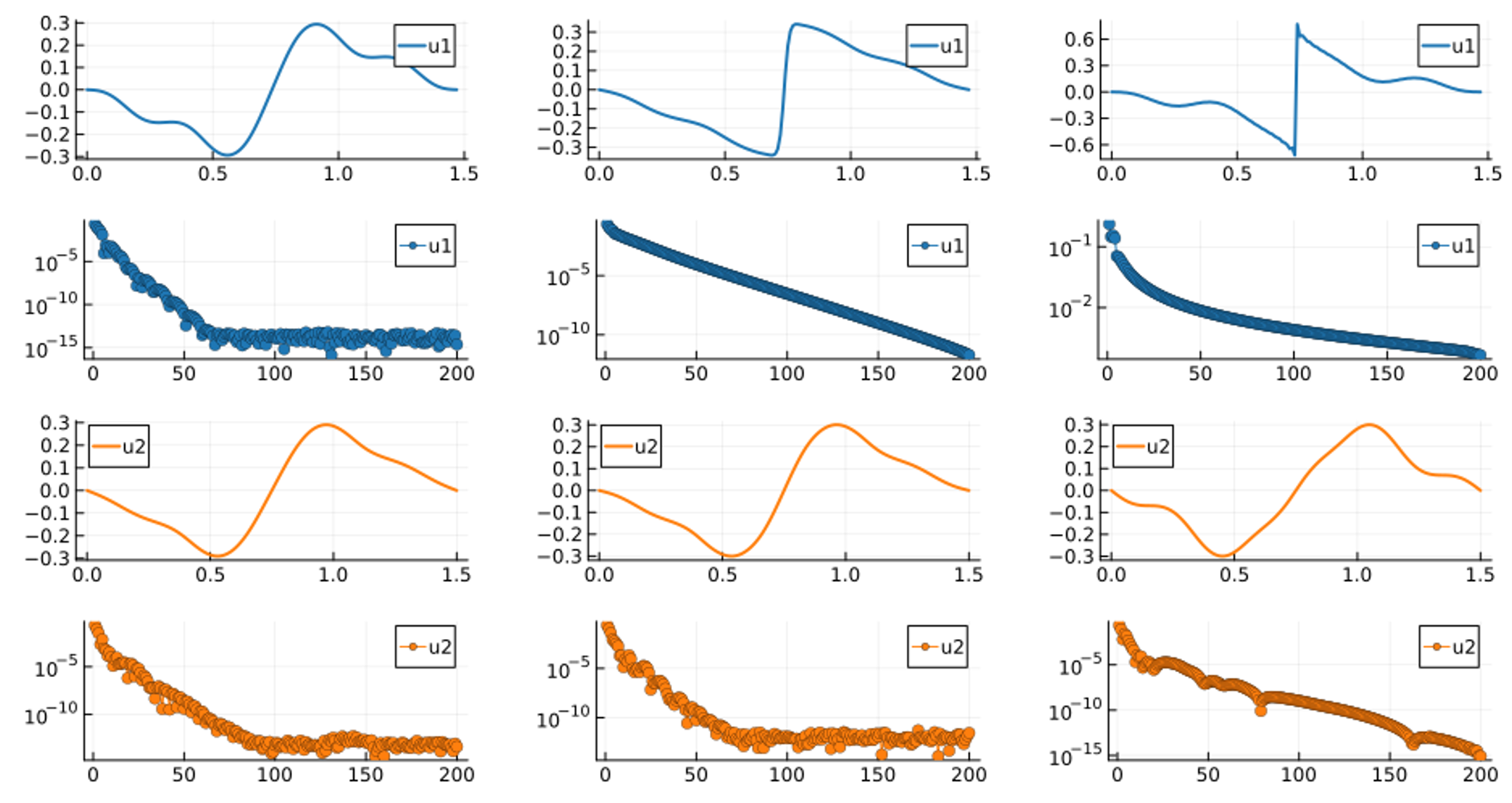}
\put(-380,-10){\makebox(0,0){(c) $\dist=3.6$}}
\put(-230,-10){\makebox(0,0){(b) $\dist=3.2$}}
\put(-80,-10){\makebox(0,0){(a) $\dist=2.8$}}
\caption{Displacements $u_j$ ($j=1,2$) of equilibrium under different $\dist$ (by using scattering-channel cutoffs $L=40$, $W=10$ and $\ucut=200$). The first and third rows display the displacements in configuration space, while the second and fourth rows present their corresponding Fourier coefficients.}
\label{fig:u_compare_dz}
\end{figure}

\vskip 0.2cm
{\bf Convergence w.r.t. scattering-channel cutoffs.}
We then present in Figure~\ref{fig:converge} the convergence of the numerical approximations with respect to the scattering-channel cutoffs $W$, $L$, and $\ucut$.
We measure both the errors in the energy and the $L^2$-error of the displacement field, where the reference solutions are obtained by using sufficiently large cutoffs.

The convergence with respect to $W$ reflects the decay of high physical scattering-channel frequencies. 
This part is primarily controlled by the decay of the observable function $g_{\beta,\mu}$. 
In the Fermi-Dirac case, decreasing $\beta$ leads to a slower decay of $g_{\beta,\mu}$, and the numerical results show a correspondingly slower convergence with respect to $W$. 
This agrees with the theoretical estimate, where the $W$-dependent error is governed by the decay parameter of the observable function.

The convergence with respect to $L$ and $\ucut$ reflects a different mechanism. 
The cutoff $L$ controls the incommensurate direction in the product reciprocal lattice, while $\ucut$ controls the Fourier resolution of the displacement field. 
Their convergence is therefore more closely related to the regularity of the relaxed displacement field than to the high-energy decay of $g_{\beta,\mu}$. 
The numerical results show exponential decay for both errors, and the observed convergence rates remain largely stable across the tested values of $\beta$. 
This is consistent with Theorem \ref{theorem:u:convergence}, where the $L$- and $\ucut$-dependent terms are governed by the regularity exponent of the equilibrium displacement.

\begin{figure}[htbp!]
\centering
\includegraphics[width=0.3\textwidth]{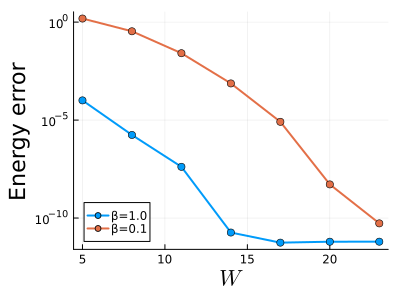}
\includegraphics[width=0.3\textwidth]{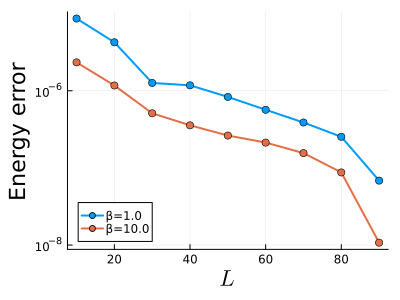}
\includegraphics[width=0.3\textwidth]{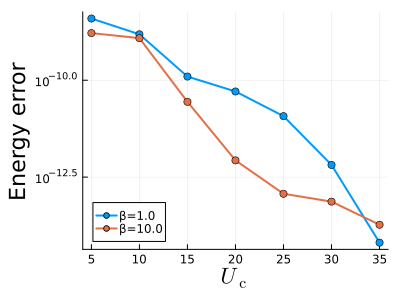}\\
\includegraphics[width=0.3\textwidth]{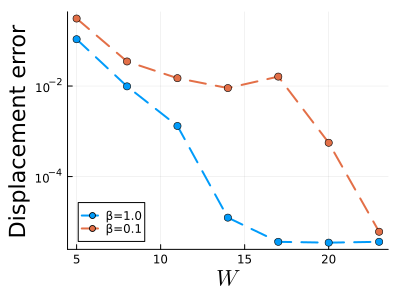}
\includegraphics[width=0.3\textwidth]{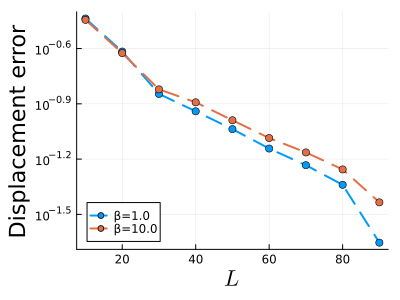}
\includegraphics[width=0.3\textwidth]{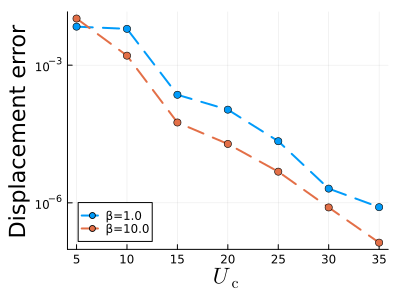}
\caption{Convergence with respect to the truncation parameters $W$, $L$, and $\ucut$.}
\label{fig:converge}
\end{figure}

\vskip 0.2cm
{\bf Simulations of the domain walls.}
We shall then use our model to simulate some nontrivial relaxation patterns, in particular the formation of domain walls in incommensurate bilayers. 
To visualize the relaxed geometry in real space, we plot the relaxed inter-layer shift rather than the displacement field itself. 
For sites $\ell\in\mathcal R_1$, we define the relaxed inter-layer shift by
\begin{align*}
b_1(\ell) + u_1(b_1(\ell)) - u_2(-b_1(\ell)) \qquad{\rm for}~ \ell \in \RL_1.
\end{align*}
This quantity measures the local relative registry between the two layers after relaxation. 
It combines the original local configuration $b_1(\ell)$ with the relaxation of both layers and therefore provides a direct way to identify domains and transition regions in the relaxed structure. 
An analogous expression can be defined for sites in $\RL_2$, which contains the same geometric information and is therefore omitted.

We first study how the relaxed pattern depends on the lattice mismatch in Figure \ref{fig:mismatch_compare}.
The lattice constants and corresponding mismatch parameters are shown in Table \ref{tab:lattice_mismatch}.
After relaxation, the systems exhibit distinct moir\'{e} patterns, where the relaxed shift forms extended plateau regions separated by narrow transition layers. 
The plateau regions correspond to locally preferred stacking configurations, whereas the transition layers represent domain walls between different local registries.
As the mismatch ratio decreases, we observe that the moir\'{e} length scale increases, as each domain occupies a larger spatial region and the domain walls become more widely separated. 
This trend is consistent with the geometric origin of the moir\'{e} pattern, where the moir\'{e} period is inversely related to the lattice mismatch.

\begin{table}[htbp]
    \centering
    \begin{tabular}{ccc}
        \hline
        ~~$L_1$~~ & ~~$L_2$~~ & ~~$|L_1-L_2|/L_1$~~ \\
        \hline
        1.5 & $\sqrt{2}$ & $\approx 0.06$ \\
        \hline
        1.5 & $\sqrt{2}*1.04$ & $\approx 0.02$ \\
        \hline
        1.5 & $\sqrt{2}*1.05$ & $\approx 0.01$ \\
        \hline
    \end{tabular}
    \caption{Lattice constants and approximate mismatch parameters.}
    \label{tab:lattice_mismatch}
\end{table}

We next study the dependence of the relaxed structure on the interlayer coupling strength in Figure~\ref{fig:diff_dz_domain}. 
The interlayer distance $\dist$ controls the strength of the interaction between the two layers and hence the width and sharpness of the domain walls.
We observe from the figure that decreasing $\dist$ makes the transition layers sharper. 
For smaller interlayer distances, the relaxed shift changes over fewer atomic sites, indicating that the equilibrium displacement contains stronger high-frequency components. 
This is the regime in which the regularity assumption on the displacement field becomes more delicate.
By contrast, when the interlayer distance is larger, the relaxed shift varies more smoothly across the configuration space and the resulting domain walls are broader and more regular. 

\begin{figure}[htbp!]
\centering
\includegraphics[width=1.0\textwidth]{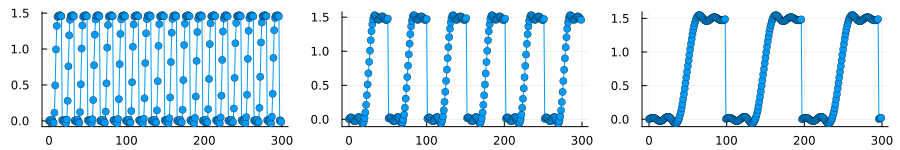}\\[2mm]
\caption{Inter-layer shift of equilibria with different mismatch ratios (by using scattering-channel cutoffs $L=40$, $W=10$ and $\ucut=200$ with $\dist=3.6$). 
Left: $\rm{mismatch}\approx 0.06$. Middle: $\rm{mismatch}\approx 0.02$. Right: $\rm{mismatch}\approx 0.01$. }
\label{fig:mismatch_compare}
\end{figure}

\begin{figure}[htbp!]
\centering
\includegraphics[width=1.0\textwidth]{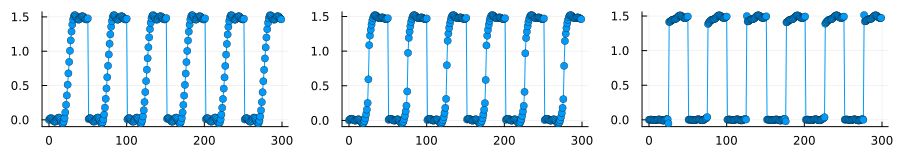}
\caption{Inter-layer shift of equilibria with different vertical distances (by using scattering-channel cutoffs $L=40$, $W=10$ and $\ucut=200$).  
Left: $\dist=2.8$. Middle: $\dist=3.2$. Right: $\dist=3.6$.}
\label{fig:diff_dz_domain}
\end{figure}

\vskip 0.2cm
{\bf Relaxed LDoS in reciprocal space.}
We finally examine how the structural relaxation affects the electronic spectrum, by showing the corresponding changes in the reciprocal-space LDoS defined as
\begin{align*}
    \widehat{\rm{LDOS}}(\xi,g) = \big[g(\hH[u](\xi))\big]_{\bzero,\bzero}.
\end{align*}
This provides a ``band"-like spectral signature of the relaxed geometry. And we use $g$ is a $E$-recentered Gaussian so we get function of $E$.

We first show the comparisons for different interlayer distances in Figure~\ref{fig:ldos_compare}.
For the larger separation ($\dist=3.6$), the interlayer coupling is weak, the LDoS is dominated by nearly parabolic bands associated with the kinetic energy, and the structural relaxation only produces a moderate spectral modification. 
In contrast, for the smaller separation ($\dist=2.0$), the interlayer coupling is much stronger, and the LDoS then exhibits clear band hybridization and pronounced spectral splitting. 
The contrast plots show that relaxation substantially redistributes the spectral weight near the band crossings, indicating the formation of Van Hove type features induced by the moir\'{e} potential and enhanced by structural relaxation \cite{fang2015ab, massatt2017electronic}. 

We further study the role of the shape of the single-site potential \eqref{siteP:numerics}, by comparing two cases: an isotropic potential with $\sigma_z=\sigma_x$ and an anisotropic potential with $\sigma_z=2\sigma_x$.
The results are shown in Figure~\ref{fig:ldos_compare_sigmaz}, where we focus on the band near $\xi=0$ such that kinetic energy is negligible and the moir\'{e} potential dominates. 
We observe that structural relaxation alters the band shapes, leading to the emergence of new energy gaps.
The relaxation effect is more significant when $\sigma_z=2\sigma_x$. 
In this case, the effective interlayer coupling is stronger and more spatially extended, so the structural relaxation produces a larger redistribution of spectral weight. 

\begin{figure}[htbp!]
\centering
\includegraphics[width=1.0\textwidth]{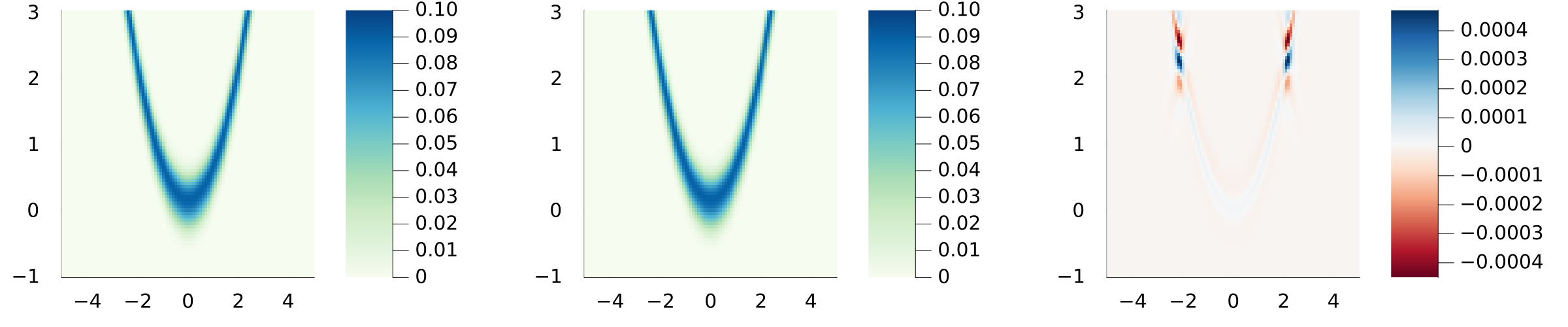}
\includegraphics[width=1.0\textwidth]{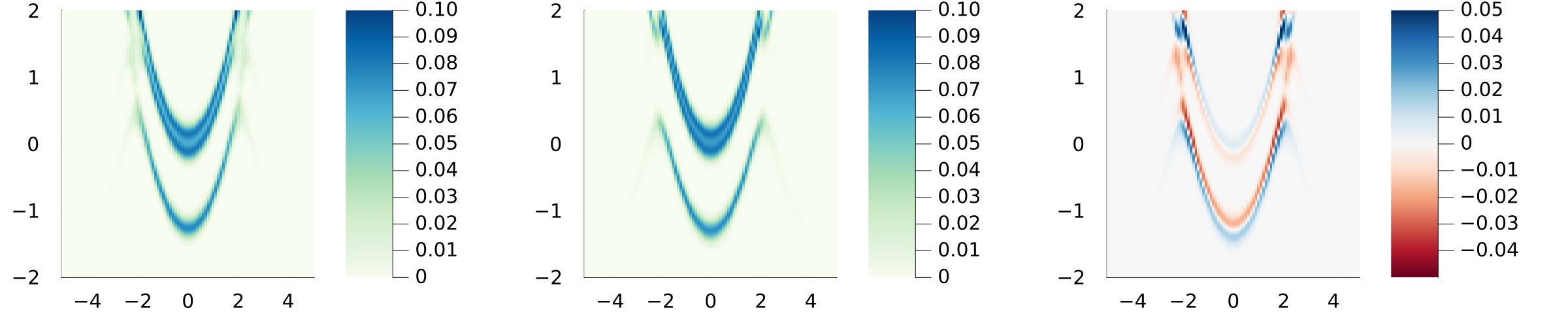}
\caption{LDoS in reciprocal space for $\dist=3.6$ (top row) and $\dist=2.0$ (bottom row). 
Left: unrelaxed LDOS. Middle: relaxed LDOS. Right: difference between the relaxed and unrelaxed LDoS.}
\label{fig:ldos_compare}
\end{figure}

\begin{figure}[htbp!]
\centering
\includegraphics[width=1.0\textwidth]{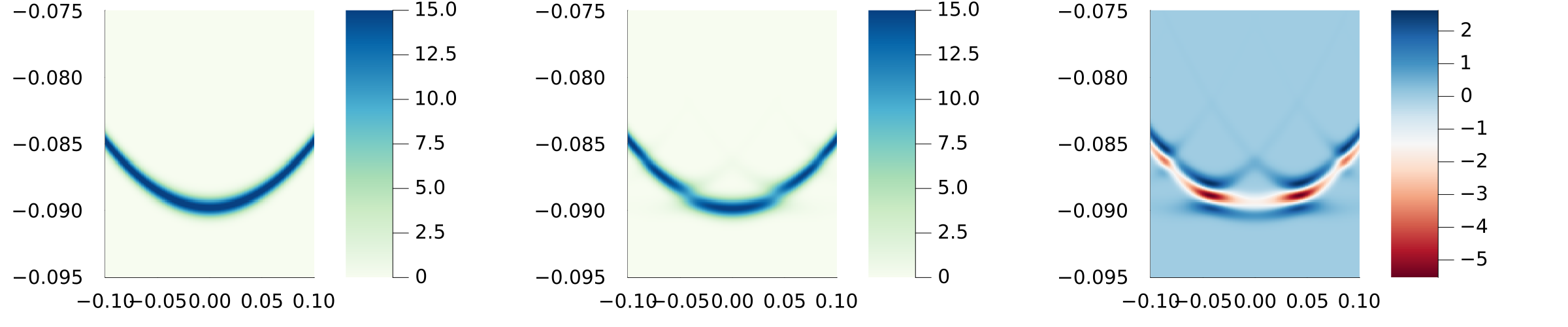}
\includegraphics[width=1.0\textwidth]{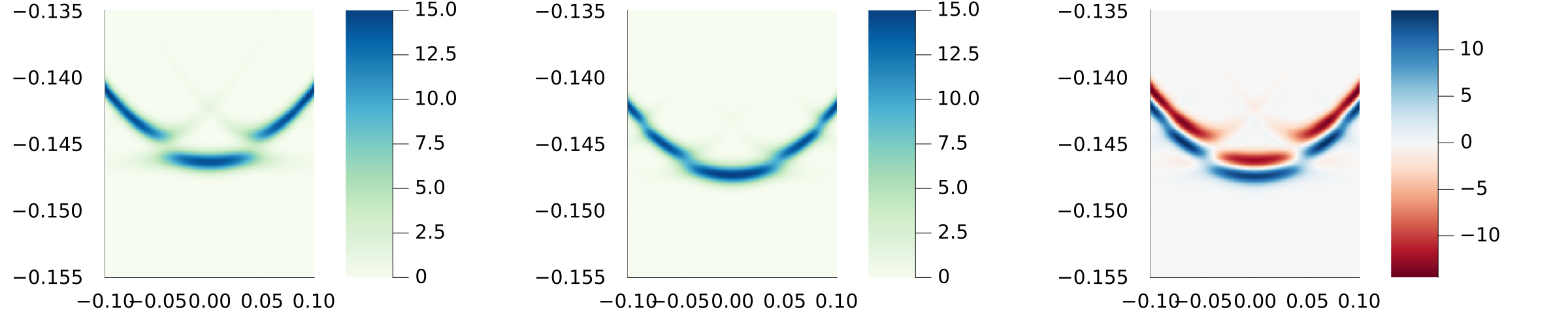}
\caption{LDoS in reciprocal space for $\sigma_z=\sigma_x$ (top row), and $\sigma_z=2\sigma_x$ (bottom row). 
Left: unrelaxed LDOS. Middle: relaxed LDOS. Right: difference between the relaxed and unrelaxed LDoS.}
\label{fig:ldos_compare_sigmaz}
\end{figure}

\section{Conclusions}
\label{sec:conclude}

In this work, we developed a variational quantum framework for studying structural relaxation in incommensurate bilayer systems based on electronic structure models. By formulating the displacement field on the configuration space and representing the continuum Schr\"{o}dinger operator in reciprocal space, we obtained well-defined notions of relaxed energy, local density of states, and forces on the deformed geometry.

We established the thermodynamic limit of the relevant observables and proposed an anisotropic scattering-channel approximation scheme adapted to the reciprocal-space structure of incommensurate systems. Under suitable regularity and stability assumptions, we proved exponential convergence of the approximate equilibria with respect to the scattering-channel cutoffs. Numerical experiments support the theoretical analysis and demonstrate that the model captures both structural relaxation patterns and their effects on the electronic spectrum.

Future work will extend the present framework to more realistic self-consistent electronic structure models. This would make it possible to study important physical effects such as screening, charge redistribution, transport, and topological responses in incommensurate materials.

\appendix
\renewcommand\thesection{\appendixname~\Alph{section}}

\section{Proofs}
\label{sec:proofs}
\renewcommand{\theequation}{A.\arabic{equation}}
\setcounter{equation}{0}
\renewcommand{\thelemma}{A.\arabic{lemma}}
\setcounter{lemma}{0}

Throughout the proofs, we will use the symbol $C$ to denote a generic positive constant that may change from one line to the next. The dependence of $C$ on model parameters will normally be clear from the context or stated explicitly. To simplify the expression, we introduce the resolvent of the operator $A$, denoted by $\resol_z(A):=(z-A)^{-1}$.

\subsection{Proof of Lemma \ref{lemma:potential}}
\label{sec:proof:lemma:potential}

\begin{proof}
We first prove (1).
By restricting $W_j[u_j]$ to the diagonal where $x_{\Fj} = x_j$, we recover the site potential $V_j[u_j](x_j)$ given by
\begin{align*}
W_j[u_j](x_j,x_j)&=\sum_{\ell \in \RL_j}v(x_j-\ell-u_j(x_j-x_j+\ell))
=\sum_{\ell \in \RL_j}v(x_j-\ell-u_j(\ell))
= V_j[u_j](x_j).
\end{align*}
This identity shows that the potential function $V_j[u_j](x_j)$ is the diagonal of the auxiliary potential $W_j[u_j](x_j, x_{\Fj})$.

\vskip 0.2cm

We then show (2).
The $\RL_{\Fj}$-periodicity of $W_j[u_j]$ with respect to $x_{\Fj}$ follows directly from the periodicity of $u_j$. For the $x_j$-direction, let $l' \in \RL_j$. By the translation invariance of the lattice, we have:
\begin{align*}
W_j[u_j](x_j+\ell',x_{\Fj}) &= \sum_{\ell \in \RL_j} v\big(x_j - (\ell-\ell') - u_j((\ell-\ell') + x_{\Fj} - x_j)\big) \\
&= \sum_{\ell'' \in \RL_j} v\big(x_j - \ell'' - u_j(\ell'' + x_{\Fj} - x_j)\big) \\
&= W_j[u_j](x_j,x_{\Fj}),
\end{align*}
where we have used the substitution $\ell'' = \ell - \ell'$ for the second equality. 
Thus, $W_j[u_j](x_j,x_{\Fj})$ is periodic on $\Gamma_j \times \Gamma_{\Fj}$.

\vskip 0.2cm

Finally we shall prove (3). Since $W_j[u_j](x_j, x_{\Fj})$ is periodic on $\Gamma_j \times \Gamma_{\Fj}$, we restrict our analysis to the compact closure of its fundamental domain, denoted by $ \overline{\Omega} \subset \R^d \times \R^d$. Furthermore, since $\overline{\Omega}$ is compact, there exists a uniform constant $R > 0$ such that $|x_j| \le R$ for all $(x_j, x_{\Fj}) \in \overline{\Omega}$.

Since $u_j \in C_{\rm per}(\Gamma_{\Fj}, \R^d)$ is continuous, it is globally bounded by $M>0$. 
For any lattice site $\ell \in \RL_j$, the localized site potential $v_j$ satisfies the reverse triangle inequality:
\begin{equation*}
|x_j - \ell - u_j(x_{\Fj} - x_j + \ell)| \ge |\ell| - |x_j| - |u_j| \ge |\ell| - R - M.
\end{equation*}
Moreover, the uniform upper bound for $v_j\big(x_j-\ell-u_j(x_{\Fj}-x_j+\ell)\big)$ is denoted by  
\begin{align*}
    C_\ell = \sup_{|y| \ge \max(0, |\ell| - R - M)} |v_j(y)|,
\end{align*}
which is independent of the variables $(x_j, x_{\Fj}) \in \overline{\Omega}$. Combined with the exponential decay \eqref{decay:v} of the localized site potential $v_j$, there exists $C>0$ such that
\begin{align*}
    C_\ell \le C e^{-a\max(0, |\ell| - R - M)} \le C' e^{-a|\ell|},
\end{align*}
where $C'$ absorbs the constant factor about $R+M$.
Due to this exponential decay, the series $\sum C_\ell$ is summable.
Consequently, the Weierstrass M-test guarantees that the series converges uniformly on $\overline{\Omega}$. Since each term $v_j\big(x_j-\ell-u_j(x_{\Fj}-x_j+\ell)\big)$ is a composition of continuous functions, their uniform limit $W_j[u_j]$ is continuous on $\overline{\Omega}$. Finally, by periodicity, this continuity extends to the whole space $\R^d \times \R^d$.
\end{proof}

\subsection[Proof of connection]{Proof of \eqref{connection}}
\label{sec:proof:connection}

\begin{proof}
By applying the Fourier transform $\FT$ to the Schr\"{o}dinger operator $\ham[u]$, the geometric periodicity of the auxiliary potentials $W_j$ allows us to express its action as
\begin{align*}
\big(\FT(\ham [u]\phi)\big)(\xi)
&= \int_{\R^{d}} e^{-i\xi \cdot x} \big(-\frac{1}{2}\Delta+\sum_{j=1,2}V_j[u_j](x)\big)\phi(x) \dd x 
\\[1ex]
&=\int_{\R^{d}} e^{-i\xi \cdot x} \big(-\frac{1}{2}\Delta+\sum_{j=1,2}V_j[u_j](x)\big) \frac{1}{(2\pi)^d} \int_{\R^{d}} e^{i\xi' \cdot x} \hat\phi(\xi') \dd \xi' \dd x
\\[1ex]
&= \frac{1}{2}|\xi|^2 \hat\phi(\xi)+\sum_{j=1,2}\sum_{G_1\in \RL^*_1}\sum_{G_2\in \RL^*_2}  \widehat{W_j}[u_j](G_1,G_2)\\[1ex]
&\hspace{5em} \cdot \frac{1}{(2\pi)^{d}}\int_{\R^{d}}\int_{\R^{d}} e^{-i\xi \cdot x} e^{i(G_1\cdot x+G_2\cdot x)} e^{i\xi' \cdot x} \hat\phi(\xi') \dd \xi' \dd x \\[1ex]
&=\frac{1}{2}|\xi|^2 \hat\phi(\xi)+\sum_{j=1,2}\sum_{G_1\in \RL^*_1}\sum_{G_2\in \RL^*_{2}}  \widehat{W_j}[u_j](G_1,G_2)\cdot \hat\phi(\xi-G_1-G_2),
\end{align*}
where $G_1\in \RL^*_1,~G_2\in \RL^*_2$, $u = [u_1,u_2] \in \X$.
This relation naturally defines the discrete Hamiltonian matrix $\hH:\C(\RL_1^* \times \RL_2^*) \rightarrow \C(\RL_1^* \times \RL_2^*)$, whose matrix elements are given by
\begin{equation*}
\hH[u]_{\bG,\bG'} = \frac{1}{2}|G_1+G_2|^2\delta_{G_1G_1'}\delta_{G_2G_2'} + \sum_{j=1,2}\hat{W}_j[u_j](G_1-G_1',G_2-G_2').
\end{equation*}
For ${\bf G} = (G_1,G_2) \in \RL_1^* \times \RL_2^*$, we have from \eqref{hat:W}, the definition of $\unfold_\xi$ and the definition \eqref{Hhat:xi} of $\hH(\xi)$ that
\begin{align*}
& \Big( \unfold_\xi \big(\ham [u]\psi\big) \Big)(\bG)
= \big(\widehat{\ham [u]\psi}\big)(\xi+G_1+G_2) 
= \int_{\R^d} \big(\ham [u]\psi\big)(x) e^{-i(\xi+G_1+G_2)\cdot x} \dd x
\\[1ex]\nonumber
= & \frac{1}{2}\big|\xi+G_1+G_2\big|^2 \hat{\psi}(\xi+G_1+G_2) \\[1ex]
& \hspace{5em} + \sum_{j=1,2}\sum_{G_1'\in \RL^*_1}\sum_{G_2'\in \RL^*_{2}}  \widehat{W_j}[u_j](G_1',G_2')\cdot \hat\psi(\xi-G_1'-G_2'+G_1+G_2)\\[1ex]\nonumber
= & \frac{1}{2}\big|\xi+G_1+G_2\big|^2 \hat{\psi}(\xi+G_1+G_2)\\[1ex]
& \hspace{5em}  + \sum_{j=1,2}\sum_{G_1'\in \RL^*_1}\sum_{G_2'\in \RL^*_{2}} \hat{W}_j[u_j](G_1-G_1',G_2-G_2') \hat{\psi}(\xi+G_1'+G_2')\\[1ex]
=& \Big(\hH[u](\xi) \big(\unfold_\xi \psi\big)\Big)(\bG),
\end{align*}
which hence leads to \eqref{connection}.
\end{proof}

\subsection{Proof of Theorem \ref{theo:TDL} }
\label{sec:proof:theorem:Dos}

To justify the existence of the thermodynamic limit of $\aTr_R\big(g(\ham)\big)$, we first establish the decay of the auxiliary potential $W$ in reciprocal space.
\begin{lemma}
\label{lemma:decay of W}
Let $u\in\X^{\gamma}$, $g \in \setg$. 
Then there exists $\tilde\gamma > 0$ depending on $\gamma$ such that
\begin{equation}
\label{decay:W}
   \Big| \hat{W}_{j}[u_j](\bG) \Big| \le C e^{-\frac{\alpha}{2}|G_1 + G_2| -\tilde \gamma |G_1 - G_2|}
\end{equation}
for any $\bG=(G_1,G_2)\in\RL_1^*\times\RL_2^*$ and $j=1,2$.
\end{lemma}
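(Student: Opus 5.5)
The plan is to compute $\hat W_j[u_j](\bG)$ explicitly. The lattice sum in \eqref{W:auxiliary} combines with the integral over $\Gamma_j$ to give an integral over $\R^d$, so the coefficient should factor into $\hat v_j(G_1+G_2)$ times a Fourier coefficient of a phase function built from $u_j$. The decay in $|G_1+G_2|$ will then come from \eqref{decay:v}. The decay in the other direction will come from analyticity of $u_j$, which follows from $u\in\X^\gamma$. I write the argument for $j=1$; the case $j=2$ is symmetric.

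\textbf{Step 1 (factorization).} Insert \eqref{W:auxiliary} into \eqref{hat:W}. Exchange sum and integral, which is justified by the uniform convergence shown in Lemma \ref{lemma:potential}. Substitute $y=x_1-\ell$ and use $e^{-\im G_1\cdot \ell}=1$; the sum over $\ell$ and the integral over $\Gamma_1$ then merge into $\int_{\R^d}\dd y$. Next set $s=x_2-y$ and use the $\RL_2$-periodicity of the integrand in $x_2$ to return $s$ to $\Gamma_2$. Writing $q:=G_1+G_2$, and noting $G_1\cdot y+G_2\cdot(y+s)=q\cdot y+G_2\cdot s$, I expect
\begin{equation*}
\hat W_1[u_1](G_1,G_2)=\frac{1}{|\Gamma_1||\Gamma_2|}\int_{\Gamma_2} e^{-\im G_2\cdot s}\int_{\R^d} v_1\big(y-u_1(s)\big)e^{-\im q\cdot y}\dd y\dd s
=\frac{\hat v_1(q)}{|\Gamma_1||\Gamma_2|}\int_{\Gamma_2} e^{-\im G_2\cdot s}\,e^{-\im q\cdot u_1(s)}\dd s .
\end{equation*}
Thus $|\hat W_1|\le C e^{-\alpha|q|}\,|c_{G_2}(q)|$, where $c_{G_2}(q)$ denotes the $G_2$-th Fourier coefficient of $e^{-\im q\cdot u_1}$.

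\textbf{Step 2 (analytic continuation of $u_1$).} The coefficient bound $|\hat u_Q|\le Ce^{-\gamma|Q|}$ implies that $u_1$ extends analytically to the complex strip $\{s+\im t:\ |t|<\gamma\}$. Since $u_1$ is real on $\R^d$, I will estimate, for $|t|\le\eta<\gamma$,
\begin{equation*}
|\mathrm{Im}\,u_1(s+\im t)|\le \sum_Q|\hat u_Q|\,|\sinh(Q\cdot t)|\le C\eta\sum_Q|Q|e^{-(\gamma-\eta)|Q|}=:M(\eta),
\end{equation*}
so that $M(\eta)\to0$ as $\eta\to0$. I then shift the contour of the $s$-integral by $s\mapsto s-\im\eta\, G_2/|G_2|$; periodicity ensures the lateral contributions cancel. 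This should give
\begin{equation*}
|c_{G_2}(q)|\le C e^{-\eta|G_2|}\,e^{M(\eta)|q|}.
\end{equation*}
This is the step I expect to be the main obstacle. The growth factor $e^{M(\eta)|q|}$ must be absorbed by the decay of $\hat v_1$, which is why the rate in $|G_1+G_2|$ degrades from $\alpha$ to $\alpha/2$.

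\textbf{Step 3 (recombination).} Since $G_1-G_2=q-2G_2$, we have $|G_1-G_2|\le|q|+2|G_2|$. Hence
\begin{equation*}
e^{-\eta|G_2|}\le e^{-\frac{\eta}{2}|G_1-G_2|+\frac{\eta}{2}|q|}.
\end{equation*}
Combining this with Steps 1 and 2 yields
\begin{equation*}
|\hat W_1|\le C\exp\big(-(\alpha-M(\eta)-\tfrac{\eta}{2})|q|-\tfrac{\eta}{2}|G_1-G_2|\big).
\end{equation*}
Now choose $\eta\in(0,\gamma)$ small enough that $M(\eta)+\eta/2\le\alpha/2$, and set $\tilde\gamma:=\eta/2$. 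This gives \eqref{decay:W}, with $\tilde\gamma$ depending on $\gamma$ (and also on $\alpha$ and the constant $C$ in \eqref{set:X_gamma}).
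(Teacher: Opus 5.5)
Your proposal is correct and follows essentially the same route as the paper: the factorization of $\hat W_j$ into $\hat v_j(G_1+G_2)$ times a Fourier coefficient of $e^{-\im w\cdot u_j}$, a contour shift in the direction $-G_{\Fj}/|G_{\Fj}|$ justified by analyticity of $u_j$, and the triangle inequality $|G_1-G_2|\le|G_1+G_2|+2|G_{\Fj}|$. The only difference is cosmetic and comes at the end. You choose $\eta$ small enough that $M(\eta)+\eta/2\le\alpha/2$, which gives the $\alpha/2$ rate directly. The paper instead only makes the $|G_1+G_2|$-exponent nonnegative and then interpolates with the trivial bound $|\hat b|\le 1$, which yields $\tilde\gamma=\rho_0/4$.
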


\begin{proof}
To establish the decay of the Fourier modes, substituting the definition \eqref{W:auxiliary} into \eqref{hat:W} yields
\begin{equation}
\label{W:0}
\hat{W}_{j}[u_j](\bG) = \frac{1}{|\Gamma_j||\Gamma_{\Fj}|} \int_{\Gamma_j}\int_{\Gamma_{\Fj}} \sum_{l \in \RL_j} v(x_j-l-u_j(x_{\Fj}-x_j+l)) e^{-i(G_j\cdot x_j+ G_{\Fj}\cdot x_{\Fj})} \dd x_{\Fj} \dd x_j.
\end{equation}
By exploiting the localization of $v$ and the periodicity with respect to $G_j \in \RL_j^*$, we introduce the change of variables $t = x_j-l$ and $s = (x_{\Fj}-x_j+l) \mod \Gamma_{\Fj}$. Since the Jacobian of this mapping is unity, extending the integration over $t$ to $\R^d$ transforms \eqref{W:0} into
\begin{equation*}
\hat{W}_{j}[u_j](\bG) = \frac{1}{|\Gamma_j||\Gamma_{\Fj}|} \int_{\Gamma_{\Fj}} \int_{\R^d} v(t-u_j(s)) e^{-i(G_1+G_2)\cdot t} \dd t ~ e^{-i G_{\Fj}\cdot s} \dd s.
\end{equation*}
Setting $w = G_1+G_2$ and applying the shift $\tau = t - u_j(s)$ to the inner integral yields
\begin{equation*}
\int_{\R^d} v(t-u_j(s)) e^{-i w\cdot t} \dd t = e^{-i w\cdot u_j(s)} \int_{\R^d} v(\tau) e^{-i w\cdot \tau} \dd\tau = e^{-i w\cdot u_j(s)} \hat v(w),
\end{equation*}
where $\hat v(w)$ satisfies $|\hat v(w)| \le C e^{-\alpha |w|}$. Substituting this inner integral back directly leads to the factorized form
\begin{equation}
\label{form:W}
\hat{W}_{j}[u_j](\bG) = \frac{\hat v(G_1+G_2)}{|\Gamma_j|} \cdot \hat b[G_1+G_2, u_j](G_{\Fj})
\end{equation}
with the displacement factor $\hat b$ defined by
\begin{equation}
\label{form:b}
    \hat b[w, u_j](G) = \frac{1}{|\Gamma_{\Fj}|} \int_{\Gamma_{\Fj}} e^{-i w\cdot u_j(s)} e^{-i G\cdot s} \dd s.
\end{equation}
For the decay of $\hat{b}$, we note that the displacement field $u_j(s) \in \X^\gamma$ can be analytically extended to the strip $\{z \in \C^d:|\operatorname{Im}(z)| \le \rho\}$ for any $\rho < \gamma$. Since the integrand $e^{-iw\cdot u_j(z)}e^{-iG_{\Fj}\cdot z}$ is analytic in the strip and periodic with respect to $\mathcal R_{\Fj}$ in the real directions, the integral over the torus is invariant under translations in the imaginary direction inside the strip.
Thus, for any $|\eta|\le \rho$, shifting the contour from the real torus to the translated torus $\Gamma_{\Fj}+i\eta$ gives
\begin{align}
\label{shift:b}
    \hat b[w,u_j](G_{\Fj}) = \frac{1}{|\Gamma_{\Fj}|} \int_{\Gamma_{\Fj}} e^{-iw\cdot u_j(s+i\eta)} e^{-iG_{\Fj}\cdot (s+i\eta)} \dd s .
\end{align}
To estimate \eqref{shift:b}, we define a bounding function for $0<\rho<\gamma$ defined by
\begin{align*}
    M_j(\rho):=\sum_{Q\in \mathcal R_{\Fj}^*} |\hat u_{Q}|\big(e^{\rho |Q|}-1\big).
\end{align*}
This definition guarantees the pointwise bound
\begin{align*}
    |\operatorname{Im} u_j(s+i\eta)| \le |\operatorname{Im} \big( u_j(s+i\eta)-u_j(s)\big) | \le |u_j(s+i\eta)-u_j(s)| \le M_j(\rho).
\end{align*}
By choosing $\eta=-\rho \frac{G_{\Fj}}{|G_{\Fj}|}$ for $G_{\Fj}\neq 0$, we can bound the displacement factor as
\begin{align}
\label{decay:b}
    |\hat b[w,u_j](G_{\Fj})| &\le \frac{1}{|\Gamma_{\Fj}|} \int_{\Gamma_{\Fj}} \big|  e^{-iw\cdot u_j(s+i\eta)} \big| \big|e^{-iG_{\Fj} \cdot (s+i\eta)} \big| \dd s \nonumber\\
    & \le e^{|w|\,|\operatorname{Im}u_j(s+i\eta)|} e^{G_{\Fj}\cdot\eta}
    \le e^{M_j(\rho)|w|}e^{-\rho |G_{\Fj}|}.
\end{align}
The same estimate also holds when $G_{\Fj}=0$.
Combining \eqref{decay:b} and \eqref{decay:v} with the triangle inequality $|G_1 - G_2| \le |G_1 + G_2| + 2|G_{\Fj}|$ yields the final decay bound
\begin{align}
\label{decay_W:G1-G2}
\nonumber
|\hat{W}_{j}[u_j](\bG)| &\le C e^{-\big(\alpha - M_j(\rho) \big)|G_1+G_2|} e^{-\rho |G_{\Fj}|} \\
&\le C e^{-(\alpha - M_j(\rho) - \rho/2)|G_1+G_2|} e^{-\frac{\rho}{2}|G_1-G_2|}.
\end{align}

To obtain an explicit estimate for $ M_j(\rho)$, we introduce $\displaystyle U^{\gamma} = \max_{\Fj=1,2}\sum_{Q\in \mathcal R_{\Fj}^*} |\hat u_{Q}|e^{\frac{\gamma}{2} |Q|}$.
Since $u \in \X^\gamma$, $U^\gamma$ is uniformly bounded. Then, for any $0 < \rho < \frac{\gamma}{2}$, we can bound $M_j(\rho)$ as 
\begin{align*}
    M_j(\rho) \le \sum_{Q\in \mathcal R_{\Fj}^*} |\hat u_{Q}|\Big(e^{\frac{2\rho}{\gamma} \frac{\gamma}{2} |Q|}-1\Big) \le \sum_{Q\in \mathcal R_{\Fj}^*} |\hat u_{Q}| \frac{2\rho}{\gamma} e^{\frac{\gamma}{2} |Q|} \le \frac{2\rho}{\gamma} U^{\gamma}.
\end{align*}
Consequently, by setting the parameter $\rho_0 = \min\left\{\frac{\gamma}{2}, \frac{2\alpha\gamma}{4U^{\gamma}+\gamma}\right\}$, we explicitly guarantee that
\begin{align*}
    \alpha - M_j(\rho_0) - \rho_0/2 \ge 0.
\end{align*}

To circumvent the excessive loss of decay in the $G_1+G_2$ direction, we observe from \eqref{form:b} the trivial bound $|\hat b[w,u_j](G)| \le 1$. Combining this with the decay of the atomic potential \eqref{decay:v} directly yields
\begin{align}
\label{decay_W:G1+G2}
|\hat{W}_j[u_j](\bG)| &\le C e^{-\alpha |G_1+G_2|}.
\end{align}
Interpolating between the bounds \eqref{decay_W:G1-G2} and \eqref{decay_W:G1+G2} provides the refined inequality
\begin{align*}
|\hat{W}_{j}[u_j](\bG)| &\le C e^{-\frac{\alpha}{2} |G_1+G_2|} e^{-\frac{\rho_0}{4}|G_1-G_2|}.
\end{align*}
The corresponding decay rate with respect to $|G_1-G_2|$ is then denoted by $\tilde{\gamma} = \rho_0/4$. 
\end{proof}

The above estimate shows that the decay rates are different in the directions $|G_1 + G_2|$ and $|G_1 - G_2|$. 
To describe this difference more clearly, we introduce the following seminorms for $\bG = (G_1, G_2) \in \RL_1^* \times \RL_2^*$,
\begin{equation*}
    |\bG|_+ := |G_1 + G_2|
    \qquad{\rm and}\qquad
    |\bG|_- := |G_1 - G_2|.
\end{equation*}
It is straightforward to verify that both of these mappings are well-defined seminorms. 
Indeed, the triangle inequality follows directly from the linearity of each component and the standard properties of the absolute value
\begin{equation*}
    |\bG+\bG'|_+ = |(G_1 + G'_1) + (G_2 + G'_2)| \le |G_1 + G_2 |+ | G'_1 + G'_2| =|\bG|_+ + |\bG'|_+,
\end{equation*}
\begin{equation*}
    |\bG+\bG'|_- = |(G_1 + G'_1) - (G_2 + G'_2)| \le |G_1 - G_2 | + | G'_1 - G'_2| =|\bG|_- + |\bG'|_-.
\end{equation*}
Then we are ready to prove Theorem \ref{theo:TDL}.

\vskip 0.2cm

\begin{proof}[Proof of Theorem \ref{theo:TDL}]
Using the decay of $\Big| \hat{W}_{j}[u_j](\bG) \Big|$ in Lemma \ref{lemma:decay of W} and the relation
\begin{equation*}
    |\bG|_+ + |\bG|_- \ge \sqrt{2}|\bG|,
\end{equation*}
 we obtain the following exponential decay for the off-diagonal elements for $\bG \neq \bG'$,
\begin{equation}
\label{decay:H}
|\hH[u](\xi)_{\bG, \bG'}| \le C e^{-\frac{1}{\sqrt{2}}\tilde\gamma |\bG-\bG'|}.
\end{equation}

Define a truncation set $\Omega_R:= B_R\cap (\RL_1^*\times \RL_2^*)$ ,and $R>0$ is large enough such that $\max\big\{|\bG|,|\bG'|\big\}<R/2$. 
Let $\Omega \subset \RL^*_1\times\RL^*_2$ be a finite set such that $\Omega \supsetneq \Omega_R$.
Then $\hH({\xi})^{\Omega_R}$ is expanded as a bigger matrix $\hH({\xi})^{\Omega_R}_\Omega \in \C^{\Omega\times\Omega}$
\begin{eqnarray*}
\big(\hH[u]^{\Omega_R}(\xi)_\Omega\big)_{\bG',\bG''}=
\left\{
\begin{array}{ll}
\big(\hH[u]^{\Omega_R}(\xi)\big)_{\bG',\bG''} \quad &{\rm if \ |\bG'|,|\bG''| \le R},
\\[1ex]
\quad 0 & {\rm otherwise}.
\end{array}
\right.
\end{eqnarray*}
Then we have $\df\big(\hH[u]^{\Omega_R}_\Omega(\xi)\big) = \df\big(\hH[u]^{\Omega_R}(\xi)\big)\cup\{0\}$, where $\df(\cdot)$ represents the spectrum set of an operator.

Let $\mathfrak{s}(g)$ represent the non-analytic region of the given function $g\in\setg$.
We can find a contour $\Cc_{R}\subset\C$ such that it encloses all the eigenvalues of $\hH[u]^\Omega(\xi)$ and $\hH[u]^{\Omega_R}_\Omega(\xi)$,
and satisfies
\begin{align}
\label{resolvant-dist-R}
\min\Bigg\{
{\rm dist}\big(z,\mathfrak{s}(g)\big), ~
{\rm dist}\left(z,\df\big(\hH[u]^{\Omega_R}_\Omega(\xi)\big)\right), ~
{\rm dist}\left(z,\df\big(\hH[u]^\Omega(\xi)\big)\right), ~
{\rm dist}\big(z, \{0\}\big) \Bigg\} \geq \frac{\delta}{2}
\end{align}
for any $z\in\Cc_{R}$.

Applying a Combes–Thomas type estimate \cite{wang2025convergence,chen2016qm,combes1973asymptotic} yields a constant $c > 0$ independent of $\gamma$ such that for any $z \in \Cc_{R}$, 
\begin{eqnarray}
\label{decay:resol}
\left\vert \resol_z\big(\hH[u]^{\Omega_R}_\Omega(\xi)\big)_{\bG,\bG'} \right\vert \leq C \delta^{-1} e^{-c\Dres |\bG-\bG'|},
\end{eqnarray}
where $\Dres := \min{\{ \tilde\gamma , \delta\}}$. Note that the above estimate  holds for arbitrary $\Omega \supset \Omega_R$. Due to the exponential decay of $g$ given in \eqref{deftestfunctiong}, the integral $ \int_{\Cc_R} |g(z)| \dd z$ admits a uniform bound as $R \to \infty$, which depends only on $\zeta$. Therefore, by \cite[Lemma A.1]{wang2025convergence}, we deduce the limit $\displaystyle \lodoss{\hH(\xi)}{\bG}{\bG'} := \lim_{R \to \infty}\lodoss{\hH(\xi)^{\Omega_{R}}}{\bG}{\bG'}$ exists.
Moreover, there exist constants $C,c>0$, such that
\begin{eqnarray}
\label{locality:reciprocal_R}
\left\vert \lodoss{\hH(\xi)^{\Omega_{R}}}{\bG}{\bG'} - \lodoss{\hH(\xi)}{\bG}{\bG'} \right\vert \leq C \delta^{-2} e^{-c\Dres R} .
\end{eqnarray}
Then applying \cite[Lemma A.2]{wang2025convergence}, we deduce that the matrix elements of $g\big(\hH[u](\xi)\big)$ satisfy
\begin{align}
\label{decay-xi-G1-G2}
\left| g\big(\hH[u](\xi)\big)_{\bG,\bG'} \right|
\le C e^{- c\left( \min \big\{|\xi+G_1+G_2|,|\xi+G'_1+G'_2| \big\} + |\bG-\bG'| \right)} ,
\end{align}
where $C,c > 0$ depend on $\zeta$, $\gamma$, and $\delta$.

With the decay estimates \eqref{locality:reciprocal_R} and \eqref{decay-xi-G1-G2} verifying the assumptions of \cite[Appendix B.2]{wang2025convergence}, we immediately obtain the existence of the thermodynamic limit
\begin{equation*}
    \aTr\big(g(H)\big) := \lim_{R\rightarrow\infty}\aTr_R\big(g(H)\big) = \int_{\R^d} [g(\hH(\xi))]_{{\bf 0,0}} \dd\xi.
\end{equation*}
This completes the proof.
\end{proof}

\subsection{Proof of Theorem \ref{theorem:force}}
\label{sec:proof:force}

We aim to derive the variational derivative $\nabla_u E[u]$ by utilizing the contour integral representation of the energy functional. 
Given the spectral properties of the Hamiltonian, the energy can be expressed via the Cauchy integral formula
\begin{equation*}
E[u]= \int_{\R^d} \frac{1}{2\pi \im} \oint_{\mathscr{C}} g(z) [(z-\hH[u](\xi)^{-1}]_{\bzero,\bzero} \dd z \dd \xi.
\end{equation*}
The functional derivative with respect to the displacement field $u$ is given by
 \begin{align}
 \label{form:F}
\nabla_u E[u]
= \int_{\R^d} \frac{1}{2\pi \im} \oint_{\mathscr{C}} g(z) [\resol_z\big(\hH[u](\xi)\big) \nabla_{u} \hH[u](\xi)  \resol_z\big(\hH[u](\xi)\big) ]_{\bzero,\bzero} \dd z \dd \xi .
\end{align}

To facilitate the calculation, let $\hat u_{Q}$ denote the Fourier coefficients of the displacement field $u_j$ with $Q \in \RL^*_{\Fj}$. 
It follows from \eqref{Hhat:xi} that the partial derivatives of the Hamiltonian $\partial_{\hat u_{Q}} \hat{H}[u](\xi)$ are independent of $\xi$ and take the form
\begin{align}
\label{dH:0}
\partial_{\hat u_{Q}} \hH[u]_{\bG,\bG'}
= \partial_{\hat u_{Q}} \hat W_{j}(\bG-\bG'), ~\rm{for}~ Q\in \RL^*_{\Fj}.
\end{align}
The following lemma shows that we can transform the integral over $\xi$ into the limit of a trace. To simplify the notation, we let
\begin{align*}
    \Fk(\xi) := \partial_{\hat u_{Q}} g(\hH[u](\xi)) = \frac{1}{2\pi \im} \oint_{\mathscr{C}} g(z)\resol_z\big(\hH[u](\xi)\big)  \partial_{\hat u_{Q}} \hH[u] \resol_z\big(\hH[u](\xi)\big)\dd z,
\end{align*}
and
\begin{align*}
    \Fk^{R}(\xi) := \frac{1}{2\pi \im} \oint_{\mathscr{C}_R} g(z)\resol_z\big(\hH[u]^{\Omega_R}(\xi)\big)  \partial_{\hat u_{Q}} \hH[u]^{\Omega_R} \resol_z\big(\hH[u]^{\Omega_R}(\xi)\big) \dd z.
\end{align*}

The proof proceeds in two steps. First, we show that $\Fk(\xi)$ is the thermodynamic limit of its finite-dimensional truncations, which is the content of Lemma \ref{lemma:decay:Fk}. Second, we replace the reciprocal-space integral $\Fk(\xi)_{\bzero,\bzero}$ by the normalized trace of the finite truncations. This averaging identity is stated in Lemma \ref{lemma:integral of AB}. 
Once these two facts are established, the proof of Theorem 2.2 follows by applying the cyclicity of the trace at the finite-dimensional level and then passing to the thermodynamic limit.

The next lemma gives the elements of $\Fk$ as thermodynamic limit in the reciprocal space and shows the decay of $\Fk(\xi)_{\bzero,\bzero}$ with respect to $\xi$ and $|Q|$ which is needed for the trace averaging argument.
Specifically, we define the truncated domain $\Omega_R:= B_R\cap (\RL_1^*\times \RL_2^*)$.
\begin{lemma}[Thermodynamic limit and decay of $\Fk$]
\label{lemma:decay:Fk}
Under the assumptions of Theorem \ref{theorem:force}, Let $R > 0$ be large enough such that  $\max{\{|\bG|,|\bG'|\}}  < R/2$. Then the limit
\begin{align*}
    \Fk(\xi)_{\bG,\bG'} = \lim_{R \rightarrow \infty } \Fk^{R}(\xi)_{\bG,\bG'}
\end{align*}
exists. Moreover, there exist positive constants $C$ dependent on $\delta$ and $c$ independent of $\gamma$ such that
\begin{align}
\label{Fk:converge_R}
    |\Fk(\xi)_{\bG,\bG'} -\Fk^{R}(\xi)_{\bG,\bG'}| \le C e^{-c\Dres \big|R  -\max{\{|\bG|,|\bG'|\}}-\sqrt{2}|Q| \big|}.
\end{align}
where $\Dres= \min{\{\tilde\gamma , \delta\}}$ with $\tilde\gamma>0$ depending on $\gamma$ (as established in Lemma \ref{lemma:decay of W}).
Furthermore, we have the estimate
\begin{align}
\label{Fk:converge_G_xi}
    |\Fk(\xi)_{\bzero,\bzero}| \le C e^{- c\Dres |\xi| - c \Dres |Q| },
\end{align}
with the positive constants $C$ depending on $\zeta, \delta$ depending on $\zeta$, and $c$ depending on $\zeta$. 
\end{lemma}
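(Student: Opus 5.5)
The approach is to analyse $\Fk^R(\xi)$ entrywise through its contour representation, in the same way the proof of Theorem \ref{theo:TDL} handled $g(\hH^{\Omega_R}(\xi))$. The new ingredient is a locality bound for the perturbation $\partial_{\hat u_Q}\hH[u]$. Differentiating the factorized form \eqref{form:W} gives
\[
\partial_{\hat u_Q}\hat W_j[u_j](\bG)=-\im\,\frac{\hat v(G_1+G_2)}{|\Gamma_j|}\,\frac{1}{|\Gamma_{\Fj}|}\int_{\Gamma_{\Fj}}(G_1+G_2)\,e^{-\im (G_1+G_2)\cdot u_j(s)}\,e^{\im Q\cdot s}\,e^{-\im G_{\Fj}\cdot s}\dd s .
\]
This is a displacement factor of the type \eqref{form:b}, evaluated at the shifted frequency $G_{\Fj}-Q$. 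The contour-shift argument of Lemma \ref{lemma:decay of W} then yields
\[
|\partial_{\hat u_Q}\hH_{\bG,\bG'}|\le C e^{-\frac{\alpha}{4}|\bG-\bG'|_+ -\tilde\gamma|G_{\Fj}-G_{\Fj}'-Q|},
\]
after absorbing the polynomial factor $|G_1+G_2|$ into the $\alpha$-decay. Here $Q$ acts like a lattice vector lying in the $\Fj$-block, which is the identification $\bG\sim(0,Q)$. Since $|(0,Q)|_\pm=|Q|$, this shows the matrix $\partial_{\hat u_Q}\hH$ is concentrated near the diagonal shifted by a vector of norm comparable to $\sqrt2|Q|$ in the metric used in \eqref{decay:H}.

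For existence of the limit and the estimate \eqref{Fk:converge_R}, I would write
\[
\Fk^R-\Fk=\frac{1}{2\pi\im}\oint g(z)\big[(\resol^R-\resol)\partial\hH\,\resol^R+\resol\,(\partial\hH^{\Omega_R}-\partial\hH)\resol^R+\resol\,\partial\hH\,(\resol^R-\resol)\big]\dd z .
\]
The contour is chosen as in \eqref{resolvant-dist-R}, uniformly in $R$. For the resolvents I use the Combes–Thomas bound \eqref{decay:resol}. For resolvent differences I use the geometric resolvent identity, which gives $|(\resol^R-\resol)_{\bG,\bG''}|\le C\delta^{-2}e^{-c\Dres(2R-|\bG|-|\bG''|)}$-type bounds; this is the content behind \cite[Lemma A.1]{wang2025convergence}. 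Chaining the three exponential kernels by summing over intermediate indices, with the triangle inequality absorbing sums, produces decay $e^{-c\Dres(R-\max\{|\bG|,|\bG'|\}-\sqrt2|Q|)}$. The $\sqrt2|Q|$ loss arises because the middle factor transports the index by $(0,Q)$. The integral over $z$ is finite uniformly thanks to the $e^{-\zeta|\mathrm{Re}z|}$ decay of $g$.

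For \eqref{Fk:converge_G_xi} there are two separate decays. The $|Q|$-decay comes from the off-diagonal structure. The $(\bzero,\bzero)$ entry involves $\sum_{\bG,\bG'}\resol_{\bzero\bG}\,\partial\hH_{\bG\bG'}\,\resol_{\bG'\bzero}$, and the middle factor forces $|\bG-\bG'|\gtrsim|Q|-|\bG-\bG'|$-corrections. Combes–Thomas therefore produces $e^{-c\Dres(|\bG|+|\bG'|)}$, and together these give $e^{-c\Dres|Q|}$. The $\xi$-decay is the harder part, and it is where I expect the main obstacle. I would mimic \cite[Lemma A.2]{wang2025convergence} as used for \eqref{decay-xi-G1-G2}. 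For $|\xi|$ large the diagonal entry $\frac12|\xi|^2$ is far from the region where $g$ is non-negligible, and I would deform the contour. One option is to split $g$ into a part supported (in the analytic sense) at real parts $\lesssim|\xi|^2/4$, where the resolvent at index $\bzero$ is controlled by a Combes–Thomas bound with a weight growing in $|\xi+G_1+G_2|$. The other part is $|g(z)|\le Ce^{-\zeta|\xi|^2/4}$, which is even better than needed. The delicate point is keeping the constants uniform while simultaneously extracting the $|Q|$-decay. I would combine the two bounds by taking a geometric mean, which explains why both exponents carry the factor $c\Dres$ and the constants depend on $\zeta,\delta$.
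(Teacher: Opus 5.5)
Your locality bound for $\partial_{\hat u_Q}\hH[u]$, the three-term splitting of $\Fk^R-\Fk$, the chaining of Combes--Thomas kernels giving \eqref{Fk:converge_R} with the $\sqrt2|Q|$ loss, and the $e^{-c\Dres|Q|}$ bound on the $(\bzero,\bzero)$ entry all follow the paper. There is one minor slip. Your own bound forces $|\bG-\bG'|_+\approx 0$ and $G_{\Fj}-G'_{\Fj}\approx Q$, so $\partial_{\hat u_Q}\hH$ is concentrated at a shift of the form $\pm(Q,-Q)$, of norm $\sqrt2|Q|$, not at $(0,Q)$. Nothing downstream changes.

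The genuine gap is the $|\xi|$-decay in \eqref{Fk:converge_G_xi}. The high-energy half of your splitting of $g$ is fine. For the low-energy half, however, a pointwise (weighted) Combes--Thomas bound on the resolvent cannot produce exponential decay, for the following reasons.
\begin{itemize}
\item The untruncated $\hH[u](\xi)$ has low-lying spectrum coming from indices with $\xi+G_1+G_2\approx 0$. So the contour cannot leave the region where $|g|$ is of order one.
\item In that region $\resol_z(\hH[u](\xi))_{\bzero,\bzero}\approx(z-\frac12|\xi|^2)^{-1}$ is only $O(|\xi|^{-2})$. Hence $[\resol_z\,\partial_{\hat u_Q}\hH\,\resol_z]_{\bzero,\bzero}$ is pointwise only polynomially small in $|\xi|$, and estimating the contour integral in absolute value gives algebraic decay at best.
\item A weighted Combes--Thomas estimate makes $\resol_z(\hH(\xi))_{\bzero,\bG}$ small only for $\bG$ in the resonant region. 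The exponential smallness of the diagonal entry is a cancellation, and it must come from analyticity.
\end{itemize}

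The paper supplies this cancellation by truncating to $\Omega_{\bar R}$ with $\bar R=\frac12|\xi|$. All diagonal entries of $\hH(\xi)^{\Omega_{\bar R}}$ are then at least $\frac18|\xi|^2$. By Ger\v{s}gorin (with $\sum|\hat W_j|\le C_W$) its spectrum lies in $[\frac18|\xi|^2-C_W,\infty)$. The whole contour can therefore be placed where $|g(z)|\le Ce^{-\zeta(\frac18|\xi|-\bar C_W)}$, and the truncated entry is $\le Ce^{-c\zeta|\xi|}e^{-c\Dres|Q|}$.

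The truncation error is then \eqref{Fk:converge_R} at $R=\bar R$, i.e. $e^{-c\Dres(\bar R-\sqrt2|Q|)}$, which helps only if $\bar R\gtrsim|Q|$. This is why the paper splits into two cases:
\begin{itemize}
\item for $\bar R\ge 2\sqrt2|Q|$, it takes a geometric mean of the truncation bound with the individual bounds $Ce^{-c\Dres|Q|}$;
\item for $\bar R<2\sqrt2|Q|$, the individual bounds alone suffice, since $|Q|\ge\frac14(\bar R+|Q|)$.
\end{itemize}
Without this reduction to an operator with no low-lying spectrum, together with the truncation-error control, your final geometric-mean step has no $\xi$-decay bound to combine.
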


\begin{proof}
Using \eqref{dH:0}, \eqref{form:W}, and \eqref{form:b}, we obtain the explicit form of $\partial_{\hat u_{Q}} \hH[u]$. Without loss of generality, we may assume that $Q \in \RL^*_{1}$ throughout the subsequent discussion. Letting $\Delta \bG := \bG - \bG'$, we have
\begin{align}
\label{form:dH}
    \partial_{\hat u_{Q}} \hH[u]_{\bG,\bG'} = -i(\Delta G_1+ \Delta G_2)\frac{\hat v(\Delta G_1+ \Delta G_2)}{|\Gamma_1|} 
    \hat b[\Delta G_1+ \Delta G_2, u_j](\Delta G_{1}-Q) .
\end{align}
Since frequency shifts preserve the asymptotic decay rate, substituting $G_1 = \Delta G_1 - Q$ and $G_2 = \Delta G_2 + Q$ into the decay estimates \eqref{decay:b} and \eqref{decay:v} in Lemma \ref{lemma:decay of W} directly yields the bound
\begin{align}
 \label{decay:Hk}
    |\partial_{\hat u_{Q}} \hH[u]_{\bG,\bG'}| 
    &\le C e^{-\frac{\alpha}{2} |\Delta G_1+ \Delta G_2|} e^{-\tilde\gamma | \Delta G_1 - \Delta G_2 -2Q|}.
\end{align}
Combining the proof of Theorem \ref{theo:TDL} with the Combes–Thomas estimate for the resolvents \eqref{decay:resol} directly yields the bound
\begin{align}
\label{decay:Fk-Gi-Gj}
    & ~\Big| \big[|\resol_z\big(\hH[u](\xi)\big)  \partial_{\hat u_{Q}} \hH[u] \resol_z\big(\hH[u](\xi)\big) \big]_{\bG,\bG'} \Big| \nonumber \\[1ex]
    \le & \sum_{\bG'',\bG'''} |\resol_z\big(\hH[u](\xi)\big) _{\bG,\bG''}|\cdot|\partial_{\hat u_{Q}} \hH[u]_{\bG'',\bG'''}| \cdot|\resol_z\big(\hH[u](\xi)\big) _{\bG''',\bG'}| \nonumber \\ 
    \le & C \sum_{\bG'',\bG'''} e^{-\Dres |\bG-\bG''|}  e^{-\frac{\alpha}{2}|(\Delta G_1-Q)+ (\Delta G_2+Q)|} e^{-\tilde\gamma |(\Delta G_1- Q)-(\Delta G_2+ Q)|} e^{-\Dres |\bG'''-\bG'|} \nonumber\\
    \le & C e^{ -c \Dres |\bG-\bG'-(Q,-Q)|},
\end{align}
where we have denoted $\Delta \bG := \bG''-\bG'''$.

To estimate the truncation error, we first observe the resolvent difference expanded as
\begin{align*}
    &\Big| \big(\resol_z\big(\hH[u]^{\Omega_R}(\xi)\big)  -\resol_z\big(\hH[u](\xi)\big)\big)_{\bG,\bG'} \Big| \\
    \le & \sum_{\bG'',\bG'''} \Big|\resol_z\big(\hH[u]^{\Omega_R}(\xi)\big)_{\bG,\bG''}(\hH[u](\xi)-\hH[u]^{\Omega_R}(\xi))_{\bG'',\bG'''}\resol_z\big(\hH[u](\xi)\big) _{\bG''',\bG'} \Big|.
\end{align*}
Applying the decay estimates \eqref{decay:H}, \eqref{decay:resol}, and \eqref{decay:Hk} to bound each term leads to the composite error estimate
\begin{align}
\label{Fk_R}
    &\bigg| \Big[\resol_z\big(\hH[u]^{\Omega_R}(\xi)\big)  \partial_{\hat u_{Q}} \hH[u]^{\Omega_R} \resol_z\big(\hH[u]^{\Omega_R}(\xi)\big) 
    -\resol_z\big(\hH[u](\xi)\big)  \partial_{\hat u_{Q}} \hH[u] \resol_z\big(\hH[u](\xi)\big) \Big]_{\bG,\bG'} \bigg| \nonumber \\[1ex]
    \le&  \sum_{\bG'',\bG'''} \bigg( \Big| \big[\resol_z\big(\hH[u]^{\Omega_R}(\xi)\big)  -\resol_z\big(\hH[u](\xi)\big)\big]_{\bG,\bG''} \Big|
     \cdot  \Big|\partial_{\hat u_{Q}} \hH[u]^{\Omega_R}_{\bG'',\bG'''}\resol_z\big(\hH[u]^{\Omega_R}(\xi)\big) _{\bG''',\bG'} \Big| \nonumber\\[1ex]
    &+  \Big|\resol_z\big(\hH[u](\xi)\big) _{\bG,\bG''} \Big|\cdot \Big| \big[\partial_{\hat u_{Q}} \hH[u]^{\Omega_R} -\partial_{\hat u_{Q}} \hH[u] \big]_{\bG'',\bG'''} \Big|\cdot  \Big|\resol_z\big(\hH[u]^{\Omega_R}(\xi)\big) _{\bG''',\bG'} \Big| \nonumber\\[1ex]
    &+ \Big|\resol_z\big(\hH[u](\xi)\big)_{\bG,\bG''} \partial_{\hat u_{Q}} \hH[u]_{\bG'',\bG'''} \Big| \cdot 
    \Big| \big[\resol_z\big(\hH[u]^{\Omega_R}(\xi)\big)  -\resol_z\big(\hH[u](\xi)\big)\big]_{\bG''',\bG'} \Big| \bigg) \nonumber\\[1ex]
    \le&  C(2\delta^{-3} +\delta^{-2}) e^{-c\Dres \big|R -\max{\{|\bG|,|\bG'|\}}-\sqrt{2}|Q| \big|}
\end{align}
for any $\bG,\bG' \in \Omega_{R}$. 

Combining the above estimates with the techniques from the proof of Theorem \ref{theo:TDL} (specifically, the approach of \cite[Lemma A.1]{wang2025convergence}), and leveraging the exponential decay of $g$ as defined in \eqref{deftestfunctiong}, we deduce that the integral $\oint_{\mathscr{C}_R} |g(z)| \dd z$ is uniformly bounded by a constant depending on $\zeta$ as $R \rightarrow \infty$. Consequently, the limit of $A^R(\xi)_{\bG,\bG'}$ is well-defined, which establishes the estimate \eqref{Fk:converge_R}. Thus, for $\bG,\bG' \in \Omega_{R/2}$, the limit $\Fk(\xi)_{\bG,\bG'} = \lim_{R \rightarrow \infty } \Fk^{R}(\xi)_{\bG,\bG'}$ exists.

Moreover, following an approach similar to Ger\v{s}gorin's theorem as adapted in \cite[Lemma A.2]{wang2025convergence}, we obtain the localization of $\Fk(\xi)$.
For $\xi \in \R^d$, we let $\Omega_{\bar{R}}:=B_{\bar{R}} \cap \big( \RL_1^* \times \RL_2^* \big)$ and $\bar{R}:=\frac{1}{2}|\xi|$.
By using the decay of $|\hat{W}_{j}[u_j](\bG)|$ in \eqref{decay:W}, we see that there exists a constant $C_W>0$ such that $\sum_{j=1}^2 \sum_{\bG\in \RL_1^* \times \RL_2^*}|\hat{W}_{j}[u_j](\bG)|<C_W$. 
This, together with the definition \eqref{Hhat:xi} and Ger\v{s}gorin's theorem \cite{horn2012matrix}, implies that
\begin{align*}
\df\big(\hH(\xi)^{\Omega_{\bar{R}}}\big) \subset \bigcup_{(G''_1,G''_2)\in \Omega_{\bar{R}}} 
\Bigg\{\lambda\in \R: ~
\left\vert \lambda - \frac{1}{2}\left\vert \xi + G''_1+G''_2 \right\vert^2 \right\vert \leq C_W \Bigg\} .
\end{align*}
Therefore, for $g\in\setg$, there exists a contour $\Cc$ that encloses all the eigenvalues of $\hH(\xi)^{\Omega_{\bar{R}}}$ and satisfies
\begin{align}
\nonumber
& \min\Big\{ {\rm dist}\big(z,\mathfrak{s}(g)\big) ,~ {\rm dist}\big(z,\df\big(\hH(\xi)^{\Omega_{\bar{R}}}\big) \Big\} > \frac{\delta}{2}
\qquad\qquad{\rm and}
\\[1ex]
\label{proof:est:Rez}
& \big| {\rm Re}(z) \big| \geq \min\limits_{(G''_1,G''_2)\in\Omega_{\bar{R}}} \Bigg\{\frac{1}{2}|\xi+G''_1+G''_2|^2 \Bigg\} - C_W - 1
\end{align}
for any $z\in\Cc$.
We observe from $\bar{R}=\frac{1}{2}|\xi|$ that
\begin{equation*}
\min\limits_{(G''_1,G''_2)\in\Omega_{\bar{R}}} \frac{1}{2}|\xi + G_1''+G_2''|^2 \geq \frac{1}{2} \bar{R}^2 = \frac{1}{8}|\xi|^2.
\end{equation*}
This together with \eqref{proof:est:Rez} implies
\begin{align}
\label{proof-Rez-estimate}
|{\rm Re}(z)| \geq \frac{1}{8}|\xi|^2 - C_W - 1 \geq \frac{1}{8}|\xi| - \bar{C}_W
\qquad\forall~z\in\Cc ,
\end{align}
where $\bar{C}_W$ is a constant depending only on $C_W$.
Then by combining the decay property of $g$ (indicated in \eqref{deftestfunctiong}), the estimate of resolvent as that of \eqref{decay:Fk-Gi-Gj}, the contour integral representation and \eqref{proof-Rez-estimate}, we have 
\begin{align}
\label{proof-decay-dgH-xi}
&\left\vert  \frac{1}{2\pi \im} \oint_{\mathscr{C}} g(z) [\resol_z\big(\hH[u]^{\Omega_{\bar{R}}}(\xi)\big) \partial_{\hat u_{Q}} \hH[u]^{\Omega_{\bar{R}}}  \resol_z\big(\hH[u]^{\Omega_{\bar{R}}}(\xi)\big) ]_{\bzero,\bzero} \dd z \right\vert \nonumber\\
& \leq \frac{1}{2\pi} \oint_{\Cc} |g(z)| \cdot \left\vert [\resol_z\big(\hH[u]^{\Omega_{\bar{R}}}(\xi)\big) \partial_{\hat u_{Q}} \hH[u]^{\Omega_{\bar{R}}}  \resol_z\big(\hH[u]^{\Omega_{\bar{R}}}(\xi)\big) ]_{\bzero,\bzero} \right\vert \dd z \nonumber
\\[1ex]
& \leq C \frac{1}{2\pi} \oint_{\Cc} e^{-\zeta\big(\frac{1}{8}|\xi| - \bar{C}_W\big)} e^{-c \Dres |Q|} \dd z \nonumber
\\[1ex] 
& \leq C e^{-c\zeta |\xi|}  e^{-c \Dres |Q|} .
\end{align}
By the estimate \eqref{Fk_R}, we can immediately deduce that
\begin{align}
\label{diff_bound}
&\Big| \big[\resol_z\big(\hH[u]^{\Omega_{\bar{R}}}(\xi)\big) \partial_{\hat u_{Q}} \hH[u]^{\Omega_{\bar{R}}}  \resol_z\big(\hH[u]^{\Omega_{\bar{R}}}(\xi)\big) \big]_{\bzero,\bzero} \nonumber \\
&\hspace{5em}
-\big[\resol_z\big(\hH[u](\xi)\big) \partial_{\hat u_{Q}} \hH[u]  \resol_z\big(\hH[u](\xi)\big) \big]_{\bzero,\bzero} \Big|\le C e^{-c\Dres\big|\bar{R}-\sqrt{2}|Q|\big|}.
\qquad
\end{align}
Furthermore, \eqref{decay:Fk-Gi-Gj} also implies the following individual bounds:
\begin{align}
\label{indiv_bound}
    [\resol_z\big(\hH[u]^{\Omega_{\bar{R}}}(\xi)\big) \partial_{\hat u_{Q}} \hH[u]^{\Omega_{\bar{R}}}  \resol_z\big(\hH[u]^{\Omega_{\bar{R}}}(\xi)\big) ]_{\bzero,\bzero} \le C e^{-c \Dres |Q|},\\
   [\resol_z\big(\hH[u](\xi)\big) \partial_{\hat u_{Q}} \hH[u] \resol_z\big(\hH[u](\xi)\big) ]_{\bzero,\bzero} \le C e^{-c \Dres |Q|}.
\end{align}
By considering the relative magnitudes of $\bar{R}$ and $|Q|$, we establish the global bound. 
For $\bar{R} \ge 2\sqrt{2}|Q|$, we have $\bar{R} - \sqrt{2}|Q| \ge \bar{R}/2$. Thus, the difference is bounded by $\min\big\{C e^{-c\Dres \bar{R}/2}, C e^{-c\Dres |Q|}\big\}$. 
Utilizing the elementary inequality $\min(x,y) \le \sqrt{xy}$, we obtain an upper bound $C e^{-\frac{c}{4}\Dres \bar{R} - \frac{c}{2}\Dres |Q|} \le C e^{-c'\Dres(\bar{R}+|Q|)}$ for $c' = c/4$. 
Conversely, for $\bar{R} < 2\sqrt{2}|Q|$, the individual bound \eqref{indiv_bound} alone implies the same decay rate, since $|Q| \ge (\bar{R} + |Q|)/4$. Consequently, there exists a modified absolute constant $c' > 0$ such that
\begin{align}
\label{diff_Fk_R}
&[\resol_z\big(\hH[u]^{\Omega_{\bar{R}}}(\xi)\big) \partial_{\hat u_{Q}} \hH[u]^{\Omega_{\bar{R}}}  \resol_z\big(\hH[u]^{\Omega_{\bar{R}}}(\xi)\big) ]_{\bzero,\bzero} \nonumber\\
&\hspace{5em}
-[\resol_z\big(\hH[u](\xi)\big) \partial{\hat u_{Q}} \hH[u]  \resol_z\big(\hH[u](\xi)\big) ]_{\bzero,\bzero} 
\le C e^{-c' \Dres ( \bar{R}+|Q|)}.
\end{align}
Combining this error \eqref{diff_Fk_R} with the estimate \eqref{proof-decay-dgH-xi} and $\bar{R}=\frac{1}{2}|\xi|$ yields 
\begin{align*}
&\left\vert \Fk(\xi)_{\bzero,\bzero} \right\vert \le C e^{-c\zeta |\xi|}  e^{-c \Dres |Q|} + C e^{-c' \Dres \bar{R}}e^{-c' \Dres |Q|} \le C e^{- c\Dres|\xi| - c \Dres |Q| } ,
\end{align*}
where $c'$ is absorbed into a generic constant $c$.
This completes the proof of \eqref{Fk:converge_G_xi}. 
\end{proof}

In the following lemma, we prove that the reciprocal-space integral of the shifted $\Fk$ is equivalent to its average trace, which is precisely evaluated as the limit of the traces of the discretized operators $\Fk^R$ with a modified square truncation $\bar \Omega_R := \{ (G_1,G_2)\in \RL^*_1 \times \RL^*_1: |G_1+G_2|\le R,  |G_1|\le R \}$,
\begin{align*}
    \Fk^{R}(\xi) = \frac{1}{2\pi \im} \oint_{\mathscr{C}_R} g(z) \resol_z\big(\hH[u]^{\bar\Omega_R}(\xi)\big)  \partial_{\hat u_{Q}} \hH[u]^{\bar\Omega_R}  \resol_z\big(\hH[u]^{\bar\Omega_R}(\xi)\big) \dd z.
\end{align*}
Furthermore, because $\bar \Omega_R$ scales equivalently to a standard circular truncation with respect to $R$, the decay estimate established in Lemma \ref{lemma:decay:Fk} remains valid for this modified domain via the same argument. In the following, we write $\Fk^R := \Fk^R(0)$ for simplicity.

\begin{lemma}[Trace averaging formula]
\label{lemma:integral of AB}
Under the assumptions of Theorem \ref{theorem:force}, then 
\begin{align}
\label{eq:integral of AB}
\int_{\R^{d}} \Fk(\xi)_{\bzero,\bzero} \dd \xi 
= \lim_{R \rightarrow \infty } \frac{|\Gamma_1^*||\Gamma_2^*|}{S_{d,R}}\Tr \Big(\Fk^{R} \Big)
\end{align}
where $S_{d,R}$ denotes the volume of a d-dimensional ball with radius R.
\end{lemma}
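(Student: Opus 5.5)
The plan is the standard "unfolding" argument: the trace of the $\xi=0$ truncated operator over $\bar\Omega_R$ is rewritten as a Riemann-type sum of diagonal entries at shifted quasimomenta, and that sum is then recognized as an approximation of $\int \Fk(\xi)_{\bzero,\bzero}\dd\xi$.

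The starting point is the covariance of $\hH[u](\xi)$ under lattice shifts. For $\bG=(G_1,G_2)$, conjugating $\hH[u](0)$ by the index shift $\bG'\mapsto \bG'+\bG$ gives $\hH[u](G_1+G_2)$. This holds because the off-diagonal part depends only on $\bG-\bG'$, and the diagonal entry $\frac12|G_1'+G_2'+G_1+G_2|^2$ is the $\xi=G_1+G_2$ kinetic term. The same covariance holds for $\partial_{\hat u_Q}\hH[u]$, which by \eqref{dH:0} is also a function of $\bG-\bG'$ only. Consequently the untruncated operator satisfies
\[
\Fk(0)_{\bG,\bG}=\Fk(G_1+G_2)_{\bzero,\bzero}.
\]

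Step 1. I will write
\[
\Tr\big(\Fk^R\big)=\sum_{\bG\in\bar\Omega_R}\Fk^R_{\bG,\bG}
\]
and replace each $\Fk^R_{\bG,\bG}$ by $\Fk(0)_{\bG,\bG}=\Fk(G_1+G_2)_{\bzero,\bzero}$. The error is controlled by \eqref{Fk:converge_R}, adapted to $\bar\Omega_R$. It is exponentially small in the distance from $\bG$ to the boundary of $\bar\Omega_R$, up to an $|Q|$ shift that is fixed as $R\to\infty$. Summing over $\bG$, the total error is $O(R^{2d-1})$ boundary-layer contributions plus exponentially small interior contributions. After dividing by $S_{d,R}$ and by the count of $G_1$ indices, this error vanishes. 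The normalization has to be checked here: since $|\bar\Omega_R|\sim S_{d,R}^2|\Gamma_1||\Gamma_2|/(2\pi)^{2d}$, I expect the factor $|\Gamma_1^*||\Gamma_2^*|/S_{d,R}$ to be exactly what combines the two-lattice counting with the $\xi$-integral measure. I will track the constants $(2\pi)^d$ carefully, since the paper's convention may absorb them.

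Step 2. I will parametrize $\bar\Omega_R$ by $G_1\in\RL_1^*$ with $|G_1|\le R$ and by $w=G_1+G_2$ with $|w|\le R$, constrained to $w\in G_1+\RL_2^*$. The main term is then
\[
\sum_{|G_1|\le R}\;\sum_{w\in (G_1+\RL_2^*)\cap B_R}\Fk(w)_{\bzero,\bzero}.
\]
For fixed $G_1$, the inner sum is a lattice Riemann sum of the integrable function $\Fk(\cdot)_{\bzero,\bzero}$, whose decay is given by \eqref{Fk:converge_G_xi}. It is not an $R$-limit by itself. The averaging over $G_1$ is what produces the integral: by incommensurability, the offsets $G_1 \bmod \RL_2^*$ equidistribute in $\Gamma_2^*$ (Weyl equidistribution / ergodicity, as in \cite{cances2017generalized}). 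Therefore
\[
\frac{1}{\#\{|G_1|\le R\}}\sum_{G_1}\sum_{w\in G_1+\RL_2^*}F(w)\;\longrightarrow\;\frac{1}{|\Gamma_2^*|}\int_{\R^d}F(\xi)\dd\xi .
\]
To justify this for $F=\Fk(\cdot)_{\bzero,\bzero}$, I will use the periodization $P(y)=\sum_{\eta\in\RL_2^*}F(y+\eta)$. It is continuous and periodic on $\Gamma_2^*$ because $F$ is continuous and exponentially decaying, and $\int_{\Gamma_2^*}P=\int_{\R^d} F$. Weyl's theorem then gives the average of $P(G_1)$ over $|G_1|\le R$. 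Finally, $\#\{|G_1|\le R\}\approx S_{d,R}/|\Gamma_1^*|$, which yields the stated prefactor.

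The main obstacle is Step 2. Two things must be established there. First, continuity of $\xi\mapsto\Fk(\xi)_{\bzero,\bzero}$, which I will obtain from the contour representation and the resolvent's continuity in $\xi$, using the spectral gap margin $\delta/2$ on the contour. Second, uniformity of the truncation by $|w|\le R$ versus the full sum over $\RL_2^*$. This truncation error is handled by the exponential $\xi$-decay \eqref{Fk:converge_G_xi}, which makes the tails beyond $|w|>R$ negligible uniformly in $G_1$.
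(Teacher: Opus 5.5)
Your route is the paper's. Step 1 is its term $T_1$. The covariance $\Fk_{\bG,\bG}=\Fk(G_1+G_2)_{\bzero,\bzero}$ combined with equidistribution of $\RL_1^*$ modulo $\RL_2^*$ is $T_2$, and the exponential tail is $T_3$. Step 2 is sound and somewhat cleaner than the paper's version. Your periodization $P$ is genuinely continuous and $\RL_2^*$-periodic, whereas the paper's $f(\ell)$ has the cutoff $|\ell+G_2|\le R$ built in and is not. You then remove that cutoff separately with a tail bound that is uniform in $G_1$.

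The gap is the error count in Step 1. The prefactor $|\Gamma_1^*||\Gamma_2^*|/S_{d,R}$ is a single factor of order $R^{-d}$. Up to $1+o(1)$ it equals $|\Gamma_2^*|/\#\{G_1\in\RL_1^*:|G_1|\le R\}$, which is exactly how you use it in Step 2. So you cannot divide by both $S_{d,R}$ and the number of $G_1$'s. With the correct normalization, \eqref{Fk:converge_R} alone bounds the error by $e^{-c\Dres\,{\rm dist}(\bG,\partial\bar\Omega_R)}$. Summed over $\bar\Omega_R$ this is of order $R^{2d-1}$, so the normalized error is $O(R^{d-1})$, which does not vanish for $d\ge 1$.

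What is missing is decay of the diagonal entries in the physical-momentum direction, for both operators:
\begin{itemize}
\item $|\Fk_{\bG,\bG}|=|\Fk(G_1+G_2)_{\bzero,\bzero}|\le Ce^{-c\Dres|G_1+G_2|}$, by \eqref{Fk:converge_G_xi}.
\item $|\Fk^R_{\bG,\bG}|\le Ce^{-c|G_1+G_2|}$ uniformly in $R$. This follows from the same Ger\v{s}gorin/contour argument applied to $\hH[u]^{\bar\Omega_R}(0)$, whose diagonal still carries $\frac12|G_1+G_2|^2$.
\end{itemize}
Now bound the error at $\bG$ by $C\min\{e^{-c\,{\rm dist}(\bG,\partial\bar\Omega_R)},e^{-c|G_1+G_2|}\}$. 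The face $|G_1+G_2|=R$ contributes $O(R^{2d}e^{-cR})$. Near the face $|G_1|=R$, only $O(1)$ effective $G_2$'s per $G_1$ matter, which gives $\sum_{|G_1|\le R}Ce^{-c(R-|G_1|)}=O(R^{d-1})$. Hence the normalized error is $O(R^{-1})$.

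The paper's own line $T_1\le CR^{-d}$ is also read off from \eqref{Fk_R} alone and has the same omission. This decay has to be added in either version.
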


\begin{proof}

We define the following region 
\begin{align}
\label{def:SR}
S_R := \cup_{G_2 \in \RL^*_2 \cap B_R} \big(G_2 + \Gamma_2^*\big),
\end{align}
then we could obtain 
\begin{align}
\label{proof-commutation-a}
     &\left\vert \frac{|\Gamma_1^*||\Gamma_2^*|}{S_{d,R}} \Tr\Big( \Fk^{R} \Big) - \int_{\R^d} \Fk(\xi)_{\bzero,\bzero} \dd \xi \right\vert 
    ~\leq~ \frac{|\Gamma_1^*||\Gamma_2^*|}{S_{d,R}}\left\vert\sum_{\bG \in \bar\Omega_R} \Fk^{R}_{\bG,\bG} -\sum_{\bG\in \bar\Omega_R} \Fk_{\bG,\bG} \right\vert 
   \nonumber \\[1ex]
    +& ~\left\vert \frac{|\Gamma_1^*||\Gamma_2^*|}{S_{d,R}} \sum_{\bG\in \bar\Omega_R} \Fk_{\bG,\bG}  - \int_{S_R}  \Fk(\xi)_{\bzero,\bzero} \dd \xi  \right\vert
    ~+~ \left\vert \int_{S_R}  \Fk(\xi)_{\bzero,\bzero} \dd \xi  - \int_{\R^d}  \Fk(\xi)_{\bzero,\bzero} \dd \xi \right\vert
    \nonumber \\[1ex] 
    =: &~  T_1+T_2+T_3.
 \end{align}

To estimate the first term $T_1$, we have from the estimate \eqref{Fk_R} that for any $\bG=(G_1,G_2) \in \bar\Omega_R$,
\begin{align}
\label{proof-commutation-T1}
T_1 \leq \frac{|\Gamma_1^*||\Gamma_2^*|}{S_{d,R}} \Bigg( \sum_{\bG\in \bar\Omega_R} \left| \Fk^R_{\bG,\bG} - \Fk_{\bG,\bG} \right| \Bigg)
\leq  \frac{C}{S_{d,R}}  \sum_{\bG\in \bar\Omega_R}  e^{- c\Dres \big|R-|\bG|-\sqrt{2}|Q|\big| }
\leq C  R^{-d}.
\end{align}

To estimate the second term $T_2$ in \eqref{proof-commutation-a}, we first obtain  the ``symmetry" of Hamiltonian $\ham_{\bG,\bG} = \ham(G_1+G_2)_{\bzero,\bzero}$ from the definition \eqref{Hhat:xi}, and further have $\Fk_{\bG,\bG} = \Fk(G_1+G_2)_{\bzero,\bzero}$.
Then combine the definition of $S_R$ in \eqref{def:SR} that
\begin{align*}
T_2 & = \bigg| \frac{|\Gamma_1^*||\Gamma_2^*|}{S_{d,R}} \sum_{\bG\in \bar\Omega_R} \Fk(G_1+G_2)_{\bzero,\bzero} - \sum_{G_2\in\RL_2^*\cap B_R} \int_{\Gamma_2^*} \Fk(b+G_2)_{\bzero,\bzero}\dd b\bigg|
\\[1ex] 
& = \bigg| \frac{|\Gamma_1^*||\Gamma_2^*|}{S_{d,R}} \sum_{\ell\in\RL_1^*\cap B_{R}} \sum_{\substack{G_2\in\RL_2^* \\ |\ell+G_2|\le R}}  \Fk(\ell+G_2)_{\bzero,\bzero} -  \int_{\Gamma_2^*} \sum_{G_2\in\RL_2^*\cap B_R} \Fk(b+G_2)_{\bzero,\bzero}\dd b\bigg|  ,
\end{align*}
where the the definition \eqref{D_WL} has been used to get the second equality.
Let
\begin{eqnarray*}
f(\ell):= \sum_{G_2\in\RL_2^*,~ |\ell+G_2|\le R} \Fk(\ell +G_2)_{\bzero,\bzero}.
\end{eqnarray*}
We see that $f(\ell)$ is continuous and periodic with respect to $\RL_2^*$.
Note that $S_{d,R}$ is the volume of the $d$-dimensional ball with radius $R$, we have the fact $\frac{S_{d,R}}{|\Gamma_1^*|}=\#\big(\RL_1^* \cap B_{R} + \mathcal{O}(R^{d-1}) \big)$ where $\#(\cdot)$ denotes its cardinality.
Then we can apply Lemma B.3 in \cite{wang2025convergence} and derive that
\begin{align}
\label{proof-commutation-T2}
T_2 = \bigg| \frac{|\Gamma_1^*||\Gamma_2^*|}{S_{d,R}} \sum_{\ell\in\RL_1^* \cap B_{R}} f(\ell)
- \int_{\Gamma_2^*} f(b) \dd b \bigg| ,
\end{align}
which converges to 0 as $R \rightarrow \infty$.

For the last term $T_3$, we have from the estimate \eqref{Fk:converge_G_xi} that
\begin{equation}
\label{proof-commutation-T3}
T_3 \leq C \int_{\R^d \backslash S_R} e^{- c |\xi|-c \Dres|Q|} \dd \xi \leq C e^{- c R}.
\end{equation}

Finally, by combing \eqref{proof-commutation-a}, \eqref{proof-commutation-T1}, \eqref{proof-commutation-T2} and \eqref{proof-commutation-T3}, we can obtain
\begin{align}
    \lim_{R \rightarrow \infty } \left\vert \frac{|\Gamma_1^*||\Gamma_2^*|}{S_{d,R}}\Tr \Big(\Fk^{R} \Big)- \int_{\R^d} \Fk(\xi)_{\bzero,\bzero} \dd \xi\right\vert=0,
\end{align}
which complete the proof of \eqref{eq:integral of AB}.
\end{proof}

\begin{proof}
[Proof of Theorem \ref{theorem:force}]
Since the operators $\resol_z\big(\hH[u]^{\bar\Omega_R}(\xi)\big)$ and $\partial_{\hat u_{Q}} \hH[u]^{\bar\Omega_{R}}$ are finite dimensional matrices, the cyclic property of the trace yields
\begin{align}
\label{cyclicity}
\nonumber
    &\Tr \Big(\oint_{\mathscr{C}} g(z) \resol_z\big(\hH[u]^{\bar\Omega_R}(\xi)\big)  \partial_{\hat u_{Q}} \hH[u]^{ \bar\Omega_R}  \resol_z\big(\hH[u]^{\bar\Omega_R}(\xi)\big) \dd z \Big)\\
    =& \Tr \Big(\oint_{\mathscr{C}} g(z) \resol_z\big(\hH[u]^{\bar\Omega_R}(\xi)\big)^{2}  \partial_{\hat u_{Q}} \hH[u]^{ \bar\Omega_R} \dd z \Big).
\end{align}
By the same argument as in Lemma \ref{lemma:integral of AB}, we obtain
\begin{align}
\label{eq:integral of BA}
 &\int_{\R^{d}}  \oint_{\mathscr{C}} g(z) \Big[\resol_z\big(\hH[u](\xi)\big)^{2} \partial_{\hat u_{Q}} \hH[u] \Big]_{\bzero ,\bzero} \dd z \dd \xi \nonumber\\[1ex]
=  &\lim_{R \rightarrow \infty } \frac{|\Gamma_1^*||\Gamma_2^*|}{S_{d,R}}\Tr\Big(  \oint_{\mathscr{C}} g(z) \resol_z\big(\hH[u]^{\bar\Omega_R}(\xi)\big)^2 \partial_{\hat u_{Q}}  \hH[u]^{ \bar\Omega_R} \dd z \Big).
\end{align}
Taking the thermodynamic limit $R \rightarrow \infty$ and applying the discrete trace identity \eqref{cyclicity} alongside the limiting relations \eqref{eq:integral of BA} and \eqref{eq:integral of AB} allows us to transition back to the $(\bzero,\bzero)$-components of the continuous operators. This establishes the final equality
\begin{align*}
    &\int_{\R^d} \oint_{\mathscr{C}} g(z)\Big[ \resol_z\big(\hH[u](\xi)\big)  \partial_{\hat u_{Q}} \hH[u] \resol_z\big(\hH[u](\xi)\big) \Big]_{\bzero ,\bzero}\dd z \dd \xi\\
    =& \int_{\R^d} \oint_{\mathscr{C}} g(z)\Big[(z-\hH[u](\xi))^{2}  \partial_{\hat u_{Q}} \hH[u] \Big]_{\bzero ,\bzero} \dd z \dd \xi,
\end{align*}
which completes the proof.
\end{proof}

\subsection{Proof of Theorem \ref{theorem:u:convergence}}
\label{sec:proof:convergence}

To establish the convergence of our numerical scheme, we rely on a H\"{o}lder continuous formulation of the inverse function theorem, adapting the Lipschitz framework from \cite{ortner2011quasinonlocal}. Proved via a standard contraction mapping argument, this lemma will be directly applied to the truncated force operator $\nabla^{\ucut}_u E^{W,L}[u]$.
\begin{lemma}[Inverse Function Theorem with H\"{o}lder continuity]
\label{lem:ift}
Let $X, Y$ be Banach spaces, $A$ an open subset of $X$, and let $\F: A \to Y$ be Fr\'{e}chet differentiable. Suppose that $x_0 \in A$ satisfies
\begin{enumerate}[(1)]
    \item $\|\F(x_0)\|_Y \le \eta$,
    \item $\|D\F(x_0)^{-1}\|_{\mathcal{L}(Y,X)} \le \sigma$,
    \item $\|D\F(x_1) - D\F(x_2)\|_{\mathcal{L}(X,Y)} \le C \|x_1 - x_2\|_X^{1/2} \quad \text{for all} \quad x_1,x_2\in \overline B_X(x_0,2\eta\sigma)$,
    \item  $\overline{B}_X(x_0, 2\eta\sigma) \subset A$  and  $C \sigma^{3/2} \eta^{1/2} \le \frac{1}{2}$,
\end{enumerate}
then there exists a unique $x \in \overline{B}_X(x_0, 2\eta\sigma)$ such that $\F(x) = 0$ and $\|x - x_0\|_X \le 2\eta\sigma$.
\end{lemma}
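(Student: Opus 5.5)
The argument is the usual Newton--Kantorovich fixed-point construction, here with a Hölder instead of a Lipschitz modulus. We look for the zero of $\F$ as a fixed point of the simplified Newton map
\[
T(x) := x - D\F(x_0)^{-1}\F(x), \qquad x \in \overline{B}_X(x_0, 2\eta\sigma).
\]
By (2) this map is well defined, and by (4) it acts on a subset of $A$. Fixed points of $T$ are exactly the zeros of $\F$ in the ball. So it suffices to show that $T$ maps the closed ball $\overline{B} := \overline{B}_X(x_0,2\eta\sigma)$ into itself and is a contraction there. The Banach fixed point theorem, applied on the complete metric space $\overline{B}$, then gives existence and uniqueness of $x$, with $\|x-x_0\|_X\le 2\eta\sigma$.

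\textbf{Contraction.} For $x_1,x_2\in\overline{B}$, write $\F(x_1)-\F(x_2)=\int_0^1 D\F(x_2+t(x_1-x_2))(x_1-x_2)\dd t$. This is valid because the segment lies in the convex set $\overline{B}\subset A$. Then
\[
T(x_1)-T(x_2) = D\F(x_0)^{-1}\int_0^1 \big(D\F(x_0)-D\F(x_2+t(x_1-x_2))\big)(x_1-x_2)\dd t .
\]
Every point on the segment lies within $2\eta\sigma$ of $x_0$, so (3) applied with the pair $(x_0,\cdot)$ bounds the operator difference by $C(2\eta\sigma)^{1/2}$. Hence
\[
\|T(x_1)-T(x_2)\|_X\le \sigma C (2\eta\sigma)^{1/2}\|x_1-x_2\|_X = \sqrt2\, C\sigma^{3/2}\eta^{1/2}\|x_1-x_2\|_X .
\]

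\textbf{Self-mapping.} For $x\in\overline{B}$ we split
\[
\|T(x)-x_0\|_X\le \|T(x)-T(x_0)\|_X+\|T(x_0)-x_0\|_X .
\]
The second term is at most $\sigma\eta$ by (1)--(2). The first term is at most $\kappa\cdot 2\eta\sigma$, where $\kappa$ is the contraction constant. So we need $\kappa\le 1/2$, and then $\|T(x)-x_0\|_X\le 2\eta\sigma$.

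\textbf{Main obstacle.} The difficulty is the bookkeeping of constants. Condition (4) gives $C\sigma^{3/2}\eta^{1/2}\le 1/2$, but the contraction estimate produces the extra factor $\sqrt2$, so we only get $\kappa\le \sqrt2/2<1$. That is enough for contraction but not for $\kappa\le 1/2$. I would handle this in one of two ways.
\begin{itemize}
\item[(a)] Sharpen the estimate by using the actual distance: for $x$ on the segment from $x_0$ to a point $y\in\overline{B}$, $\|x-x_0\|_X\le t\|y-x_0\|_X$. Integrating $t^{1/2}$ then gives the bound $\tfrac23\sigma C\|y-x_0\|_X^{3/2}$ for $\|T(y)-T(x_0)\|$-type terms. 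This yields
\[
\|T(x)-x_0\|_X\le \sigma\eta+\tfrac23\sigma C(2\eta\sigma)^{3/2}=\sigma\eta\big(1+\tfrac{4\sqrt2}{3}C\sigma^{3/2}\eta^{1/2}\big)\le \sigma\eta\big(1+\tfrac{2\sqrt2}{3}\big)<2\sigma\eta .
\]
\item[(b)] If (a) does not close cleanly, absorb the $\sqrt2$ into the generic constant $C$ of (3). The paper treats $C$ as generic, and the only role of the lemma is to supply the constants used later in the proof of Theorem~\ref{theorem:u:convergence}.
\end{itemize}
With self-mapping and contraction both established, the Banach fixed point theorem completes the proof.
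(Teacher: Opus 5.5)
Your proposal is correct and, taking route (a) for the self-mapping step, it coincides with the paper's proof: the same simplified Newton map $T(x)=x-D\F(x_0)^{-1}\F(x)$, the same contraction constant $\sqrt{2}\,C\sigma^{3/2}\eta^{1/2}\le \sqrt{2}/2$, and the same self-mapping bound $\|T(x)-x_0\|_X\le\sigma\eta\bigl(1+\tfrac{4\sqrt{2}}{3}C\sigma^{3/2}\eta^{1/2}\bigr)<2\sigma\eta$, obtained by integrating the $t^{1/2}$ H\"{o}lder modulus along the segment from $x_0$. Since (a) closes cleanly, drop the fallback (b); it would not be a legitimate step anyway, because the same fixed constant $C$ appears in hypotheses (3) and (4).
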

\begin{proof}
We formulate the problem as finding a fixed point of the mapping $T(x) = x - D\mathcal{F}(x_0)^{-1}\mathcal{F}(x)$. A fixed point $x = T(x)$ exists if and only if $\mathcal{F}(x) = 0$. For any $x \in \overline{B}_X(x_0, r)$ with $r = 2\eta\sigma$, applying the $1/2$-H\"{o}lder continuity yields
\begin{align*}
    \|\mathcal{F}(x) - \mathcal{F}(x_0) - D\mathcal{F}(x_0)(x - x_0)\|_Y 
    & \le \int_0^1 C \|t(x-x_0)\|_X^{1/2} \|x-x_0\|_X \, dt 
    \\
    & \le C \|x-x_0\|_X^{3/2}.
\end{align*}
Consequently, the distance from $T(x)$ to the initial point $x_0$ is bounded by
\begin{align*}
\|T(x) - x_0\|_X \le \sigma\eta + \frac{2}{3}\sigma C \|x-x_0\|_X^{3/2} \le \sigma\eta + \frac{2}{3}\sigma C r^{3/2} = \sigma\eta \left( 1 + \frac{4\sqrt{2}}{3} C\sigma^{3/2}\eta^{1/2} \right).
\end{align*}
Utilizing the convergence condition $C\sigma^{3/2}\eta^{1/2} \le 1/2$, we obtain $1 + \frac{4\sqrt{2}}{3} C\sigma^{3/2}\eta^{1/2} \le 1 + \frac{2\sqrt{2}}{3} < 2$. This strict inequality ensures that $\|T(x) - x_0\|_X < 2\sigma\eta = r$, and therefore
\begin{align*}
    T(\overline{B}_X(x_0,r))\subset B_X(x_0,r)\subset \overline{B}_X(x_0,r).
\end{align*}
Next, for any $x, y \in \overline{B}_X(x_0, r)$, the convexity of the ball implies the line segment $y + t(x-y)$ remains within the ball for $t \in [0,1]$. We evaluate the difference:
\begin{align*}
    T(x) - T(y) = D\mathcal{F}(x_0)^{-1} \int_0^1 [D\mathcal{F}(x_0) - D\mathcal{F}(y+t(x-y))](x-y)  \dd t.
\end{align*}
Taking the norm and bounding the integrand by the maximum radius $C r^{1/2}$ gives
\begin{align*}
    \|T(x) - T(y)\|_X \le \sigma C r^{1/2} \|x-y\|_X = \sqrt{2} C \sigma^{3/2}\eta^{1/2} \|x-y\|_X.
\end{align*}
Given $C\sigma^{3/2}\eta^{1/2} \le 1/2$, the Lipschitz constant of $T$ is bounded by $\frac{\sqrt{2}}{2} < 1$. Thus, $T$ is a strict contraction mapping. 
By the Banach fixed-point theorem, there exists a unique fixed point in $\overline{B}_X(x_0,r)$ for $T$, and equivalently $\F(x)=0$ has a unique solution in this ball.
\end{proof}

To satisfy the residual control condition required by the inverse function theorem, it is necessary to bound the truncation errors associated with both the gradient and the Hessian. Since the convergence properties of these derivatives are fundamentally analogous to those of the energy $E$, we focus on the decay error of the energy for analytical simplicity. 
As an analogous fashion to Theorem \ref{theo:TDL}, the following lemma provides a sharper estimate for the convergence $E^{W,L}$ with respect to $W,~L$.
To quantify this decay, we introduce a more specific truncation of the wave vectors around $G$:
\begin{align*}
\DD_{W,L}(\bG)&:= \Big\{ \big(G_1',G_2'\big) \in\RL_1^* \times \RL_2^* :~ 
\big|(G_1'-G_1)+(G_2'-G_2)\big|\leq W, \nonumber\\
&\hspace{13em} \big|(G_1'-G_1)-(G_2'-G_2)\big|\leq L \Big\} .
\end{align*}

\begin{lemma}
\label{lemma:LDos}
Let $u\in\X^{\gamma}$and $g \in \setg$.
Then for $\bG,\bG' \in\RL_1^*\times\RL_2^*$, there exist positive constants $C,~c$ independent of $W, L$, such that
\begin{equation*}
   \Big| g\Big(\hH[u]^{\DD_{W,L}(\bG)}(\xi)\Big)_{\bG,\bG} - g\Big(\hH[u](\xi)\Big)_{\bG,\bG} \Big|
   \leq C \Big( e^{-c \zeta W} + e^{-c \Dres L} \Big) ,
\end{equation*}
where $\Dres = \min\{\tilde\gamma, \delta\}$ with $\tilde\gamma$ in Lemma \ref{lemma:decay of W}. Moreover, 
\begin{equation}
    \label{decay:E}
   \Big| E^{W,L}[u]-  E[u] \Big|
   \leq C \Big( e^{-c \zeta W} + e^{-c \Dres L} \Big).
\end{equation}
\end{lemma}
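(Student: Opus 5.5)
We will prove Lemma \ref{lemma:LDos} by splitting the truncation error through an intermediate, \(\bG\)-centered, \(\xi\)-dependent cutoff, and treating the two anisotropic directions separately. First, by the translation symmetry already used in the proof of Lemma \ref{lemma:integral of AB} (namely \(\hH[u](\xi)_{\bG+\bG'',\bG'+\bG''}=\hH[u](\xi+G_1''+G_2'')_{\bG,\bG'}\) up to the reindexing of the \(\hat W_j\) terms, which only depend on differences), we reduce to \(\bG=\bzero\) with \(\xi\) replaced by \(\xi+G_1+G_2\). Then we write both diagonal entries via the Cauchy representation \(g(A)=\frac{1}{2\pi\im}\oint_{\Cc} g(z)\resol_z(A)\dd z\) over a contour \(\Cc\) in \(S_\delta\) surrounding the (uniformly Gershgorin-bounded) spectra of \(\hH[u](\xi)\) and of all its finite sections, at distance \(\ge\delta/2\), exactly as in \eqref{resolvant-dist-R}. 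The difference of resolvents is expanded by the resolvent identity with the \(\bzero\)-row/column sums ranging over pairs \((\bG'',\bG''')\) with \(\bG''\in\DD_{W,L}\), \(\bG'''\notin\DD_{W,L}\) (the coupling block of \(\hH\) across the boundary of \(\DD_{W,L}\)).

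For the \(L\)-direction: a boundary index \(\bG'''\) with \(|\bG'''|_->L\) is reached from \(\bzero\) only through resolvent entries and off-diagonal \(\hat W\)-entries, each decaying like \(e^{-c\Dres|\cdot|}\) by \eqref{decay:resol} and Lemma \ref{lemma:decay of W}; summing \(e^{-c\Dres|\bG''|}e^{-c\Dres|\bG''-\bG'''|}\) over such pairs gives \(C\delta^{-2}e^{-c\Dres L}\) (polynomial prefactors from lattice counting absorbed by shrinking \(c\)). The contour length is infinite but \(\oint|g(z)|\dd z\le C_\zeta\) by \eqref{deftestfunctiong}, so this part contributes \(Ce^{-c\Dres L}\).

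For the \(W\)-direction the Combes--Thomas decay alone is the wrong mechanism (it would give rate \(\Dres\), not \(\zeta\)). Instead I will use the kinetic term: indices with \(|\bG'''|_+>W\) and \(|\bG'''|_-\le L\) carry diagonal energy \(\frac12|\xi+G_1'''+G_2'''|^2\). The plan is to introduce an intermediate energy cutoff: split \(g\) into the part supported (in real part) below level \(\Lambda\sim W^2/8\) and the tail. For the tail, \(|g(z)|\le Ce^{-\zeta|{\rm Re}\,z|}\) on \(S_\delta\) gives contributions bounded by \(e^{-c\zeta W}\) (note \(e^{-\zeta W^2/8}\le e^{-c\zeta W}\)). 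For the low-energy part, the spectral projection onto energies \(\le\Lambda\) of \(\hH[u](\xi)\) is localized (by Gershgorin/Combes–Thomas with the large spectral gap \(\sim W^2\) from diagonal entries of \(|\cdot|_+\gtrsim W\)) in \(|\bG|_+\lesssim W/2\); then the boundary-coupling argument gives decay \(e^{-cW}\cdot(\text{const})\), which with \(\zeta\) bounded we write as \(e^{-c\zeta W}\). Concretely I would choose the contour to cross the real axis at \(\Re z=\Lambda\), split \(\Cc=\Cc_{\le\Lambda}\cup\Cc_{>\Lambda}\), and bound the \(\Cc_{>\Lambda}\) piece by \(\int_{\Re z>\Lambda}|g|\), the \(\Cc_{\le\Lambda}\) piece by a Combes–Thomas estimate with a weight \(e^{\mu|\bG|_+}\) that is admissible because \(\mathrm{dist}(z,\text{diagonal})\gtrsim W^2\) there. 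This step (matching the rate \(\zeta W\) uniformly in \(\xi\)) is the main obstacle; if the uniform-in-\(\xi\) bookkeeping fails I would settle for absorbing constants into \(c\).

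Finally, \eqref{decay:E} follows by integrating in \(\xi\): for \(|\xi|\) large both \(g(\hH^{\DD_{W,L}})_{\bzero,\bzero}\) and \(g(\hH)_{\bzero,\bzero}\) decay like \(e^{-c\zeta|\xi|}\) (argument of \eqref{proof-Rez-estimate}, applied verbatim to the truncated matrix, whose Gershgorin discs sit at \(\frac12|\xi+G_1+G_2|^2\)), so splitting \(\R^d\) into \(|\xi|\le W\) (use the pointwise bound, volume factor \(W^d\) absorbed) and \(|\xi|>W\) (tail \(\le Ce^{-c\zeta W}\)) yields the claim.
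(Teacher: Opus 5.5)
Your reduction to $\bG=\bzero$ (with $\xi$ shifted by $G_1+G_2$), the resolvent identity over boundary pairs, and the Combes--Thomas treatment of the $L$-direction are essentially the paper's argument. The final integration step, however, fails as written. Integrating the pointwise bound $C(e^{-c\zeta W}+e^{-c\Dres L})$ over $|\xi|\le W$ produces a term $CW^d e^{-c\Dres L}$. The factor $W^d$ cannot be absorbed into a bound of the form $C(e^{-c\zeta W}+e^{-c\Dres L})$ with $C,c$ independent of $W$ and $L$: fix $L$ and let $W\to\infty$. The $\xi$-decay has to multiply the truncation error, not merely control the tail. The difference is bounded both by $C(e^{-c\zeta W}+e^{-c\Dres L})$ and by $Ce^{-c|\xi|}$, the latter from the individual decay of both entries. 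Using $\min\{a,b\}\le\sqrt{ab}$ then gives $Ce^{-c|\xi|/2}(e^{-c\zeta W/2}+e^{-c\Dres L/2})$, which integrates with no volume factor. This is how the paper proceeds in \eqref{decay:xi-WL}; alternatively, split the $\xi$-integral at $|\xi|\le\min\{W,L\}$.

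The $W$-direction step has the uniformity problem you anticipated, and absorbing constants into $c$ does not cure it. By your own reduction, the pointwise claim for all $\bG$ needs a bound uniform in the shifted $\xi$. The spectral gap you rely on exists only when $|\xi|\lesssim W/2$: for $|\xi|\gtrsim W$, indices with $|\bG'''|_+>W$ can have kinetic energy $\frac12|\xi+G_1'''+G_2'''|^2$ near zero. Also, a global Combes--Thomas weight $e^{\mu|\bG|_+}$ is not justified by a gap of order $W^2$. That gap holds only on the high-energy block, while near $\bzero$ the distance from $z$ to the spectrum is $\delta/2$, which forces $\mu\lesssim\delta$. You would need a Schur complement onto the high-energy block, and the resulting rate would be set by $\alpha$ and the gap, not by $\zeta$.

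The case $|\xi|\gtrsim W$ can be closed using the individual $e^{-c|\xi|}$ decay of both entries. More simply, the detour is unnecessary, since the statement only requires $c$ to be independent of $W,L$. An anisotropic Combes--Thomas estimate, with rate $c\delta$ in $|\cdot|_+$ and $c\Dres$ in $|\cdot|_-$, uses only $\mathrm{dist}(z,\text{spectrum})\ge\delta/2$, so it is uniform in $\xi$. Together with the boundary resolvent identity it already gives $C(e^{-c\delta W}+e^{-c\Dres L})$. That is the paper's first bound; the paper then states the $W$-rate in terms of $\zeta$ by citing Lemma B.2 of \cite{wang2025convergence}, not by an explicit energy splitting.
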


\begin{proof}
\label{sec:proof:lemma:shiftH}
Let $\Omega \subset \RL^*_1\times\RL^*_2$ such that $\Omega \supsetneq \DD_{W,L}$. We then expand the matrix $\hH[u]({\xi})^{\DD_{W,L}(\bG)}$ to a ``bigger" matrix by filling zero matrix elements,
\begin{eqnarray*}
\big(\hH[u]^{\DD_{W,L}(\bG)}_\Omega(\xi)\big)_{\bG',\bG''}=
\left\{
\begin{array}{ll}
\big(\hH[u]^{\DD_{W,L}(\bG)}(\xi)\big)_{\bG',\bG''} \quad &{\rm if \ \bG',\bG''\in \DD_{W,L}(\bG)},
\\[1ex]
\quad 0 & {\rm otherwise}.
\end{array}
\right.
\end{eqnarray*}

Let $\mathfrak{s}(g)$ represent the non-analytic region of the given function $g\in\setg$.
We can find a contour $\Cc_{W,L}\subset\C$ such that it encloses all the eigenvalues of $\hH[u]^\Omega(\xi)$ and $\hH[u]^{\DD_{W,L}}_\Omega(\bG)(\xi)$,
and satisfies
\begin{align*}
\min\Big\{
{\rm dist}\big(z,\mathfrak{s}(g)\big), ~
{\rm dist}\Big(z,\df\big(\hH[u]^{\DD_{W,L}(\bG)}_\Omega(\xi)\big)\Big), ~
{\rm dist}\Big(z,\df\big(\hH[u]^\Omega(\xi)\big)\Big),~
{\rm dist}\big(z, \{0\}\big) \Big\} \geq \frac{\delta}{2}    
\end{align*}
for any $z\in\Cc_{W,L}$, where $\df(\cdot)$ represents the spectrum set of an operator.

Using the decay of $\Big| \hat{W}_{j}[u_j](\bG) \Big|$ in Lemma \ref{lemma:decay of W}, we have for $\bG \neq \bG'$,
\begin{equation*}
    |\hH[u](\xi)_{\bG,\bG'}| \le C e^{- \frac{\alpha}{2} |\bG-\bG'|_+ -\tilde\gamma |\bG-\bG'|_-}.
\end{equation*}
Then, by using a Combes–Thomas type estimate like \eqref{decay:resol}, we have that there exist constants $c >0$ and $\Dres = \min\{\tilde\gamma, \delta\}$ such that for any $z \in \Cc_{W,L}$,
\begin{equation*}
    \big[\resol_z\big(\hH[u]^\Omega(\xi)\big) \big]_{\bG,\bG'}  \le C \frac{1}{\delta} e^{-c \delta|\bG-\bG'|_+ - c\Dres |\bG-\bG'|_-}.
\end{equation*}
This implies that,  for $\bG' \in \DD_{W/2,L/2}(\bG)$,
\begin{align*}
    &\Big|\big[\resol_z\big(\hH[u]^\Omega(\xi)\big)\big]_{\bG,\bG'} - \big[\resol_z\big(\hH[u]^{\DD_{W,L}(\bG)}_\Omega(\xi)\big) \big]_{\bG,\bG} \Big| 
    \\[1ex] \nonumber 
    \le & \sum_{\bG'',\bG''' \in \Omega }\Big| \big[ \resol_z \big(\hH[u]^{\DD_{W,L}(\bG)}_\Omega(\xi)\big) \big]_{\bG,\bG''} \big[\hH[u]^\Omega - \hH[u]^{\DD_{W,L}(\bG)}_\Omega \big]_{\bG'',\bG'''}  \big[ \resol_z\big(\hH[u]^\Omega(\xi)\big) \big]_{\bG''',\bG'} \Big|
    \\[1ex] \nonumber
    \leq & C\delta^{-2} \left( \sum_{\bG''\in \Omega \setminus \DD_{W,L}(\bG)}\sum_{\bG'''} + \sum_{\bG''\in \DD_{W,L}(\bG)}\sum_{\bG'''\in\Omega \setminus \DD_{W,L}(\bG)} \right) e^{-c\delta |\bG-\bG''|_+ -\Dres|\bG-\bG''|_-}
     \\[1ex] \nonumber
    & \hspace{12em}   \cdot e^{-\frac{\alpha}{2} |\bG''-\bG'''|_+ -\tilde\gamma |\bG''-\bG'''|_-} \cdot e^{-c\delta |\bG'''-\bG'|_+ -c\Dres|\bG'''-\bG'|_-}
    \\[1ex] \nonumber
    \leq &  C\delta^{-2}\left( \sum_{\bG''\in\Omega \setminus \DD_{W,L}(\bG)}\sum_{\bG'''} + \sum_{\bG''\in \DD_{W,L}(\bG)}\sum_{\bG'''\in\Omega \setminus \DD_{W,L}(\bG)} \right) 
    \left( e^{-c \delta |\bG-\bG''|_+ }
      \cdot e^{-\frac{\alpha}{2} |\bG''-\bG'''|_+}
       \right.
    \\[1ex] \nonumber
    &\hspace{8em}\left.
      \cdot e^{-c \delta|\bG'''-\bG'|_+} + e^{ -c\Dres |\bG-\bG''|_-}
        \cdot e^{ -\tilde\gamma |\bG''-\bG'''|_-}
        \cdot e^{ - c\Dres |\bG'''-\bG'|_-}
    \right)
    \\[1ex] \nonumber
    \leq & C \left(e^{-c \delta W}+e^{-c \Dres L} \right). 
\end{align*}
This combines with the contour integral representation yields
\begin{align*}
& \left|\lodoss{\hH[u]^\Omega(\xi)}{\bG}{\bG'} -\lodoss{\hH[u]^{\DD_{W,L}(\bG)}_\Omega(\xi)}{\bG}{\bG'} \right| 
\\[1ex]
= & \left| \frac{1}{2\pi \im}\oint_{\Cc_{W,L}} g(z) \bigg( \big(z-\hH[u]^\Omega(\xi)\big)^{-1} - \big(z-\hH[u]^{\DD_{W,L}(\bG)}_{\Omega}(\xi)\big)^{-1} \bigg)_{\bG,\bG'} \dd z \right|
\\[1ex]
\leq & C  \left(e^{-c \delta W}+e^{-c\Dres  L} \right) \oint_{\Cc_{W,L}} \big|g(z)\big| \dd z .
\end{align*}
Using the exponential decay of $g$ (indicated in the definition \eqref{deftestfunctiong}), we have that, as $W,~L\rightarrow\infty$, $\oint_{\Cc_{W,L}} \big|g(z)\big| \dd z$ is uniformly bounded by some constant depending on $\zeta$. Note that the above estimate holds for arbitrary $\Omega \supset \DD_{W,L}(\bG)$.
Therefore the limit $\lodoss{\hH^{\DD_{W,L}(\bG)}(\xi)}{\bG}{\bG'}$ exists, and it holds that
\begin{eqnarray}
\label{decay-WL-delta}
\left| g\Big(\hH[u]^{\DD_{W,L}(\bG)}(\xi) \Big)_{\bG,\bG'} - g\Big(\hH[u](\xi)\Big)_{\bG,\bG'} \right| \le C \big( e^{-c\delta W} + e^{-c\Dres L} \big) .
\end{eqnarray}
Furthermore, by invoking Lemma~B.2 of~\cite{wang2025convergence}
with $g \in \setg$, the convergence rate with respect to $W$ in \eqref{decay-WL-delta} can be further refined as
\begin{eqnarray}
\label{decay-WL-delta-zeta}
\left| g\Big(\hH[u]^{\DD_{W,L}(\bG)}(\xi) \Big)_{\bG,\bG} - g\Big(\hH[u](\xi)\Big)_{\bG,\bG} \right| \le C \big( e^{-c\zeta W} + e^{-c\Dres L} \big).
\end{eqnarray}
Combining the estimate \eqref{decay-xi-G1-G2} with \eqref{decay-WL-delta-zeta} derives
\begin{align}
\label{decay:xi-WL}
    \left| g\Big(\hH[u]^{\DD_{W,L}(\bG)}(\xi) \Big)_{\bG,\bG} - g\Big(\hH[u](\xi)\Big)_{\bG,\bG} \right| \le C e^{-c \zeta |\xi|} \big( e^{-c\zeta W} + e^{-c\Dres L} \big).
\end{align}
We then have from  \eqref{energy}, \eqref{energy:discrete} and the estimate \eqref{decay:xi-WL} with $\bG=\bzero$ that
\begin{align*}
    \Big| E^{W,L}[u]-  E[u] \Big|
    \le C \int_{\R^d} e^{-c \zeta |\xi|} \dd \xi \Big( e^{-c\zeta W} + e^{-c \Dres L} \Big)
   \le C  \Big( e^{-c\zeta W} + e^{-c \Dres L} \Big),
\end{align*}
which completes the proof. 
\end{proof}

The estimates for the first and second variations are derived by differentiating the same contour-integral representation with respect to the displacement variables. Comparing the integrand of $\partial_{\hat u_Q} E[u]$ with that of $E[u]$, the difference arises from the term $\partial_{\hat u_Q} \hH[u]$ defined in \eqref{form:dH}. This term inherits the analytic regularity of both $\hat{v}$ and $\hat{b}$, and the frequency shifts $Q$ do not alter the associated asymptotic decay rate in Fourier space. By employing an argument analogous to those used for \eqref{decay:Fk-Gi-Gj} and \eqref{Fk_R} in Lemma \ref{lemma:decay:Fk}, along with \eqref{decay:E} in Lemma \ref{lemma:LDos}, it immediately follows that 
\begin{align}
\label{decay_WL:F}
    \big|\partial_{\hat u_{Q}} E^{W,L}[u] - \partial_{\hat u_{Q}} E[u] \big| \le C \big( e^{-c\zeta W} + e^{-c \Dres \big|L-2|Q|\big|} \big).
\end{align}

For $Q, Q \in \RL^*_{\Fj}$,The second derivatives with respect to $\hat u_{Q}$ and $\hat u_{Q'}$ are given by
\begin{equation*}
    \partial^2_{\hat u_{Q},\hat u_{Q'}} \hH[u]_{\bG+\bG',\bG'} = \partial^2_{\hat u_{Q},\hat u_{Q'}} \hat{W}_{j}[u_j](\bG) = -(w)^2 \frac{\hat v(w)}{|\Gamma_j|} \cdot \hat b[w, u_j](G_{\Fj} - Q - Q').
\end{equation*}
where $w=G_1+G_2$. An analogous derivation yields
\begin{align}
\label{decay_WL:DF}
    |\partial^2_{\hat u_{Q},\hat u_{Q'}} E^{W,L}[u] - \partial^2_{\hat u_{Q},\hat u_{Q'}} E[u]| \le C \Big( e^{-c\zeta W} + e^{-c \Dres \big| L - 2|Q+Q'| \big|} \Big).
\end{align}

The estimate of the residual of the numerical scheme also requires the decay of the gradient of $E[u]$ in frequency space. The following lemma provides the exponential decay bound.
\begin{lemma}
\label{lemma:F_decay_u}
Let $u\in\X^{\gamma}$ and $g\in \setg$. There exist positive constants $C$ dependent on $\zeta,\delta, \gamma$ and $c$ independent of $\gamma,\delta$ such that
\begin{align}
\label{decay:Fk}
     \left|\partial_{\hat u_{Q}} E[u] \right| \le C e^{-c \Dres |Q| }.
\end{align}
\end{lemma}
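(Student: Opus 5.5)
The plan is to combine the Hellmann--Feynman representation of Theorem \ref{theorem:force} with the pointwise bound \eqref{Fk:converge_G_xi} of Lemma \ref{lemma:decay:Fk}. By Theorem \ref{theorem:force},
\begin{align*}
\partial_{\hat u_{Q}} E[u] = \int_{\R^d} \Big[ g'\big(\hH[u](\xi)\big)\, \partial_{\hat u_{Q}} \hH[u] \Big]_{\bzero,\bzero} \dd \xi ,
\end{align*}
and by the cyclicity argument in its proof this equals $\int_{\R^d} \Fk(\xi)_{\bzero,\bzero}\dd\xi$. The estimate \eqref{Fk:converge_G_xi} then gives $|\Fk(\xi)_{\bzero,\bzero}| \le C e^{-c\Dres|\xi| - c\Dres|Q|}$. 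Integrating over $\xi\in\R^d$ yields
\begin{align*}
\big|\partial_{\hat u_{Q}} E[u]\big| \le C e^{-c\Dres|Q|}\int_{\R^d} e^{-c\Dres|\xi|}\dd\xi \le C (c\Dres)^{-d} e^{-c\Dres|Q|} ,
\end{align*}
which is \eqref{decay:Fk}. The constant $C$ depends on $\zeta$, $\delta$ and $\gamma$, the last through $\tilde\gamma$ and $\Dres$.

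The substance of the argument is verifying that \eqref{Fk:converge_G_xi} really carries the $Q$-decay. I would therefore recheck the key mechanism. Take $Q\in\RL_1^*$ without loss of generality. By \eqref{form:dH} and \eqref{decay:Hk}, the matrix $\partial_{\hat u_Q}\hH[u]$ is concentrated near the shifted diagonal $\bG-\bG' \approx (Q,-Q)$, with decay $e^{-\tilde\gamma|\Delta G_1-\Delta G_2-2Q|}$. Its $(\bzero,\bzero)$ entry sandwiched between two resolvents is estimated in \eqref{decay:Fk-Gi-Gj}. At $\bG=\bG'=\bzero$ this gives
\begin{align*}
e^{-c\Dres|(Q,-Q)|} = e^{-c\Dres\sqrt{2}|Q|},
\end{align*}
uniformly for $z$ on the contour, with resolvent constants of order $\delta^{-1}$. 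The three decay rates combine through the triangle inequality for the seminorms $|\cdot|_\pm$. The only point to watch is that when $\Dres$ differs from $\tilde\gamma$, the convolution of the exponentials loses a factor; this is absorbed by reducing $c$.

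The main obstacle is uniformity in $\xi$. The contour must enclose spectrum near $\tfrac12|\xi|^2$, so the $Q$-decay has to be combined with the $\xi$-decay coming from $g$, using the truncation $\Omega_{\bar R}$ with $\bar R=|\xi|/2$. This combination was already carried out in the proof of Lemma \ref{lemma:decay:Fk}, via the case split $\bar R \gtrless 2\sqrt2|Q|$ and the inequality $\min(x,y)\le\sqrt{xy}$, so I would invoke it directly rather than redo it.

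One caveat concerns the form of $\partial_{\hat u_Q}\hH[u]$. It carries a prefactor $-\im w\,\hat v(w)$ with $w=\Delta G_1+\Delta G_2$, and the $|w|$ factor is harmless: it is absorbed by $e^{-\alpha|w|}$ at the cost of replacing $\alpha$ by $\alpha/2$. Finally, since $u$ is real-valued, $\hat u_{-Q}=\overline{\hat u_Q}$, and derivatives in either the real or imaginary parts obey the same bound, so the estimate holds in every sense in which the gradient is taken.
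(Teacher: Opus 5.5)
Your proof is correct and is essentially the paper's. Both write $\partial_{\hat u_Q}E[u]=\int_{\R^d}\Fk(\xi)_{\bzero,\bzero}\dd\xi$, insert the pointwise bound \eqref{Fk:converge_G_xi} from Lemma \ref{lemma:decay:Fk}, and absorb $\int_{\R^d}e^{-c\Dres|\xi|}\dd\xi$ into the constant. The only difference is that you reach $\int\Fk(\xi)_{\bzero,\bzero}\dd\xi$ by passing through Theorem \ref{theorem:force} and then undoing its cyclicity step. The paper reads it off directly from the differentiated contour representation \eqref{form:F}, so that detour, and your re-check of the $Q$-decay mechanism, are unnecessary.
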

\begin{proof}
By applying the estimate \eqref{Fk:converge_G_xi} from Lemma \ref{lemma:decay:Fk}, we obtain:
\begin{align*}
    \left|\partial_{\hat u_{Q}} E[u] \right| \le \int_{\R^d} C e^{-c|\xi|-c \Dres |Q|} \dd \xi \le C e^{-c \Dres |Q|} \int_{\R^d} e^{-c\Dres|\xi|} \dd \xi.
\end{align*}
Since the term $e^{-c\Dres|\xi|}$ decays exponentially with respect to $\xi$, its integral over $\R^d$ is bounded and can be absorbed into the generic constant $C$. This directly yields the desired estimate \eqref{decay:Fk}, completing the proof.
\end{proof}
Since the truncated operator $\partial_{\hat u_{Q}} E^{W,L}[u]$ naturally preserves the exponential decay established in Lemma \ref{lemma:F_decay_u}, their difference is also bounded by $\mathcal{O}(e^{-c \Dres |Q|})$. By performing a standard exponential interpolation between this intrinsic decay bound and the truncation error estimate in \eqref{decay_WL:F}, and utilizing the basic inequality $\big|L-2|Q|\big| \ge L - 2|Q|$ to absorb the extra $|Q|$ growth into the generic constant $c$, we deduce the refined difference bound:
\begin{align}
\label{decay_WL_Q:F}
    \big|\partial_{\hat u_{Q}} E^{W,L}[u] - \partial_{\hat u_{Q}} E[u] \big| \le C e^{-c \Dres |Q| } \big( e^{-c\zeta W} + e^{-c \Dres L} \big).
\end{align}

Beyond the residual analysis, establishing coercivity relies on the decay estimates for the second variations with respect to displacements. The next lemma establishes a rigorous decay bound for the Hessian, thereby confirming its boundedness as an operator.
\begin{lemma}
\label{lemma:DF_decay_U}
Let $u\in\X^{\gamma}$ and $g \in \setg$. Under the assumptions of Theorem \ref{theorem:u:convergence}, there exist positive constants $C$ dependent on $\zeta,\delta,\gamma$ and $c$ independent of $\gamma,\delta$ such that
\begin{align}
\label{decay_u:DF}
\left|\partial^2_{\hat u_{Q},\hat u_{Q'}} E[u]
\right|
\le C  e^{-c\Dres |Q+Q'|} .
\end{align}
Thus, $ \nabla^2_{u} E[u] := \{\partial^2_{\hat u_{Q},\hat u_{Q'}} E[u] \}_{Q,Q' \in (\RL^*_{1} \cup \RL^*_{2})\setminus \{0\} } $ is a bounded operator on $\ell^2$ with its operator norm given by
\begin{align}
    \Vert \nabla^2_{u} E[u] \Vert= \sup_{\nu \in \ell^2 \setminus \{0\}} \frac{|\nabla^2_{u}E[u]\cdot \nu |}{|v|}. 
\end{align}
\end{lemma}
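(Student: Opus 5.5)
We will argue as in the proofs of Lemma \ref{lemma:decay:Fk} and Lemma \ref{lemma:F_decay_u}, now at second order. We start from the contour representation of the energy and differentiate twice in the Fourier coefficients $\hat u_Q,\hat u_{Q'}$. This gives, for fixed $\xi$, the integrand
\begin{align*}
\frac{1}{2\pi\im}\oint_{\Cc} g(z)\Big[\resol_z\,\partial_{Q}\hH\,\resol_z\,\partial_{Q'}\hH\,\resol_z+\resol_z\,\partial_{Q'}\hH\,\resol_z\,\partial_{Q}\hH\,\resol_z+\resol_z\,\partial^2_{Q,Q'}\hH\,\resol_z\Big]_{\bzero,\bzero}\dd z,
\end{align*}
where $\resol_z=\resol_z(\hH[u](\xi))$ and $\partial_Q=\partial_{\hat u_Q}$. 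The justification for differentiating under the $\xi$-integral and the contour integral is the same dominated-convergence argument used for \eqref{form:F}. Once we have an integrable bound in $\xi$, Theorem \ref{theo:TDL}-type limits are not needed again.

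\textbf{Step 1: decay of the two types of perturbation matrices.} From \eqref{form:dH} and \eqref{decay:Hk}, $|(\partial_Q\hH)_{\bG,\bG'}|\le Ce^{-\frac{\alpha}{2}|\Delta\bG|_+-\tilde\gamma|\Delta\bG-(Q,-Q)|_-}$. From the displayed formula for $\partial^2_{Q,Q'}\hat W_j$, the extra factor $w^2$ is absorbed into $e^{-\frac{\alpha}{4}|w|}$. Because $\hat b$ is evaluated at the shifted frequency $G_{\Fj}-Q-Q'$, the proof of Lemma \ref{lemma:decay of W} gives the analogous bound with the shift $(Q+Q',-(Q+Q'))$. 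If $Q$ and $Q'$ belong to different layers, $\partial^2_{Q,Q'}\hH=0$, since each $\hat W_j$ depends only on $u_j$.

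\textbf{Step 2: chain the off-diagonal decay.} We combine the Combes--Thomas bound \eqref{decay:resol}, $|(\resol_z)_{\bG,\bG'}|\le C\delta^{-1}e^{-c\Dres|\bG-\bG'|}$, with Step 1. We then sum the convolution chains exactly as in \eqref{decay:Fk-Gi-Gj}, using the triangle inequality for the seminorms $|\cdot|_\pm$. This yields the following bounds on the $(\bzero,\bzero)$ entry:
\begin{itemize}
\item the second-order term $\resol\,\partial^2_{Q,Q'}\hH\,\resol$ is bounded by $Ce^{-c\Dres|(Q+Q',-(Q+Q'))|}\le Ce^{-c\Dres|Q+Q'|}$;
\item each term $\resol\,\partial_Q\hH\,\resol\,\partial_{Q'}\hH\,\resol$ is bounded by $Ce^{-c\Dres|(Q,-Q)+(Q',-Q')|}$.
\end{itemize}
When $Q$ and $Q'$ lie in different layers, the composite shift is $(Q,-Q)+(-Q',Q')$ up to the orientation convention, and its $|\cdot|$ norm still dominates $c|Q+Q'|$ after interpreting $Q+Q'$ in $\R^d$. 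For $\xi$-decay, we repeat the Gershgorin/contour argument \eqref{proof:est:Rez}--\eqref{proof-decay-dgH-xi} and the splitting $\bar R\gtrless 2\sqrt2|Q+Q'|$ leading to \eqref{diff_Fk_R}. This gives an integrand bound $Ce^{-c\Dres|\xi|-c\Dres|Q+Q'|}$, and integrating in $\xi$ proves \eqref{decay_u:DF}.

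\textbf{Step 3: boundedness on $\ell^2$.} This follows from Schur's test. For fixed $Q$, the sum $\sum_{Q'}e^{-c\Dres|Q+Q'|}$ runs over the lattice $\RL_1^*\cup\RL_2^*$ translated by $Q$. It is therefore bounded uniformly in $Q$ by a constant depending only on $\Dres$ and the lattice densities, and the same holds with the roles of $Q$ and $Q'$ exchanged. Hence $\|\nabla^2_uE[u]\|\le C$.

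The main obstacle is Step 2 with the triple resolvent product. We must verify that the two shifts compose additively in the $|\cdot|_-$ direction, rather than producing only $e^{-c\min(|Q|,|Q'|)}$. The decay must hold in $|Q+Q'|$ and not in $|Q|+|Q'|$, because the $(\bzero,\bzero)$ entry only detects the net momentum transfer. Our first approach will be to write the chain sum as a single convolution and apply the elementary inequality $\sum_{\bG''}e^{-a|\bG-\bG''|-a|\bG''-\bG'|}\le C_ae^{-\frac a2|\bG-\bG'|}$ twice.
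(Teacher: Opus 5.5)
For pairs $Q,Q'$ in the same $\RL_j^*$, your argument is the paper's. You differentiate the contour representation twice, bound the three $(\bzero,\bzero)$ terms by chaining \eqref{decay:resol} with the shifted decay of $\partial_Q\hH$ and $\partial^2_{Q,Q'}\hH$, obtain $\xi$-decay as in Lemma \ref{lemma:decay:Fk}, integrate, and close with Schur's test. Your explicit convolution/triangle-inequality step along the chain is exactly what the paper's ``analogous argument'' needs.

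The step that fails is the mixed-layer case. Take $Q\in\RL_1^*$ a $u_2$-mode and $Q'\in\RL_2^*$ a $u_1$-mode. You correctly identify the composite shift $(Q,-Q)+(-Q',Q')=(Q-Q',\,Q'-Q)$, consistent with \eqref{form:W}--\eqref{form:b}. But its norm is $\sqrt2|Q-Q'|$, which does not dominate $c|Q+Q'|$. For every $Q'\in\RL_2^*$ there is $Q\in\RL_1^*$ with $|Q-Q'|\le{\rm diam}(\Gamma_1^*)$, while $|Q+Q'|\approx 2|Q'|$ can be arbitrarily large.

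This is not an artifact of the estimate. In real space, the $u_2$-mode $Q$ modulates layer 2 with wavevector $\equiv Q-Q'$ mod $\RL_2^*$. The $u_1$-mode $Q'$ modulates layer 1 with wavevector $\equiv Q'-Q$ mod $\RL_1^*$. Such opposite long-wavelength modulations generically couple at order one. So \eqref{decay_u:DF} cannot be proved for mixed pairs; the correct bound there is $Ce^{-c\Dres|Q-Q'|}$.

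The paper never confronts this, because it assumes ``without loss of generality'' that $Q,Q'$ lie in the same $\RL_j^*$, which is not in fact without loss. The damage is limited. In the mixed block, $\partial^2_{Q,Q'}\hH=0$, as you note, and the two product terms obey the $|Q-Q'|$ bound by your same chain argument. Schur's test only needs $\sup_Q\sum_{Q'}e^{-c\Dres|Q\pm Q'|}<\infty$, so the $\ell^2$-boundedness of $\nabla_u^2E[u]$ survives. You should state the pointwise decay with a layer-dependent sign ($|Q+Q'|$ within a layer, $|Q-Q'|$ across layers) rather than assert that the cross-layer shift controls $|Q+Q'|$.
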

\begin{proof}
To evaluate the Hessian operator $\nabla^2 E[u]$, we utilize the contour integral representation of the energy functional. Without loss of generality, we may assume that $Q, Q'$ belong to the same $\RL^*_{j}$. By differentiating the gradient formula \eqref{form:F}, we obtain
\begin{align*}
     \partial^2_{\hat u_{Q},\hat u_{Q'}} E[u] 
     =& \partial_{\hat u_{Q'}} \int_{\R^d} \frac{1}{2\pi \im} \oint_{\mathscr{C}} g(z) \big[ \resol_z\big(\hH[u](\xi)\big) \partial_{\hat u_{Q}} \hH[u]\resol_z\big(\hH[u](\xi)\big)\big]_{\bzero,\bzero} \dd z \dd \xi\\
     =& \int_{\R^d} \frac{1}{2\pi \im} \oint_{\mathscr{C}} g(z) \bigg( \Big[  \resol_z\big(\hH[u](\xi)\big) \partial^2_{\hat u_{Q},\hat u_{Q'}} \hH[u]\resol_z\big(\hH[u](\xi)\big)\Big]_{\bzero,\bzero} \\
     &+\Big[ \resol_z\big(\hH[u](\xi)\big) \partial_{\hat u_{Q}}  \hH[u]\resol_z\big(\hH[u](\xi)\big) \partial_{\hat u_{Q'}} \hH[u]\resol_z\big(\hH[u](\xi)\big)\Big]_{\bzero,\bzero}\\
     &+\Big[ \resol_z\big(\hH[u](\xi)\big) \partial_{\hat u_{Q}}  \hH[u]\resol_z\big(\hH[u](\xi)\big) \partial_{\hat u_{Q'}} \hH[u]\resol_z\big(\hH[u](\xi)\big)\Big]_{\bzero,\bzero} \bigg)
     \dd z \dd \xi,\\
     =:& \int_{\R^d} \DFkq(\xi) \dd \xi.
\end{align*}
By an argument analogous to the proof of Lemma \ref{lemma:decay:Fk}, we obtain 
\begin{align*}
    |\DFkq(\xi)_{\bzero,\bzero}| \le C e^{- c\Dres|\xi| - c \Dres |Q+Q'| }.
\end{align*}
Consequently, it follows that
\begin{align*}
    \left|\partial_{\hat u_{Q}, \hat u_{Q'}} E[u] \right| \le \int_{\R^d} C e^{-c\delta_\gamma|\xi|-c \Dres |Q+Q'|} \dd \xi \le C e^{-c \Dres |Q+Q'|} .
\end{align*}
Here, the exponential decay of $e^{-c\Dres|\xi|}$ with respect to $\xi$ guarantees a bounded integral over $\R^d$, which is naturally absorbed into the generic constant $C$.

Specifically, the exponential decay in \eqref{decay_u:DF} guarantees 
\begin{align*}
    \max_{Q} \big(\sum_{Q'} C  |\partial^2_{\hat u_{Q},\hat u_{Q'}} E[u]|\big) \le \max_{Q} \big(\sum_{Q'} C e^{-c\Dres |Q+Q'|} \big) \le C,\\
    \max_{Q'} \big(\sum_{Q} C  |\partial^2_{\hat u_{Q},\hat u_{Q'}} E[u]|\big) \le \max_{Q'} \big(\sum_{Q} C e^{-c\Dres |Q+Q'|} \big) \le C,
\end{align*}
which together with Schur's test (see, e.g., \cite[Theorem 6.18]{folland1999real}) implies that
\begin{equation*}
|\nabla^2_{u}E[u]\cdot v | \le C | v | \qquad\forall~v \in \ell^2 .
\end{equation*}
This completes the proof.
\end{proof}

By a similar estimation combined with Lemma \ref{lemma:DF_decay_U} and the estimate \eqref{decay_WL:DF}, we obtain the corresponding difference bound for the second derivatives:
\begin{align}
\label{decay_WL_Q:DF}
    \big|\partial^2_{\hat u_{Q},\hat u_{Q'}} E^{W,L}[u] - \partial^2_{\hat u_{Q},\hat u_{Q'}} E[u] \big| \le C  e^{-c\Dres |Q+Q'|} \Big( e^{-c\zeta W} + e^{-c \Dres L} \Big).
\end{align}
The application of Lemma \ref{lem:ift} requires the Hessian to be H\"{o}lder continuous, which we formalize in the following lemma.
\begin{lemma}[H\"{o}lder continuity of the gradient and Hessian]
\label{lemma:continuity}
Let $u, u' \in \X^\gamma$ and $g \in \setg$. For any $\bG=(G_1,G_2)\in\RL_1^*\times\RL_2^*$ and $j=1,2$, there exists a constant $C>0$ independent of $u$ and $u'$, such that
\begin{equation}
\label{continue:F}
    |\nabla_u E[u]-\nabla_u E[u']| \le C|u-u'|^{\frac{1}{2}}
\end{equation}
and
\begin{equation}
\label{continue:DF}
    \|\nabla_u^2 E[u]-\nabla_u^2 E[u']\| \le C|u-u'|^{\frac{1}{2}},
\end{equation}
where $|\cdot|$ denotes the $L^2(\Gamma_j)$-norm.
\end{lemma}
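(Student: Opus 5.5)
We propose to derive both estimates from the contour-integral representations already used in the proofs of Theorem \ref{theorem:force} and Lemma \ref{lemma:DF_decay_U}. The dependence on $u$ enters only through $\hH[u](\xi)$ and its derivatives $\partial_{\hat u_Q}\hH[u]$, $\partial^2_{\hat u_Q,\hat u_{Q'}}\hH[u]$. By \eqref{form:W}, \eqref{form:b} and \eqref{form:dH}, all of these are built from the displacement factor $\hat b[w,u_j](G)$ and its $w$-polynomial multiples.

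\textbf{Step 1 (matrix-element bounds).} We first estimate how $\hat b$ changes with $u$. Since $|e^{-iw\cdot a}-e^{-iw\cdot b}|\le \min\{2,|w||a-b|\}\le 2^{1/2}|w|^{1/2}|a-b|^{1/2}$, Cauchy--Schwarz on the torus gives
\[
|\hat b[w,u_j](G)-\hat b[w,u_j'](G)|\le C|w|^{1/2}\,|u_j-u_j'|^{1/2}.
\]
This interpolation between the trivial bound and the Lipschitz bound is exactly where the exponent $\tfrac12$ originates.

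We want this bound with decay in $|G|$ retained. For that we interpolate it with the decay estimate \eqref{decay:b}, which holds uniformly for $u,u'$ in a bounded set of $\X^\gamma$ (so $M_j(\rho)$ is uniform). Taking a geometric mean, we expect
\[
|\hat W_j[u_j](\bG)-\hat W_j[u_j'](\bG)|\le C e^{-\frac{\alpha}{4}|\bG|_+-\frac{\tilde\gamma}{2}|\bG|_-}|u-u'|^{1/2},
\]
and analogous bounds for $\partial_{\hat u_Q}\hH$ and $\partial^2\hH$. In these, the polynomial factors $|w|,|w|^2$ are absorbed by $e^{-\alpha|w|/4}$, and the decay is centred at $2Q$ or $2(Q+Q')$ as in \eqref{decay:Hk}.

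\textbf{Step 2 (resolvent and integrand differences).} The second resolvent identity gives
\[
\resol_z(\hH[u])-\resol_z(\hH[u'])=\resol_z(\hH[u])(\hH[u]-\hH[u'])\resol_z(\hH[u']).
\]
Combining the Combes--Thomas estimate \eqref{decay:resol}, applied with constants uniform in $u$, with Step 1, the resolvent difference has entries bounded by $C\delta^{-2}e^{-c\Dres|\bG-\bG'|}|u-u'|^{1/2}$.

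The integrands of \eqref{form:F} and of the Hessian formula are products of at most three resolvents and at most two derivative matrices. We telescope each product, replacing one factor at a time. Each telescoped term is a convolution of exponentially localized kernels, estimated as in \eqref{decay:Fk-Gi-Gj}. This yields a bound $Ce^{-c\Dres|Q|}|u-u'|^{1/2}$ for the $(\bzero,\bzero)$ entry of the gradient integrand, and $Ce^{-c\Dres|Q+Q'|}|u-u'|^{1/2}$ for the Hessian integrand.

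\textbf{Step 3 (integration in $z$ and $\xi$).} We choose a common contour $\Cc$ for $u$ and $u'$, which is possible since the Gershgorin bound with $C_W$ is uniform. The decay of $g$ on $S_\delta$ controls $\oint|g|$. Repeating the argument of Lemma \ref{lemma:decay:Fk} with $\bar R=|\xi|/2$, the difference integrand also decays like $e^{-c\Dres|\xi|}$; alternatively we interpolate the $|u-u'|^{1/2}$ bound with the separate $\xi$-decay of each term, at the cost of constants. Integrating over $\xi$ gives the coefficientwise bounds.

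We then pass to norms. Summing the squares of the gradient coefficients over $Q$ gives \eqref{continue:F}. For the Hessian, the kernel bound $Ce^{-c\Dres|Q+Q'|}|u-u'|^{1/2}$ combined with Schur's test, as in Lemma \ref{lemma:DF_decay_U}, gives \eqref{continue:DF}. The $L^2$ norm of $u-u'$ equals the $\ell^2$ norm of its Fourier coefficients by Parseval, and it dominates the sup-norm differences appearing in Step 1 after the Cauchy--Schwarz step.

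\textbf{Main obstacle.} The hard part is Step 1, keeping the decay while gaining $|u-u'|^{1/2}$. The contour-shift argument for \eqref{decay:b} needs $u-u'$ to be analytic in the strip, and it bounds the difference only in terms of a weighted norm of $u-u'$, not the $L^2$ norm. We would first try the interpolation (geometric mean) between the undecayed Hölder bound and the decayed uniform bound \eqref{decay:b}, which halves the decay rates but costs nothing else. If this fails for large $|w|$, we would instead use the factor $e^{-\alpha|w|}$ from $\hat v$ to absorb any growth in $|w|$.
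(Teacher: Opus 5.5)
Your argument follows the paper's route. A Lipschitz-type bound on $\hat b[w,u_j](G)-\hat b[w,u_j'](G)$ is interpolated with the uniform decay bound \eqref{decay:b} to get a H\"older-$\tfrac12$ estimate for $\hat W_j$ that keeps exponential decay. This is pushed through the contour-integral formulas to coefficientwise bounds $Ce^{-c\Dres|Q|}|u-u'|^{1/2}$ and $Ce^{-c\Dres|Q+Q'|}|u-u'|^{1/2}$, and one concludes by $\ell^2$ summation and Schur's test. Your Steps 2--3 (second resolvent identity, telescoping, a common contour) spell out what the paper only asserts when it says that $\hH[u]$ ``inherits'' the continuity of $\hat W_j$.

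There is, however, an exponent slip in Step 1: as written, it only yields $|u-u'|^{1/4}$. You first pass to the undecayed H\"older bound $C|w|^{1/2}|u-u'|^{1/2}$ via $\min\{2,|w||a-b|\}$. You then take its geometric mean with $|\hat b[w,u_j](G)|+|\hat b[w,u_j'](G)|\le 2e^{M|w|}e^{-\rho|G|}$. A geometric mean halves the power of $|u-u'|$ along with the decay rates, so it does not ``cost nothing else''. More generally, any weighted interpolation starting from the H\"older bound gives an exponent strictly below $\tfrac12$ once some decay is retained.

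The fix, which is what the paper does, is to interpolate the Lipschitz bound \eqref{bound:diff_L2}, $|\Delta\hat b|\le|\Gamma_j|^{-1/2}|w|\,|u-u'|$, directly against the decay bound. This gives
\[
|\Delta\hat b|\le C|w|^{1/2}e^{\frac12 M|w|}e^{-\frac{\rho}{2}|G|}|u-u'|^{1/2},
\]
after which $|w|^{1/2}e^{\frac12 M|w|}$ is absorbed by $|\hat v(w)|\le Ce^{-\alpha|w|}$ for $\rho$ small, as in Lemma~\ref{lemma:decay of W}.

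The same caution applies to the second option in your Step 3. Interpolating the $|u-u'|^{1/2}$ bound against the separate $\xi$-decay of each term costs a power of $|u-u'|$ (or, optimizing, a logarithmic factor), not just constants. Use your first option instead: rerun the $\bar R=\tfrac12|\xi|$ argument of Lemma~\ref{lemma:decay:Fk} on the telescoped difference with a common contour. Then the difference integrand itself carries both the factor $|u-u'|^{1/2}$ and $e^{-c\Dres|\xi|}$.
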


\begin{proof}
We first estimate the difference $\Delta \hat{b} := \hat b[w, u](G_{j}) - \hat b[w, u'](G_{j})$. Using the elementary inequality $|e^{ix} - e^{iy}| \le |x-y|$ for $x,y \in \R$, we have
\begin{equation}
\label{bound:diff_L1}
    |\Delta \hat{b}| \le \frac{1}{|\Gamma_{j}|} \int_{\Gamma_{j}} \Big| e^{-i w\cdot u(s)} - e^{-i w\cdot u'(s)} \Big| \dd s \le \frac{|w|}{|\Gamma_{j}|} \int_{\Gamma_{j}} |u(s) - u'(s)| \dd s.
\end{equation}
Applying the Cauchy-Schwarz inequality over the bounded domain $\Gamma_j$ yields
\begin{equation}
\label{bound:diff_L2}
    |\Delta \hat{b}| \le |\Gamma_{j}|^{-\frac{1}{2}} |w| |u - u'|.
\end{equation}
Interpolating \eqref{bound:diff_L2} with the decay estimate \eqref{decay:b}, we obtain:
\begin{equation}
\label{bound:interpolated_b}
    |\Delta \hat{b}| \le C |w|^{\frac{1}{2}} e^{\frac{1}{2} M_u |w|} e^{-\frac{1}{2}\gamma |G_{j}|} |u - u'|^{\frac{1}{2}}.
\end{equation}
Substituting \eqref{bound:interpolated_b} into \eqref{form:W} gives the continuity of the auxiliary potential $\hat{W}_{j}$
\begin{equation}
\label{continuity:W}
   \Big| \hat{W}_{j}[u](\bG) - \hat{W}_{j}[u'](\bG) \Big| \le C e^{-\frac{\alpha}{2}|G_1 + G_2| - \frac{\gamma}{2} |G_1 - G_2|} |u - u'|^{\frac{1}{2}}.
\end{equation}
Since $\hat{H}[u]$ inherits the identical continuity and decay properties from $\hat{W}_{j}[u]$, the regularities can be directly extended to the energy functional. Specifically, combining \eqref{continuity:W} with the estimate \eqref{decay:Fk} in Lemma \ref{lemma:F_decay_u}, we establish \eqref{continue:F}:
\begin{equation*}
    |\nabla_u E[u]-\nabla_u E[u']| \le C \big(\sum_Q e^{-2c \Dres |Q| }\big)^\frac{1}{2} |u-u'|^{\frac{1}{2}} \le C |u-u'|^{\frac{1}{2}}.
\end{equation*}
Analogously, applying the estimate \eqref{decay_u:DF} and the operator norm bound from Lemma \ref{lemma:DF_decay_U} yields \eqref{continue:DF}.
\end{proof}

\begin{proof}[Proof of Theorem \ref{theorem:u:convergence}]
To establish the convergence of the scattering-channel approximation, we introduce the truncation operator
\begin{align*}
    \Pi_{\ucut} ~:~ \nu =\{\nu_{Q}\}_{Q \in (\RL^*_1 \cup \RL^*_2)\setminus\{0\}} \rightarrow \nu=
     \left\{  
\begin{array}{lr}  
\nu_{Q},  \quad |Q| \le \ucut
\\[1ex]  
0,  \quad otherwise.
\end{array}
\right. 
\end{align*}
which projects onto the frequency modes $|Q| \le \ucut$.
We apply the Inverse Function Theorem (Lemma \ref{lem:ift}) by setting our reference point as $x_0 := \Pi_{\ucut} \bar u$. The continuous force operator is defined as $\F(u) := \nabla_u E[u]$, while the discrete truncated force operator is given by $\F^{W,L,\ucut}(u) := \nabla_u^\ucut E^{W,L}[u]$. The exact numerical solution satisfies $\F^{W,L,\ucut}(\bar u^{W,L,\ucut}) = 0$.

To estimate the initial residual $\eta$ at $x_0$, we use $\F(\bar u) = 0$. Applying the triangle inequality in conjunction with \eqref{decay_WL_Q:F}, \eqref{decay:Fk} in Lemma \ref{lemma:F_decay_u}, and the continuity property \eqref{continue:F} yields the bound
\begin{align}
\label{eta:combined}
    | \F^{W,L,\ucut}(x_0) | 
    &\le | \F^{W,L,\ucut}(x_0) - \F^\ucut(x_0) | + | \F^\ucut(x_0) - \F(x_0) | + | \F(x_0) - \F(\bar u) | \nonumber\\[1ex]
    &\le C\Big(e^{-c \zeta W} + e^{- c \Dres L}\Big) + \sum_{|Q|>\ucut} C e^{-c\Dres |Q|} + C | \Pi^\ucut\bar u - \bar u |^\frac{1}{2} \nonumber\\[1ex]
    &\le C\Big(e^{-c \zeta W} + e^{- c \Dres L} + e^{-c \Dres \ucut}\Big) =: \eta.
\end{align}

Next, we establish the uniform invertibility of the Jacobian at $x_0$. By Assumption \eqref{regularity_stability}, the Hessian $\nabla^2_u E[\bar u]$ is coercive with a constant $\vartheta > 0$. For any $v \in \X$ with its Fourier sequence $\hat v \in \ell^2(\RL^*_1 \cup \RL^*_2)$, the quadratic form of $(\nabla^\ucut_u)^2 E^{W,L} [x_0]$ is decomposed as
\begin{align*}
    \big\langle (\nabla^\ucut_u)^2 E^{W,L} [x_0] \hat v, \hat v \big\rangle 
    &= \big\langle \nabla_u^2 E[\bar u] \hat v, \hat v \big\rangle + \big\langle \big( (\nabla^\ucut_u)^2 E^{W,L}[x_0] - (\nabla^\ucut_u)^2 E[x_0] \big)\hat v, \hat v \big\rangle \\[1ex]
    &+ \big\langle \big( (\nabla^\ucut_u)^2 E[x_0] - \nabla_u^2 E[x_0] \big)\hat v, \hat v \big\rangle + \big\langle \big( \nabla_u^2 E[x_0] - \nabla_u^2 E[\bar u] \big) \hat v, \hat v \big\rangle \\[1ex]
    & \ge  \vartheta |\hat v|^2 - | (\nabla^\ucut_u)^2 E^{W,L}[x_0] - (\nabla^\ucut_u)^2 E[x_0] | |\hat v|^2 \\[1ex]
    &- \left| \langle \big( (\nabla^\ucut_u)^2 E[x_0] - \nabla_u^2 E[x_0] \big)\hat v, \hat v \rangle \right| - \left| \langle \big( \nabla_u^2 E[x_0] - \nabla_u^2 E[\bar u] \big) \hat v, \hat v \rangle \right|.
\end{align*}
The three perturbation terms correspond to the scattering-channel truncation error, the Fourier truncation error, and the discrepancy between $x_0$ and $\bar u$, respectively. Relying on the Hessian element convergence \eqref{decay_WL_Q:DF}, the decay estimate \eqref{decay_u:DF}, and the continuity \eqref{continue:F}, we can strictly bound each perturbation term by $\frac{\vartheta}{6} |\hat v|^2$ for sufficiently large $W$, $L$, and $\ucut$. Consequently, the strict coercivity of the discrete operator is preserved:
\begin{equation*}
    \langle (\nabla^\ucut_u)^2 E^{W,L}[x_0] \hat v, \hat v \rangle \ge \left( \vartheta - \frac{\vartheta}{6} - \frac{\vartheta}{6} - \frac{\vartheta}{6} \right) |\hat v|^2 = \frac{\vartheta}{2} |\hat v|^2.
\end{equation*}
This implies the Jacobian is uniformly invertible with the bound
\begin{equation}
\label{sigma:combined}
    |(\nabla^\ucut_u)^2 E^{W,L}(x_0)^{-1}| \le \frac{1}{2\vartheta} =:\sigma.
\end{equation}

Finally, given the established bounds \eqref{eta:combined} and \eqref{sigma:combined}, we observe that the residual $\eta$ decays exponentially with respect to $W, L$, and $\ucut$, whereas the norm of the inverse operator $\sigma$ remains bounded by a constant independent of the truncation parameters. Therefore, for sufficiently large parameters, the condition $C \sigma^{3/2}\eta^{1/2} \le \frac{1}{2}$ is trivially satisfied. An application of Lemma \ref{lem:ift} guarantees the existence of a unique exact numerical solution $\bar{u}^{W,L,\ucut}$ in the neighborhood of $x_0$. 

The total convergence error is thus concluded by
\begin{align*}
    | \bar u - \bar u^{W,L,\ucut} | &\le | \bar u - x_0 | + | x_0 - \bar{u}^{W,L,\ucut} | \le C e^{-c \Dres \ucut} + 2 \eta \sigma \\
    &\le \frac{C}{\vartheta} \Big( e^{-c \zeta W} + e^{-c \Dres L }+e^{-c \Dres \ucut}\Big).
\end{align*}
This completes the proof.
\end{proof}

\small
\bibliographystyle{plain} 
\bibliography{bib}

\end{document}